\documentclass[11pt]{article}

\usepackage[bibliosources=refs.bib]{pomegranate}
\usepackage{tikz}
\usetikzlibrary{arrows.meta,calc,decorations.pathreplacing,positioning}

\DeclareOperator{\pf}
\DeclareOperator{\Ent}
\DeclareOperator{\Var}
\DeclareOperator{\supp}
\DeclareOperator{\polylog}
\DeclareOperator{\diag}
\DeclareOperator{\Deg}
\DeclareOperator{\Star}
\newcommand{\eps}{\varepsilon}
\newcommand{\TV}{\mathrm{TV}}
\newcommand{\ET}{\mathrm{ET}}
\newcommand{\cN}{\mathcal N}
\newcommand{\cR}{\mathcal R}
\newcommand{\cE}{\mathcal E}
\newcommand{\cH}{\mathcal H}
\newcommand{\bits}{\{0,1\}}
\newcommand{\symdiff}{\triangle}
\NewDocumentCommand{\ip}{s m m}{%
  \IfBooleanTF{#1}{\dotprod*{#2,#3}}{\dotprod{#2,#3}}%
}
\crefname{conjecture}{conjecture}{conjectures}
\Crefname{conjecture}{Conjecture}{Conjectures}

\title{Sampling Directed Eulerian Tours in \(\widetilde O(m^{3/2})\) Time}
\author[1]{Nima Anari}
\affil[1]{Stanford University, \texttt{anari@stanford.edu}}
\date{}

\begin{document}

\maketitle

\begin{abstract}
We give a randomized algorithm that samples a nearly uniform Eulerian tour of a
directed Eulerian multigraph with \(m\) arcs in
\(\widetilde O(m^{3/2})\) time.  The guarantee is worst-case, applies to
arbitrary directed Eulerian multigraphs, and breaks the \(mn\)-type
arborescence-sampling barrier on sparse graphs.

The core case is a \(2\)-in/\(2\)-out graph.  We introduce a new local Markov
chain, the flip--repair walk: one step locally splits a tour into two circuits
and then chooses uniformly among the local flips that repair the state to one
tour.  We prove that this walk mixes in nearly linear many steps and implement
the walk using a dynamic chord data structure.  A pointwise degree-reduction
wrapper extends the sampler from this degree-two core to arbitrary degrees
while preserving the \(\widetilde O(m^{3/2})\) total running time.

The high-level algorithmic plan, the switching-network reduction, and the
dynamic data-structure argument were devised by the author.  The author
conjectured the mixing theorem underlying the analysis, and GPT 5.5 Pro
Extended produced its linear-algebra proof.  Codex assisted with manuscript
assembly and typesetting.
\end{abstract}

\section{Introduction}

Let \(G=(V,E)\) be a strongly connected directed multigraph with
\[
  \deg^+(v)=\deg^-(v)
  \qquad\text{for every }v\in V.
\]
Arcs are distinguishable, and loops are allowed; a loop contributes one incoming
and one outgoing incidence at its endpoint.  An Eulerian tour of \(G\) is a
cyclic ordering of the arcs in which consecutive arcs are head-to-tail incident
and every arc appears exactly once.  Eulerian tours are among the most classical
objects in graph theory.  In directed graphs, their enumeration is governed by
the BEST theorem of de Bruijn, van Aardenne-Ehrenfest, Smith, and Tutte
\cite{SmithTutte1941,deBruijnAardenneEhrenfest1951,Tutte1984}.  The theorem
expresses the number of tours as an arborescence count times a product of local
factorials, and it suggests a natural exact sampler: sample the arborescence
part and then sample the local orders.  This approach was used for sequence
shuffling by Kandel, Matias, Unger, and Winkler
\cite{KandelMatiasUngerWinkler1996} and sharpened using Wilson's
cycle-popping algorithm \cite{ProppWilson1998,JiangEtAl2008}.  Its running
time, however, is tied to the cost of sampling random arborescences.  On
Eulerian digraphs, the best general random-walk mixing bounds for this task
are on the \(mn\) scale \cite{BoczkowskiPeresSousi2018}, which is not nearly
linear on sparse graphs.

The goal of this paper is to break this running-time barrier, at least for
sparse graphs.  We do this by sampling tours directly rather than sampling an
arborescence and then completing it to a tour.  The state of the chain is always
a tour, and each transition changes only a small number of local successor
choices.

A convenient language for local successor choices is that of transition
systems.  A transition system chooses, at every vertex \(v\), a bijection from
the incoming arcs of \(v\) to the outgoing arcs of \(v\).  Such a choice induces
a permutation of the arc set \(E\): after an arc enters a vertex, the bijection
specifies the next arc.  The transition system is an Eulerian tour exactly when
this permutation has one cycle.  We write \(\ET(G)\) for the set of transition
systems whose induced permutation has one cycle; this is the tour convention
used throughout the paper.  An initial tour can be found in linear time by the
classical algorithm of Hierholzer \cite{Hierholzer1873}.

The basic obstruction is also the key move.  Changing one local successor
choice usually breaks a tour into two circuits.  The chain therefore makes one
such break, uses the resulting two-circuit state as an intermediate layer, and
then chooses uniformly among the local changes that glue the circuits back into
one tour.  We call this local transition the \emph{flip--repair} move.

Our main result is the following.

\begin{theorem}[Main theorem]\label{thm:main}
Let \(G\) be a strongly connected directed Eulerian multigraph with \(m\) arcs.
For every \(0<\eps<1\), there is a randomized algorithm which outputs a tour
whose distribution is within \(\eps\) total variation distance of the uniform
distribution on \(\ET(G)\).  Its running time is
\[
  O\!\left(
    m^{3/2}\polylog(m/\eps)
  \right).
\]
The algorithm uses \(O(m\polylog(m/\eps))\) space.
\end{theorem}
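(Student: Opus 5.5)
We assemble the main theorem from three components, following the plan in the abstract: a degree-reduction wrapper, a mixing bound for the flip--repair walk on \(2\)-in/\(2\)-out graphs, and a dynamic data structure that implements each step in \(\widetilde O(\sqrt m)\) time.

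\textbf{Step 1: degree reduction.} Replace each vertex \(v\) of degree \(d\) by a switching network of degree-two gadgets. A tour of the resulting \(2\)-in/\(2\)-out graph \(H\), which has \(O(m\log m)\) or \(O(m)\) arcs, then projects to a tour of \(G\). Naive splitting does not give a uniform projection, because different bijections at \(v\) have different numbers of preimages. So we want a \emph{pointwise} statement: every tour of \(G\) has the same number of lifts, or a computable weight that can be corrected by rejection with constant acceptance probability. The first attempt is a Beneš-type network in which every permutation of the \(d\) arcs is realized by the same number of switch settings. If exact uniformity fails, we instead sample the local orders at each vertex afresh. By the BEST theorem, conditional on the successor structure modulo local orders, these orders are uniform among those keeping the system one cycle. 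This reduces \(\TV\)-closeness for \(G\) to \(\TV\)-closeness for \(H\) with an \(O(1)\) loss in \(\eps\), so it suffices to prove the theorem for \(2\)-in/\(2\)-out graphs with \(m\) arcs.

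\textbf{Step 2: mixing.} In the \(2\)-in/\(2\)-out case, a transition system assigns one bit per vertex, and flipping a bit changes the number of cycles by \(\pm1\) or \(0\). The flip--repair walk flips a uniformly random vertex to get a two-circuit state. It then chooses uniformly among the vertices whose flip rejoins the two circuits; these are exactly the vertices visited by both circuits, the ``chords''. This walk is reversible with respect to uniform measure on \(\ET\), since it is a two-step walk through the bipartite graph between one-circuit and two-circuit states. We want a spectral gap of \(\widetilde\Omega(1/m)\), or a modified log-Sobolev constant of that order, so that \(O(m\log(|\ET|/\eps))=\widetilde O(m\log(1/\eps))\) steps suffice; note \(\log|\ET|\le m\).

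My first approach is to encode tours linear-algebraically. By the interlace/Cohn--Lempel formula, the number of cycles of a transition system is one plus the \(\mathbb F_2\)-corank of an interlace (adjacency-plus-diagonal) matrix. Tours then correspond to bases of a binary delta-matroid or to nonsingular principal submatrices. The flip--repair walk becomes a down--up walk on this structure. The gap could then come from a local-to-global, spectral-independence-type argument over the \(\mathbb F_2\) representation, bounding the link walks by a rank-one update computation. This is the step I expect to be the main obstacle: the walk is not a standard matroid basis-exchange walk, so log-concavity tools do not apply directly. Here I would rely on whatever mixing theorem the paper establishes for the linear-algebra structure.

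\textbf{Step 3: implementation.} Represent the current tour as a cyclic sequence of the \(2m\) vertex visits, one per in-arc at each vertex, in a balanced search tree supporting split and join in \(O(\log m)\) time. A flip at \(v\) cuts the tour at the two visits of \(v\) into two arcs of the cycle. The repair set is the set of vertices with one visit in each arc, i.e.\ chords crossing the cut. We need to sample a uniform crossing chord and count crossing chords under cut and rejoin operations.

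For this, use square-root decomposition. Partition the cyclic order into \(\sqrt m\) blocks, and maintain for each pair of blocks the number of chords between them. Recompute the block decomposition every \(\sqrt m\) steps at cost \(\widetilde O(m)\). Each flip--repair step reverses a segment, which permutes blocks and touches \(O(1)\) boundary blocks; updating the counts for those costs \(O(\sqrt m)\). Counting and sampling crossing chords reduces to summing the pair counts over a block interval prefix, which also takes \(\widetilde O(\sqrt m)\) with per-block prefix sums.

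\textbf{Total cost.} This gives \(\widetilde O(\sqrt m)\) amortized time per step and \(\widetilde O(m)\) steps, for a total of \(\widetilde O(m^{3/2}\polylog(1/\eps))\) time and \(\widetilde O(m)\) space. Finally, compose with Step 1 to obtain the theorem for arbitrary \(G\).
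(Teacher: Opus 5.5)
Your three-part architecture matches the paper's, but Step~1 fails, and it is exactly where the paper needs a new ingredient. No network of $r$ binary switches can realize every permutation in $S_d$ equally often once $d\ge3$, since that would force $d!\mid 2^r$. For the standard Bene\v{s} network on $d=2^k$ wires the counts are wildly uneven: the identity is realized $2^{(d/2)(\log_2 d-1)}$ times, against an average of $2^r/d!=2^{O(d)}$. A rejection correction would therefore pay a factor bounded away from $1$ at every vertex of degree at least three, and these factors compound to an exponentially small acceptance probability.

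The BEST fallback does not remove the bias either. Redrawing local orders given the last-exit arborescence $A$ produces a uniform tour only if the arborescence marginal of the projected tour is uniform. That marginal is proportional to $\sum_{T\in\phi^{-1}(A)}\prod_v\ell_v(\pi_v(T))$, and non-constant lift counts $\ell_v$ generally make this non-constant in $A$. Sampling $A$ uniformly by other means is the arborescence route whose $mn$-type cost the paper is trying to beat. So your claimed $O(1)$ loss in $\eps$ is unsupported.

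The missing idea is \emph{pointwise multiplicative} control of each local permutation law; total-variation control would not survive conditioning on the global one-cycle event. The paper gets it from random sparse lazy-transposition networks with a guard layer. A quenched Diaconis--Shahshahani second-moment computation shows that each Fourier coefficient of the random product of projections has expected squared Hilbert--Schmidt norm $d_\lambda\alpha_\lambda^R$. Markov's inequality then gives, with probability $1-\delta$, an induced law within $e^{\pm\eta}$ of $1/d!$ at every permutation. Hence every tour of $G$ has between $e^{-\eta_{\mathrm{tot}}}$ and $e^{\eta_{\mathrm{tot}}}$ times $\prod_v2^{r_v}/d_v!$ lifts, and choosing $\eta_v=\delta_v=\eps/(16m)$ keeps $M=O(m\polylog(m/\eps))$.

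Steps~2 and~3 also contain concrete errors. In Step~2, $O(m\log(\card{\ET}/\eps))$ is not $\widetilde O(m\log(1/\eps))$. The quantity $\log\card{\ET(H)}$ is typically of order $n$ (binary de Bruijn graphs have $2^{\Theta(n)}$ tours), so a gap of order $1/m$ alone only yields an $O(m^2)$ step bound and roughly $m^{5/2}$ time. The paper stresses that its sharp gap $2/n$ is insufficient for exactly this reason. The runtime needs the flat log-Sobolev inequality $\Ent_\mu(f^2)\le n(\log n+2)\cE_P(f,f)$, which pays only $\log\log(1/\mu_{\min})\le\log n$. The paper proves it via the Pfaffian square-root embedding, the excitation basis with $\mathsf P_A\preceq I-\mathsf N_A/n$, the flatness-driven incoherence $\abs{\ip{e_S}{\xi_T}}\le Z^{-1/2}$, Maassen--Uffink, and Rothaus centering. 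Your $\mathbb F_2$ spectral-independence sketch supplies none of this.

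In Step~3, the directed update is not a reversal but the exchange $aPcQbRdS\mapsto aRdQbPcS$, and it moves whole blocks at every step. Prefix sums indexed by block position are therefore invalidated every step, and recomputing them costs $\Theta(m)$. Rebuilding only every $\sqrt m$ steps leaves the intermediate queries unsupported, and counts alone do not let you sample an actual chord. The paper instead keeps stable chunk identifiers, pair lists $L[C,D]$, and a balanced sequence tree over chunks whose nodes store identifier-indexed count vectors. The full inside chunks are then aggregated from $O(\log q)$ stored vectors, and each step costs $O(\sqrt M\log M)$ amortized.
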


The proof has two algorithmic ingredients.  The first is the sampler for
\(2\)-in/\(2\)-out directed Eulerian graphs, where one local transition flip is
a single bit.

\begin{theorem}[Degree-two sampler]\label{thm:degree-two-sampler}
Let \(H\) be a strongly connected directed Eulerian multigraph with \(M\) arcs
and \(\deg^+(v)=\deg^-(v)=2\) for every vertex.  For every \(0<\eps<1\), there
is a randomized algorithm which outputs a tour whose distribution is within
\(\eps\) total variation distance of the uniform distribution on \(\ET(H)\).
Its running time is
\[
  O\!\left(M^{3/2}\polylog(M/\eps)\right),
\]
and its space usage is \(O(M)\).
\end{theorem}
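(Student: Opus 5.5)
The plan is to run the flip--repair walk on \(\ET(H)\), prove it mixes in \(\widetilde O(M)\) steps, and implement each step in \(\widetilde O(M^{1/2})\) amortized time. In the degree-two setting a transition system is a bit vector \(x\in\bits^{V}\): at each vertex the two incoming arcs are matched to the two outgoing arcs either ``straight'' or ``crossed'', and flipping a bit changes the number of cycles of the induced permutation by exactly \(\pm1\). From a tour \(T\), one step picks a uniformly random vertex \(v\) and flips it. This always splits \(T\) into two circuits \(C_1,C_2\), because in a one-cycle state every flip disconnects. The step then considers the set \(R(T\oplus v)\) of vertices \(w\) visited by both \(C_1\) and \(C_2\); \(v\) is one of them. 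Flipping any such \(w\) merges the two circuits back into a single tour, and the step picks \(w\) uniformly from \(R\). The transition probability is
\[
  P(T,T\oplus v\oplus w)=\frac1{|V|}\cdot\frac{1}{|R(T\oplus v)|},
\]
and the two-circuit state is shared by both directions of the move. So \(P\) is symmetric and the uniform distribution on \(\ET(H)\) is stationary. The chain is lazy, since choosing \(w=v\) returns to \(T\). Irreducibility follows from the standard fact that any two tours are connected by such flip pairs (the interlace/transposition connectivity of Eulerian tours in \(4\)-regular digraphs).

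The mixing step is where I expect the real obstacle. My first attempt would be a spectral or entropy comparison that views \(\ET(H)\) as the bases of a suitable structure, using the interlace matrix over \(\mathrm{GF}(2)\). Tours of a \(2\)-in/\(2\)-out graph correspond to \(\mathrm{GF}(2)\)-nonsingular principal submatrices of the interlace matrix \(A_T\) relative to a base tour \(T\): flipping a set \(S\) gives a tour iff \(\det A_T[S]=1\). The flip--repair move is then a pivot on a \(1\times1\) or \(2\times 2\) principal submatrix, which is a local exchange in a delta-matroid. I would aim to show that the two-layer walk (tours \(\to\) two-circuit states \(\to\) tours) is a down-up walk on this even delta-matroid. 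Then I would prove a spectral gap of order \(1/(|V|\polylog)\), or better a modified log-Sobolev constant of order \(1/|V|\), by a linear-algebra argument. The natural route is to establish local spectral expansion of links, where links are again interlace-type delta-matroids, and apply a trickle-down or log-concavity argument, since the generating polynomial of an even delta-matroid represented over \(\mathrm{GF}(2)\) should be sufficiently log-concave. This gives \(\widetilde O(M\log(1/\eps))\) steps from the Hierholzer starting tour, since \(\log\lvert\ET(H)\rvert=O(M)\).

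For the implementation, each step needs three things. It must determine \(C_1,C_2\) after flipping \(v\), which means splitting the cyclic tour sequence at the two occurrences of \(v\). It must sample \(w\) uniformly among vertices whose two occurrences lie in different arcs of that split; these are exactly the chords interlaced with \(v\)'s chord in the chord diagram of the tour. Finally it must apply the double flip, which is a reversal-free reordering of tour segments. I would keep the tour as a cyclic sequence of \(2|V|\) vertex occurrences, cut into \(\Theta(\sqrt M)\) blocks of size \(\Theta(\sqrt M)\). Each block stores, for every other block, a count of chords crossing between them, or equivalently prefix-count tables. Counting and sampling chords interlaced with a given chord then takes \(O(\sqrt M)\) block queries plus \(O(\sqrt M)\) scanning at the boundary blocks. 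Each flip--repair rearranges \(O(1)\) segments, so it changes block boundaries in \(O(1)\) places, and rebuilding the affected summaries costs \(O(\sqrt M)\). Periodic global rebuilds every \(\sqrt M\) steps, each costing \(O(M)\), keep block sizes balanced. This gives \(\widetilde O(\sqrt M)\) amortized time per step and \(O(M)\) space.

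Multiplying \(\widetilde O(M)\) steps by \(\widetilde O(\sqrt M)\) time per step gives the claimed \(O(M^{3/2}\polylog(M/\eps))\) bound.
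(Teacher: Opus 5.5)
Your chain, crossing criterion, and final accounting ($\widetilde O(M)$ steps times $\widetilde O(\sqrt M)$ per step) match the paper's. However, both quantitative ingredients are left unproved, and the mixing bound is the heart of the theorem.

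\textbf{Mixing.} Your mixing paragraph is a plan, not an argument. $\mathrm{GF}(2)$ representability by itself supplies no real positivity on which to run a log-concavity argument. Also, the feasible sets lie on a parity class rather than a single rank layer, so down-up/trickle-down theory for homogeneous log-concave polynomials does not apply off the shelf; the paper states the analogous claim for general Hurwitz-stable parity measures only as a conjecture.

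Worse, your step count does not follow from the spectral-gap branch you offer. With a gap of order $1/\lvert V\rvert$ or smaller and $\log(1/\min\mu)=\log\lvert\ET(H)\rvert$, which can be $\Theta(M)$, you only get a bound of order $M^2$ steps, hence $M^{5/2}$ time. The estimate $\log\lvert\ET(H)\rvert=O(M)$ helps only through a log-Sobolev-type inequality, where it enters as $\log\log$. So such an inequality is mandatory, not optional.

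The missing idea is a real lift of the interlacement matrix. Bouchet's principally unimodular signing $A_C$ gives $\mathbf 1[C^S\in\ET(H)]=\det(A_C[S,S])=\pf(A_C[S,S])^2$. The paper then proceeds in three steps:
\begin{itemize}
\item In the vector $\Psi=\sum_S\pf(A_S)e_S$, each repair step becomes an orthogonal projection.
\item The vectors $\xi_T=c_T\Psi/\sqrt Z$ form an orthonormal excitation basis, and Knuth's overlapping-Pfaffian identity (row isotropy) gives $\mathsf P_A\preceq I-\mathsf N_A/n$, hence gap $\ge 2/n$.
\item Flatness gives the incoherence $\lvert\langle e_S,\xi_T\rangle\rvert\le Z^{-1/2}$, and Maassen--Uffink plus Rothaus then yield $\Ent_\mu(f^2)\le n(\log n+2)\,\cE_P(f,f)$, i.e.\ $O(n\log n[\log n+\log(1/\eps)])$ steps.
\end{itemize}

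\textbf{Data structure.} Your data-structure paragraph has a second, smaller gap: the two summaries you call equivalent are not.
\begin{itemize}
\item A $q\times q$ table of pairwise block counts, indexed by block identity, survives the reassembly $a\,P\,c\,Q\,b\,R\,d\,S\to a\,R\,d\,Q\,b\,P\,c\,S$. But counting crossings whose endpoints both lie in full blocks then means summing $\Theta(\lvert F\rvert\cdot q)$ entries, which is $\Theta(M)$ when the interval covers half the circle.
\item Prefix-count tables along the block order make that query $O(q)$, but they depend on positions. One double flip swaps the long runs $P$ and $R$, in general changing $\Theta(q)$ entries in each of the $q$ rows, i.e.\ $\Theta(M)$ work per step. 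Periodic global rebuilds restore block sizes, not this staleness.
\end{itemize}
So one representation makes updates cheap and the other makes queries cheap, but neither does both. You need a dynamic sequence structure over the blocks that supports both contiguous-range aggregation and segment moves.

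The paper uses a treap over the blocks whose internal nodes store length-$Q$ vectors $\mathbf V_U[D]$ indexed by stable block identifiers. A run $F$ is covered by $O(\log q)$ nodes, so $\mathbf V_F$ costs $O(Q\log q)$. Each of the $O(1)$ splits and concatenations recomputes $O(\log q)$ vectors, and each of the $O(B)$ vertices touched by the local rebuild updates $O(\log q)$ ancestors. Pair lists $L[C,D]$ then allow an exactly uniform draw. This is what gives $O(\sqrt M\log M)$ per step.
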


The second ingredient is a standard reduction from arbitrary degrees to degree
two.  The reduction replaces a degree-\(D\) vertex by a \(D\)-wire switching
network: \(D\) input wires enter a small directed gadget of two-state
\(2\)-in/\(2\)-out switches, \(D\) output wires leave it, and a setting of all
switches routes the inputs to the outputs, thereby realizing a permutation in
\(S_D\).  Networks that realize permutations by composing binary switches are
classical in the theory of rearrangeable connecting and permutation networks
\cite{Benes1964,Waksman1968}; randomized versions have also been studied as
ways to generate almost-random permutations
\cite{Morris2009Thorp,Morris2009Improved,Czumaj2015SwitchingNetworks}.  For
our reduction, the network needs pointwise multiplicative control on every
local permutation.  This is stronger than merely saying that the induced
permutation is almost uniform in total variation distance.  We use the
following pointwise \(L_\infty\) form as an auxiliary tool.

\begin{theorem}[Random pointwise switching networks]\label{thm:pointwise-switching}
For every \(D\ge2\), \(0<\eta<1\), and \(0<\delta<1\), there is a randomized
algorithm which constructs a \(D\)-wire degree-two switching network with
\[
  O\!\left(D\,\polylog(D/(\eta\delta))\right)
\]
switches and arcs.  With probability at least \(1-\delta\) over the construction
of the network, the following pointwise guarantee holds.  If all switches are
then set uniformly and independently and \(\mu_{\cN}\) is the induced
distribution on \(S_D\), then
\[
  e^{-\eta}\frac1{D!}
  \le
  \mu_{\cN}(\sigma)
  \le
  e^\eta\frac1{D!}
  \qquad\forall\sigma\in S_D.
\]
\end{theorem}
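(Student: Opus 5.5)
The plan is to build the network as a concatenation of \(T=O(\polylog(D/(\eta\delta)))\) independent random rounds.  Each round uses \(O(D)\) switches, so the total count is \(O(D\,T)\), as the statement requires.  In one round we draw a uniformly random perfect matching of the \(D\) wires; if \(D\) is odd, one wire idles.  We place a switch on each matched pair, and setting that switch to \(0\) or \(1\) either keeps or transposes the pair.  For a fixed network, the induced random permutation is \(\sigma=\pi_T\circ\cdots\circ\pi_1\), where \(\pi_t\) is a uniformly random product of the transpositions in the \(t\)-th matching.  Everything then reduces to showing that, with probability \(1-\delta\) over the matchings, the operator \(P=P_T\cdots P_1\) acting on functions on \(S_D\) satisfies
\[
\max_\sigma |D!\,\mu_\cN(\sigma)-1|\le \eta/2 .
\]
This bound gives the claimed \(e^{\pm\eta}\) bounds.

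The first key step is to pass from total variation to pointwise control using the standard \(L_2\)-to-\(L_\infty\) trick.  Split the network into two halves, \(A\) then \(B\), with independent switches, and write \(\mu=\mu_A*\mu_B\).  Cauchy--Schwarz gives
\[
|D!\mu(\sigma)-1|\le \|D!\mu_A-1\|_2\,\|D!\mu_B-1\|_2
\]
in \(L_2\) of the uniform measure.  This uses that \(\mu_B\) reversed is again a switching-network distribution, by reading the network backwards.  So it suffices to make each half have \(\chi^2\)-distance \(\le \eta/2\).  This is exponentially smaller in \(\log D!\) than crude TV bounds would require.

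The second and main step is the \(\chi^2\) contraction of one random round.  We do not try to bound the spectral gap of a fixed network; instead we average over the random matching.  The plan is to decompose \(L_2(S_D)\) into isotypic components and track the quantity
\[
\mathbb E_{\text{matchings}}\|P_t\cdots P_1\delta_{\mathrm{id}}-u\|_2^2 .
\]
The averaged one-round operator \(\mathbb E[P_t^*P_t]\) is a class-function-invariant operator: it is the random-matching shuffle in which each matched pair is independently transposed with probability \(1/2\).  Its eigenvalue on an irreducible \(\lambda\) is governed by the character ratio of transpositions.  Each round moves every wire, so the contraction factor on each nontrivial irreducible should be at most \(1-c\) uniformly, or at least improve as \(\dim\lambda\) grows.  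Starting from \(\chi^2(\delta_{\mathrm{id}})=D!-1\), the first \(O(\log D)\) rounds only give TV-level mixing.  This matches the known \(O(\log D)\)-round bounds of Czumaj and Morris for random matching networks, which we would invoke or reprove.  To reach \(\chi^2\le\eta\) we need the contraction on an irreducible of dimension \(d_\lambda\) to be roughly \(d_\lambda^{-\Omega(1)}\) per \(O(\log D)\) rounds, so that \(\sum_\lambda d_\lambda^2 r_\lambda^{T}\) is small.  This is the standard Diaconis--Shahshahani style summation, with an \(O(\log D\cdot\log(1/\eta))\)-type round count.  I expect this step, uniform character-ratio bounds for products of random matchings, to be the main obstacle.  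As a fallback, I would embed a Beneš network whose switches are set by the random bits, use exact uniformity of a random Beneš-like shuffle (Waksman-style networks realize each permutation, but not uniformly), and then correct nonuniformity by repetition.

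Finally, Markov's inequality converts the expected \(\chi^2\) bound over matchings into a bound holding with probability \(1-\delta\).  This costs a factor \(1/\delta\) inside \(\chi^2\), hence only \(O(\log(1/\delta))\) extra rounds because the decay is geometric.  Construction time is clearly \(O(D\,T)\).
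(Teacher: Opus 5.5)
Your framework is sound and close to the paper's. You use a randomly generated network and observe that on each irreducible $\lambda$ every round acts as an orthogonal projection: a product of commuting projections $\tfrac12(I+\rho_\lambda(\tau))$ over disjoint pairs. Hence $\mathbb{E}\lVert\widehat\mu(\lambda)\rVert_{\mathrm{HS}}^2=d_\lambda r_\lambda^{T}$, where $r_\lambda$ is the scalar by which the averaged round acts, and you finish with an $L_2$-to-$L_\infty$ step and Markov's inequality. Your halving and Cauchy--Schwarz step does the same job as the paper's bound $\lvert D!\,\mu(\sigma)-1\rvert\le\sum_{\lambda\ne(D)}d_\lambda^{3/2}\lVert\widehat\mu(\lambda)\rVert_{\mathrm{HS}}$ followed by Jensen.

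The gap is in the one step that carries all the quantitative content. Your claim that the averaged matching round's eigenvalue is governed by the transposition character ratio is false. A round with $p=\lfloor D/2\rfloor$ pairs outputs a product of $j\sim\mathrm{Bin}(p,1/2)$ disjoint transpositions. After averaging over the matching, this product is uniform on its conjugacy class, so
\[
  r_\lambda=\sum_{j=0}^{p}\binom{p}{j}2^{-p}\,\frac{\chi_\lambda(2^j1^{D-2j})}{d_\lambda},
\]
where $\chi_\lambda(2^j1^{D-2j})$ is the character value on that cycle type. This is an average of character ratios on involutions with typically about $D/4$ two-cycles. The Diaconis--Shahshahani zone estimates say nothing about these. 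Making $\sum_{\lambda\ne(D)}d_\lambda^2r_\lambda^{T}$ small with polylogarithmic $T$ needs decay of roughly the form $r_\lambda\le d_\lambda^{-C/T}$ for all nontrivial $\lambda$; a uniform bound $r_\lambda\le 1-c$ only certifies $T=O(D\log D)$ rounds, i.e.\ $\Theta(D^2\log D)$ switches. That is a Roichman- or Larsen--Shalev-type character bound for permutations moving a constant fraction of the points. You neither prove it nor can import it from the Czumaj and Morris results you mention, since those bound total variation, not this character sum (equivalently, the expected $\chi^2$ distance). The Bene\v{s} fallback also fails. With fair switches all probabilities are dyadic, so no network is exactly uniform on $S_D$ for $D\ge3$. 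Iterating a fixed network converges only at a rate set by its own, uncontrolled, spectral gap.

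The paper avoids this by dropping the matching structure, which costs nothing because the theorem counts switches, not depth. Each switch is placed on an independent uniformly random pair $\tau_t$. Then $\mathbb{E}_{\tau_t}\tfrac12(I+\rho_\lambda(\tau_t))=\alpha_\lambda I$ with $\alpha_\lambda=\tfrac12(1+\chi_\lambda(\tau)/d_\lambda)$, exactly the lazy random-transposition eigenvalue. With $R=2k$ switches this gives $\mathbb{E}\lVert\widehat\mu_{\mathcal R}(\lambda)\rVert_{\mathrm{HS}}^2=d_\lambda\alpha_\lambda^{R}$. The required bound $\sum_{\lambda\ne(D)}d_\lambda^2\alpha_\lambda^{k}\le\eta\delta/2$ is then literally the Diaconis--Shahshahani upper-bound sum once $k\ge C\,D(\log D+\log(1/(\eta\delta)))$. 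With that single change, your halving, expected-$\chi^2$ recursion and Markov steps go through essentially as written.
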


The runtime bound is driven by the degree-two sampler in
\cref{thm:degree-two-sampler}: the chain mixes in nearly linear many steps and
the data structure implements each step in \(O(\sqrt M\,\polylog M)\) time.
The extension from arbitrary degrees to degree two is a supporting wrapper
around this core.  The one feature we need from that wrapper is pointwise
control: common total-variation formulations of almost-random switching
networks do not by themselves imply the multiplicative lift-counting estimate
used here.

This places the paper in a broader algorithmic theme.  Several recent
breakthroughs in combinatorial optimization obtain near-linear running times by
combining an optimization primitive with a data structure that maintains the
primitive under many local updates; the almost-linear time algorithms for
maximum flow and minimum-cost flow are a prominent example
\cite{ChenKyngLiuPengProbstGutenbergSachdeva2022}.  The analogous story has
been rarer in sampling.  Random spanning tree generation is one important
exception: the near-linear sampler of Anari, Liu, Oveis Gharan, Vinzant, and
Vuong combines sharp mixing bounds for down-up walks with graph-algorithmic
data structures that implement the walk efficiently
\cite{AnariLiuOveisGharanVinzantVuong2021}.  The present paper gives another
example of this sampling-side paradigm: the speedup comes from matching a new
local Markov chain to a representation in which its updates can be maintained
dynamically.

The algorithm has three pieces, shown schematically in \cref{fig:overview}.
The degree-two flip--repair chain is the core of the paper.  The expansion and
projection steps are the interface between this core sampler and general
Eulerian digraphs: they convert high-degree local choices into degree-two
switches, run the degree-two sampler, and then project the sampled tour back to
the original graph.

\begin{figure}[t]
  \centering
  \begin{tikzpicture}[
    >=Latex,
    every node/.style={font=\small},
    object/.style={
      draw=black!65,
      line width=.55pt,
      rounded corners=1.5pt,
      fill=black!2,
      align=center,
      text width=2.35cm,
      minimum height=1cm,
      inner sep=5pt
    },
    arrow/.style={-{Latex[length=2.2mm,width=1.6mm]}, line width=.55pt, black!70}
  ]
    \node[object] (input) {original graph\\\(G\), \(m\) arcs};
    \node[object, right=1.25cm of input] (expand) {degree-two expansion\\\(G^\star\), \(M\) arcs};
    \node[object, right=1.25cm of expand] (walk) {degree-two core\\flip--repair sampler};
    \node[object, right=1.25cm of walk] (output) {tour projected\\back to \(G\)};

    \draw[arrow] (input) -- (expand);
    \draw[arrow] (expand) -- node[above, font=\scriptsize, text=black!70] {walk} (walk);
    \draw[arrow] (walk) -- node[above, font=\scriptsize, text=black!70] {project} (output);
  \end{tikzpicture}
  \caption{The sampler reduces to the degree-two flip--repair walk and then
  projects back to \(G\).  Pointwise switching networks make the projection
  close to uniform, with \(M=\card{E(G^\star)}=O(m\,\polylog(m/\eps))\).}
  \label{fig:overview}
\end{figure}

\paragraph{Degree two.}
When every vertex has indegree and outdegree two, each vertex has exactly two
possible local transitions.  Fix a reference Eulerian tour \(C\).  At a vertex
\(v\), one transition is the pairing used by \(C\); the other swaps the two
successors.  Thus every transition system is obtained by flipping a subset
\(S\subseteq V\) of vertices relative to \(C\).  The Cohn--Lempel
interlacement theorem \cite{CohnLempel1972,Traldi2011}, together with
Bouchet's unimodular orientation theorem for circle graphs
\cite{Bouchet1987Unimodularity,Bouchet1992UnimodularOrientations,
BouchetCunninghamGeelen1998}, gives a real skew-symmetric matrix
\(A=A_C\) such that
\[
  \det(A_{S,S})\in\{0,1\},
  \qquad
  \det(A_{S,S})=1
  \Longleftrightarrow
  C^S \text{ is one Eulerian tour}.
\]
Thus the uniform distribution on degree-two tours is the
\emph{skew-determinantal measure}
\[
  \mu_A(S)\propto \det(A_{S,S})
\]
for a real skew-symmetric matrix \(A\).  We call such a measure \emph{flat}
when all positive determinants are equal; in the degree-two tour case they are
all \(1\).  This name emphasizes that the probability weights are principal
determinants of a skew-symmetric matrix, and no additional structure is meant
by the terminology.  Since every odd skew-symmetric principal determinant is
zero, this also explains why all feasible flips lie on one parity layer.

The degree-two sampler uses the flip--repair walk on this support.  From a
feasible \(S\), choose \(i\in V\) uniformly and move to the odd set
\(Y=S\symdiff\{i\}\).  Then repair by choosing uniformly among all feasible
\(T\) with \(\card{T\symdiff Y}=1\).  In tour language, the first flip splits
the current tour into two cycles.  The repair flip is chosen uniformly among the
local flips that merge the two cycles back into one; this set includes flipping
\(i\) back, which gives the self-loop \(T=S\).

\paragraph{Mixing.}
The main proof ingredient behind the runtime is a mixing theorem for the larger
family of skew-determinantal measures.  If \(A\) is any real skew-symmetric
\(n\times n\) matrix, define
\[
  w_A(S)=\det(A_{S,S}),
  \qquad
  \mu_A(S)=\frac{w_A(S)}{\sum_T w_A(T)}.
\]
By a basic linear-algebra fact for real skew-symmetric matrices
\cite{HornJohnson2012}, odd principal determinants are zero and even principal
determinants are nonnegative.  The formula has the same principal-minor shape as
determinant-based sampling measures, but the matrix is skew-symmetric and the
support lies on one parity class.  The same even-to-odd flip--repair walk has
stationary distribution \(\mu_A\).  We prove:

\begin{theorem}[Skew-determinantal flip--repair mixing]\label{thm:pfaffian-intro}
For every real skew-symmetric \(n\times n\) matrix \(A\), with \(n\ge2\), the
skew-determinantal flip--repair walk has spectral gap at least \(2/n\).  If, in
addition, \(w_A(S)\in\{0,1\}\) for every \(S\subseteq[n]\), then
\[
  \Ent_{\mu_A}(f^2)
  \le
  n(\log n+2)\,\cE_P(f,f)
  \qquad\text{for every }f:\supp(\mu_A)\to\R.
\]
Consequently, in the flat case, for every \(0<\eps<1\),
\[
  t_{\mathrm{mix}}(\eps)
  \le
  O\!\left(n\log n\,[1+\log n+\log(1/\eps)]\right)
\]
from every starting state in the support.
\end{theorem}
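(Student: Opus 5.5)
The plan is to treat the flip--repair walk as a down--up composition $P=DU$ and to prove the spectral gap through a local variance inequality. The local inequality is made uniform across states by principal pivoting. The LSI and mixing-time bounds then follow by a parallel entropy argument and standard tools.

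\textbf{Step 1: factor the walk.} Let $\Omega_0$ be the even support of $\mu_A$, and let $\Omega_1$ be the odd sets. Define the odd measure $\nu(Y)=\frac1n\sum_{i}\mu_A(Y\symdiff\{i\})$. Let $D:\Omega_0\to\Omega_1$ flip a uniformly random coordinate. Let $U:\Omega_1\to\Omega_0$ move to a neighbour $T$ with probability proportional to $w_A(T)$. A direct check shows that $U$ is the $L^2(\nu)\to L^2(\mu_A)$ adjoint of $D$, so $P=DU$ is reversible with respect to $\mu_A$. Its Dirichlet form is
\[
\cE_P(f,f)=\Var_{\mu_A}(f)-\Var_\nu(D^*f).
\]
Hence a gap of $2/n$ amounts to the contraction $\Var_\nu(D^*f)\le(1-2/n)\Var_{\mu_A}(f)$.

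\textbf{Step 2: localize by pivoting.} For $S$ in the support, the principal pivot transform $B=A*S$ is again real skew-symmetric. It satisfies $w_A(S\symdiff T)/w_A(S)=\det(B_{T,T})$; this follows from the Pfaffian identity $\det A_{S,S}=\pf(A_{S,S})^2$ and Jacobi's complementary-minor formula. Consequently the walk seen from $S$ is the walk for $B$ seen from $\emptyset$. One step from $\emptyset$ goes to $\{i\}$ and is then repaired either to $\emptyset$ or to $\{i,j\}$, with weights $1$ and $B_{ij}^2$ respectively. All two-step quantities are therefore governed by the nonnegative symmetric matrix $B\circ B$, the entrywise square of $B$, at every base point.

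I would prove the contraction by a decomposition of variance over the first flipped coordinate. Write $\Var_{\mu_A}(f)$ through the coupling $(S,Y)$ and use Cauchy--Schwarz along the fibres of $U$. This reduces the claim to a quadratic-form inequality for the matrix $B\circ B$ and the diagonal of row sums. The key computation I expect is that, for skew-symmetric $B$, the ratio $B_{ij}^2/(1+\sum_k B_{ik}^2)$ behaves like a contraction with total weight at most $1-2/n$. Here the "$2$" arises because every feasible move changes the set by exactly two coordinates, flipped symmetrically, so each pair $\{i,j\}$ is counted from both endpoints. I expect this matrix inequality, and in particular getting the constant exactly $2/n$ with no dependence on $A$, to be the main obstacle. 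I would first try to show that the relevant operator on the $n$ coordinate-indexed functions is a convex combination of averaging operators. The tool would be the identity $\det(I+ZB)=\sum_T z^T\det(B_{T,T})$: differentiate twice at $z=0$ and compare with the linear test functions, which should be extremal.

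\textbf{Step 3: log-Sobolev in the flat case.} When $w_A\in\{0,1\}$, $\mu_A$ is uniform on its support and every neighbour ratio lies in $\{0,1\}$. I would run the same fibre decomposition with entropy in place of variance, aiming at
\[
\Ent_\nu(D^*f^2)\le(1-\kappa)\Ent_{\mu_A}(f^2),
\]
where the loss is controlled by the log-Sobolev constant of a uniform two-step star walk of size at most $n$. That constant is of order $\log n$, and combining it with the factor $n$ from Step 2 should give $n(\log n+2)$.

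\textbf{Step 4: mixing time.} For the final bound, apply the standard estimate
\[
t_{\mathrm{mix}}(\eps)\le\rho^{-1}\bigl(\tfrac14\log\log(1/\mu_{\min})+\log(1/\eps)\bigr)+1.
\]
Here $\rho\ge 1/(n(\log n+2))$, and flatness gives $\mu_{\min}\ge2^{-n}$, so $\log\log(1/\mu_{\min})=O(\log n)$. This yields the stated $O(n\log n[1+\log n+\log(1/\eps)])$ from any start in the support.
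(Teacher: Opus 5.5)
Your Step 1 is fine, and the pivoting fact in Step 2 is correct: $A*S$ is real skew-symmetric, $w_A(S\symdiff T)=w_A(S)\det((A*S)_{T,T})$, and $T\mapsto T\symdiff S$ maps the chain for $A$ isomorphically onto the chain for $A*S$.

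\textbf{The gap is the reduction that follows.} By the law of total variance over the coupling $(S,Y)$,
\[
\Var_{\mu_A}(f)=\cE_P(f,f)+\Var_\nu(Uf),\qquad \cE_P(f,f)=\mathbb{E}_\nu[\Var_{U(Y,\cdot)}(f)].
\]
Cauchy--Schwarz along the fibres of $U$ gives only the trivial bound $\Var_\nu(Uf)\le\Var_{\mu_A}(f)$. The entire theorem is the comparison of within-fibre variance with the \emph{global} variance. No inequality checked separately at each base point in terms of $B\circ B$ can supply that comparison.

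To see why, take any measure $\mu$ on even sets and any $S$ in its support. The matrix $M_{ij}=\mu(S\symdiff\{i,j\})/\mu(S)$ is symmetric, nonnegative and zero on the diagonal. Hence $M=B\circ B$ for the skew matrix with $B_{ij}=\sqrt{M_{ij}}$ for $i<j$. After $T\mapsto T\symdiff S$, the flip--repair kernel from $S$ coincides with the kernel of $\mu_B$ from $\varnothing$.

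Now take $\mu$ uniform on $S_k=\{1,\dots,2k\}$, $0\le k\le n/2$.
\begin{itemize}
\item At every $S_k$, the local kernel is that of $\mu_B$ with $B$ a direct sum of at most two $2\times2$ blocks with off-diagonal entry $1$. This is a flat product measure whose walk has gap $2/n$.
\item Yet the walk on $\mu$ is a birth--death chain moving $k\to k\pm1$ with probability $1/n$ each, so its gap is $\Theta(n^{-3})$.
\end{itemize}
So any per-base-point inequality valid for all skew $B$ would equally ``prove'' a $2/n$ gap here. What must be exploited is how the pivots $A*S$ at different base points fit together. That information is carried by the signed Pfaffians and lost in $B\circ B$. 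Separately, read literally, the row weight $\sum_jB_{ij}^2/(1+\sum_kB_{ik}^2)=r_i/(1+r_i)$ is not bounded by $1-2/n$; it tends to $1$ as $r_i=\sum_kB_{ik}^2\to\infty$.

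\textbf{How the paper supplies the global ingredient.}
\begin{itemize}
\item With $h_f=\sum_Sf(S)\pf(A_S)e_S$, the repair step becomes $\mathsf P_A=\frac1n\sum_Y\operatorname{Proj}_{\Phi_Y}$.
\item The vectors $\xi_T=c_T\Psi_A/\sqrt Z$ form a global orthonormal basis. This is checked in the $2\times2$ block normal form and transported by the orthogonal action $\Lambda(Q)$.
\item Knuth's overlapping-Pfaffian identity gives $\varepsilon_{\beta_Y}R_Y=0$, which kills the deletion part of $c_{\widehat\beta_Y}$. A Bessel estimate over the orthonormal rows $R_Y$ then yields $\mathsf P_A\preceq I-\mathsf N_A/n$.
\item Parity forces every nonconstant even excitation to have degree at least $2$, which gives exactly $2/n$.
\end{itemize}

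\textbf{Step 3.} Your architecture is legitimate. The chain rule $\Ent_{\mu_A}(f^2)-\Ent_\nu(Uf^2)=\mathbb{E}_\nu[\Ent_{U(Y,\cdot)}(f^2)]$, plus an $O(\log n)$ log-Sobolev inequality on uniform fibres of size at most $n$, does turn entropy contraction at rate $\kappa$ into a log-Sobolev constant $O(\log n/\kappa)$. But entropy contraction at rate $\Omega(1/n)$ implies a spectral gap $\Omega(1/n)$ (linearize around constants). The same flat path example therefore shows that a fibre-by-fibre argument cannot produce it.

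The paper instead uses flatness globally:
\begin{itemize}
\item All Pfaffians lie in $\{0,\pm1\}$, so $|\langle e_S,\xi_T\rangle|\le Z^{-1/2}$.
\item Maassen--Uffink then gives $\Ent_{\mu_A}(f^2)\le H(q)$, where $q$ is the excitation distribution of $h_f$.
\item The Gibbs bound with $\theta=\log n$ gives $H(q)\le(\log n+1)\Deg_A(f)$ for centered $f$.
\item Rothaus centering, together with $\cE_P(f,f)\ge\Deg_A(f)/n$, finishes the argument.
\end{itemize}

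Your Step 4 is fine once an LSI is in hand; use $P=DU\succeq0$ to pass from continuous to discrete time.
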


The spectral-gap statement is the structural part of the theorem.  The constant
\(2/n\) is sharp in the following simple case: \(A\) is a direct sum of
\(2\times2\) skew-symmetric blocks, so the support is a product of independent
binary choices, and the flip--repair walk updates one block with probability
\(2/n\).  Thus no skew-symmetric matrix can make the parity flip--repair walk
slower than this product case; the proof of \cref{thm:spectral-comparison}
spells out the sharpness example.  For the sampling algorithm, however, the gap
alone would only give a mixing bound with an extra factor coming from
\(\log(1/\min_S\mu_A(S))\), which can be linear in \(n\) even in the flat case.
The log-Sobolev part is what turns the flat \(0/1\) structure of Eulerian tours
into nearly linear mixing time, up to logarithmic factors.

The proof of the mixing theorem has a useful linear-algebraic summary.  For an
even principal submatrix of a skew-symmetric matrix, the determinant has a
standard signed square root, the Pfaffian.  We put one basis vector \(e_S\) for
each subset \(S\subseteq[n]\), and collect all these signed square roots in the
single vector
\[
  \Psi_A=\sum_{S\text{ even}}\pf(A_{S,S})e_S
\]
inside a \(2^n\)-dimensional vector space.  Signed insertion and deletion
operators act on the subset coordinate \(S\).  Combining them gives the signed
coordinate flips that model the first half-step of the chain.  Applying these
signed flips to the normalized vector \(\Psi_A/\norm{\Psi_A}\) gives
orthonormal excitation coordinates, with \(\Psi_A/\norm{\Psi_A}\) as the
degree-zero vector.  We prove this from the standard \(2\times2\) block normal
form for real skew-symmetric matrices and covariance of exterior powers under
orthogonal changes of coordinates \cite{HornJohnson2012}.  A row-cancellation
estimate for the same coordinates follows from Knuth's overlapping-Pfaffian
identity \cite{Knuth1996}.  In these excitation coordinates, the chain's
Dirichlet form dominates the excitation number divided by \(n\), giving the
sharp gap \(2/n\).

For flat measures, this spectral estimate is strengthened by an entropy
argument.  The state basis and the excitation basis have maximum overlap
\(1/\sqrt{\card{\supp(\mu_A)}}\).  Maassen--Uffink entropic uncertainty
\cite{MaassenUffink1988}, followed by Rothaus's centering lemma
\cite{Rothaus1981}, converts this overlap bound into the stated log-Sobolev
inequality.  The entire mixing proof is independent of Eulerian tours and is
stated for all real skew-symmetric matrices.

\paragraph{Implementation and degree reduction.}
The mixing theorem gives the number of Markov-chain steps; we also need to
implement each step quickly.  In degree two, represent the current tour by
placing its \(M\) arc occurrences in cyclic order on a circle.  Each vertex has
two incoming arc occurrences in this order, and joining these two positions by a
chord gives the cyclic chord diagram of the current tour.  After the first flip
at \(x\), the valid repair flips are \(x\) itself and the vertices whose chords
cross \(x\).
For a degree-two graph with \(M\) arcs, we maintain chunks of size on the order
of \(\sqrt M\), pair lists for chunk pairs, and a top-level balanced
sequence tree with aggregate counts.  This supports uniform sampling of the
repair flip and the resulting
four-cut tour update in \(O(\sqrt M\,\polylog M)\) amortized time.
The log-Sobolev bound gives \(O(M\polylog(M/\eps))\) flip--repair steps, so
this is where the \(M^{3/2}\), and ultimately \(m^{3/2}\), running time comes
from.

To pass from degree two to arbitrary degrees, we first suppress degree-one
vertices, whose local transition is forced.  Each remaining degree-\(d\) vertex
is then replaced by a degree-two switching-network gadget on \(d\) wires.  This
is a standard local reduction rather than the source of the new running time.
Closely related gadgets appear in prior work on Eulerian-tour and A-trail
reductions, and in parallel sampling reductions for Eulerian tours
\cite{GeStefankovic2012,AnariHuangLiuVuongXuYu2023}.  We use it here because it
lets the degree-two sampler be applied without changing the distribution by
much.

The only point that needs care is the strength of the approximation.  We use a
sparse random network of lazy transposition switches.  With high probability
over the random choice of the network, the uniform distribution on switch
settings induces a distribution on \(S_d\) which is pointwise
\(e^{\pm\eta}\)-close to uniform.  This is the multiplicative form needed for
lift counting; total-variation closeness of the local permutation distribution
would not by itself be enough, because the analysis conditions on the global
one-cycle event.  The pointwise estimate is a quenched switching-network version
of the random-transposition shuffle bounds of Diaconis and Shahshahani
\cite{DiaconisShahshahani1981}.  The lift-counting argument then shows that
every original tour has nearly the same number of expanded tours, point by
point.  Choosing local errors and gadget-failure probabilities summing to
\(O(\eps)\) makes the projected output distribution \(O(\eps)\)-close to uniform
on \(\ET(G)\).

\paragraph{Related work.}
The classical exact approach to directed Eulerian tour sampling is through the
BEST theorem and random arborescences
\cite{KandelMatiasUngerWinkler1996,ProppWilson1998,JiangEtAl2008}.  Creed and
Cryan studied Eulerian tours in random directed regular graphs
\cite{CreedCryan2013}.  Tetali and Vempala studied a Kotzig-transformation chain
for the undirected Eulerian-tour sampling problem \cite{TetaliVempala1997}.  That
problem is different from the directed problem considered here, and a general
efficient sampler for undirected Euler tours remains open.

The walk studied here is on directed transition systems and does not use the
undirected Kotzig chain.  After reducing to degree two, the state space is a
flat skew-determinantal support: it is described by principal determinants of a
real skew-symmetric matrix, and those determinants are all \(0\) or \(1\).  The
proof proceeds by linear algebra of signed square roots of these principal
determinants rather than comparison to arborescences.

Switching networks and the related Thorp shuffle have been studied as
oblivious ways to generate almost-random permutations
\cite{Morris2009Thorp,Morris2009Improved,Czumaj2015SwitchingNetworks}.  Local
random permutation gadgets have also been used in reductions involving
Eulerian tours and A-trails \cite{GeStefankovic2012}, and in parallel sampling
work for Eulerian tours \cite{AnariHuangLiuVuongXuYu2023}.  Our use is a
supporting reduction: the degree-two sampler and its skew-determinantal mixing
theorem do the main work, while the gadget supplies pointwise multiplicative
control before conditioning on the global one-cycle event.

The measure \(\mu_A\) is a skew-symmetric principal-minor measure: its
coefficients are \(\det(A_{S,S})\), where \(A\) is real skew-symmetric.  These
coefficients are nonnegative because
\(\det(A_{S,S})=\pf(A_{S,S})^2\), and they vanish on one parity.  Thus
\(\mu_A\) is close in form to determinantal and strongly Rayleigh measures.
Strongly Rayleigh measures, introduced by Borcea, Branden, and Liggett
\cite{BorceaBrandenLiggett2009}, satisfy strong negative dependence properties
and have rapidly mixing basis-exchange walks in the homogeneous case
\cite{AnariOveisGharanRezaei2016}.  Those results do not directly cover the
measure here: the principal-minor formula comes from a skew-symmetric matrix,
the support lies on one parity class rather than on one cardinality layer, and
the natural local move is flip--repair rather than basis exchange.  The
generating polynomial
\[
  \sum_S \det(A_{S,S})z^S
  =
  \det(I+A\diag(z))
\]
is Hurwitz stable: if \(\Re z_i>0\) for all \(i\), then
\(\diag(z)^{-1}+A\) has positive definite Hermitian part and is nonsingular.
We formulate conjectures in \cref{sec:hurwitz} suggesting that the spectral and
stochastic-covering phenomena proved here for skew-determinantal measures extend
to general Hurwitz-stable distributions; stochastic covering refers to local
couplings between neighboring one-coordinate conditionings.  This is also close
to recent stability-based approaches to spectral independence and rapid mixing
\cite{AnariAlimohammadiShiragurVuong2021,ChenLiuVigoda2021,ChenFengYinZhang2024}.

\paragraph{Provenance and AI assistance.}
The author's contribution was the high-level algorithmic plan, the
transition-system and gadget reduction, the dynamic chord data structure, and
the conjecture that the skew-determinantal flip--repair walk has the stated
mixing behavior.  The linear-algebra proof of \cref{thm:pfaffian-intro} was
produced by GPT 5.5 Pro Extended.  Codex then organized, edited, and typeset the
manuscript.  See \cref{app:provenance} for a fuller note.

\paragraph{Organization.}
\Cref{sec:prelim} fixes transition-system notation.  \Cref{sec:interlace}
identifies degree-two tours with flat skew-determinantal measures.
\Cref{sec:pfaffian-walk} proves the general skew-determinantal mixing theorem.
\Cref{sec:data-structure} gives the dynamic implementation of the degree-two
sampler, and \cref{sec:gadgets} proves the pointwise degree-reduction step.
\Cref{sec:hurwitz} records the stochastic-covering side result and the
Hurwitz-stable conjectures.

\section{Preliminaries}\label{sec:prelim}

\paragraph{Eulerian transition systems.}
Let \(G=(V,E)\) be a directed Eulerian multigraph.  A transition system
\(\tau\) assigns to each \(v\in V\) a bijection
\[
  \tau_v:\delta^-(v)\to\delta^+(v).
\]
For a loop, its incoming and outgoing incidences are distinct local incidences,
even though they belong to the same arc.
It induces a permutation \(\pi_\tau\) of \(E\) by
\[
  \pi_\tau(e)=\tau_{\operatorname{head}(e)}(e).
\]
The cycles of \(\pi_\tau\) are directed circuits partitioning \(E\).  Thus
\(\tau\) is an Eulerian tour if and only if \(\pi_\tau\) has one cycle.  We
write \(\ET(G)\) for the set of one-cycle transition systems.  A rooted,
linearly written tour is obtained by choosing one of the \(\card{E}\) arcs as
the first arc, so this convention changes counts by a factor of \(\card{E}\).

\paragraph{The degree-two flip--repair chain.}
Assume \(\deg^+(v)=\deg^-(v)=2\) for every vertex under consideration.  Fix a
reference tour \(C\).  Each vertex has two possible transitions.  If the two
incoming incidences of \(v\) are paired by \(C\) with outgoing incidences
\(o_1,o_2\), then the crossed transition swaps these two outgoing incidences.
Therefore transition systems are indexed by subsets \(S\subseteq V\), where
\(C^S\) denotes the system obtained by flipping exactly \(S\).  The degree-two
flip--repair chain on feasible \(S\)'s is:
\[
  S \xrightarrow{\text{choose } i\in V \text{ uniformly}} Y=S\symdiff\{i\}
  \xrightarrow{\text{choose feasible }T,\ \card{T\symdiff Y}=1\text{ uniformly}} T.
\]
The feasible choices always include flipping \(i\) back to return to \(S\);
thus the chain has a self-loop at every state.

\paragraph{Entropy and Dirichlet forms.}
For a probability measure \(\mu\) on a finite set and a nonnegative function \(g\),
\[
  \Ent_\mu(g)=\E_\mu{g\log g}-\E_\mu{g}\log\E_\mu{g}.
\]
We use the convention \(0\log0=0\).
For a reversible Markov kernel \(P\) with stationary distribution \(\mu\),
\[
  \cE_P(f,f)=\ip{f}{(I-P)f}_\mu
  =
  \frac12\sum_{x,y}\mu(x)P(x,y)(f(x)-f(y))^2.
\]

\paragraph{Pfaffians.}
For an even-dimensional skew-symmetric matrix \(B\) with ordered rows and
columns, \(\pf(B)\) denotes the signed square root of the determinant specified
by the perfect-matching expansion; equivalently,
\(\pf(B)^2=\det(B)\).  We use the conventions
\(\pf(\varnothing)=1\) and \(\pf(B)=0\) for odd-dimensional skew-symmetric
matrices.  The proof only needs this determinant identity, expansion along a
row or column, and Knuth's overlapping-Pfaffian identity \cite{Knuth1996},
which is cited again where it is used.

\section{Degree-Two Tours as Flat Skew-Determinantal Measures}\label{sec:interlace}

We next identify the degree-two state space with a principal-minor support.
The input is a fixed reference tour \(C\).  Relative to \(C\), each vertex
contributes one bit, recording whether its local transition is flipped.  The
classical interlacement description of circuit partitions then turns the
one-cycle condition into a principal-minor test
\cite{CohnLempel1972,Traldi2011}.  This is the bridge from Eulerian tours to
the skew-determinantal measures analyzed in the next section.

Let \(H\) be a \(2\)-in/\(2\)-out directed Eulerian multigraph, and let \(C\) be
an Eulerian tour.  For each vertex \(v\), mark the two visits of \(C\) at which an
incoming incidence of \(v\) is followed by an outgoing incidence of \(v\).  A
loop contributes distinct incoming and outgoing incidences at the same vertex,
so it is treated in this incidence sense.  These two marked positions form a
chord on the tour circle.  Two vertices \(u,v\) are interlaced if their four
positions alternate around the circle.

The unsigned interlacement matrix \(I_C\) over \(\mathbb F_2\) has entry \(1\)
when two chords cross and diagonal entries \(0\).  The Cohn--Lempel equality
\cite{CohnLempel1972,Traldi2011} says that if \(S\) is the set of vertices at
which the transition is flipped relative to \(C\), then the number of circuits
in the resulting circuit partition is
\[
  1+\operatorname{nullity}_{\mathbb F_2}(I_C[S,S]).
\]
In particular, \(C^S\) is an Eulerian tour if and only if \(I_C[S,S]\) is
nonsingular over \(\mathbb F_2\).  The principal matrix \(I_C[S,S]\) is
symmetric with zero diagonal.  Over \(\mathbb F_2\), this means that the
bilinear form \(x^\top I_C[S,S]y\) is alternating, i.e.,
\[
  x^\top I_C[S,S]x=0
  \qquad\text{for every }x\in\mathbb F_2^S.
\]
Every alternating matrix has even rank, by the usual symplectic elimination for
alternating bilinear forms.  Thus, when \(\card{S}\) is odd, \(I_C[S,S]\) has
rank strictly smaller than \(\card{S}\) and is singular.  The feasible sets
therefore lie on the even layer, matching the parity of the skew-determinantal
representation.

For the mixing theorem we need a real skew-symmetric lift of this binary test.
We now spell out the signing.  Fix an orientation of the tour circle, and choose
one endpoint \(v^-\) of every chord \(v\) as its tail; write \(v^+\) for the
other endpoint.  For distinct vertices \(u,v\), define
\[
  A_C(u,v)=
  \begin{cases}
    1,
      & \text{if \(u\) and \(v\) cross and \(v^-\) lies on the open arc
        from \(u^-\) to \(u^+\)},\\
    -1,
      & \text{if \(u\) and \(v\) cross and \(v^+\) lies on the open arc
        from \(u^-\) to \(u^+\)},\\
    0,
      & \text{if \(u\) and \(v\) do not cross,}
  \end{cases}
\]
and set \(A_C(v,v)=0\).  When \(u\) and \(v\) cross, exactly one endpoint of
\(v\) lies on the open arc from \(u^-\) to \(u^+\), so the definition is
unambiguous; it also gives \(A_C(v,u)=-A_C(u,v)\).  Reducing entries modulo
\(2\) forgets the signs and gives \(A_C\bmod2=I_C\).

Changing the chosen tail of a chord \(v\) multiplies the \(v\)-th row and column
of \(A_C\) by \(-1\), so all principal determinants are independent of these
auxiliary choices.  Bouchet's theorem says that this signed adjacency matrix of
a circle graph is principally unimodular; the later paper of Bouchet,
Cunningham, and Geelen uses essentially this formulation
\cite{Bouchet1987Unimodularity,Bouchet1992UnimodularOrientations,
BouchetCunninghamGeelen1998}.

\begin{theorem}[Bouchet's unimodular signing of a circle graph]\label{thm:circle-pu}
For every chord diagram, the matrix \(A_C\) defined above satisfies
\[
  \det(A_C[S,S])\in\{0,1\}
  \qquad\text{for every }S\subseteq V.
\]
Moreover \(A_C\bmod2=I_C\).
\end{theorem}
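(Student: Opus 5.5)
The congruence \(A_C\bmod 2=I_C\) holds by construction: the nonzero entries of \(A_C\) are \(\pm1\) exactly on crossing pairs, and the diagonal is zero. The content is the determinant statement. Since \(A_C[S,S]\) is real skew-symmetric, \(\det(A_C[S,S])=\pf(A_C[S,S])^2\), and it vanishes for odd \(\card{S}\). So it suffices to prove
\[
  \pf(A_C[S,S])\in\{-1,0,1\}\qquad\text{for every even }S .
\]
The plan is to get this from an integrality-plus-parity argument and an induction on \(\card{S}\) that uses only the circle-graph structure. The principal submatrix \(A_C[S,S]\) is itself the signed matrix of the sub-chord-diagram formed by the chords in \(S\), with the inherited tails. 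Hence it is enough to show \(\pf(A_C)\in\{0,\pm1\}\) for an arbitrary chord diagram with an even number \(k\) of chords.

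We induct on \(k\); the case \(k=0\) is \(\pf(\varnothing)=1\).

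\textbf{Choice of chord.} Choose a chord \(u\) whose open arc from \(u^-\) to \(u^+\) is minimal, in the sense that it contains no complete chord. We may rechoose tails first, which only changes the sign of the Pfaffian. Every chord crossing \(u\) then has exactly one endpoint in that arc. List these crossing chords as \(v_1,\dots,v_r\), in the order their endpoints appear on the arc.

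\textbf{Row operations.} For each \(j\ge2\), replace row and column \(v_j\) by row and column \(v_j\) minus \(\pm\) row and column \(v_1\), with the sign chosen so that entry \((u,v_j)\) becomes \(0\). This congruence by a unimodular matrix preserves the Pfaffian. Row \(u\) then has a single nonzero entry \(\pm1\), in column \(v_1\). Expanding the Pfaffian along row \(u\) gives
\[
  \pf(A_C)=\pm\pf(B),
\]
where \(B\) is the modified matrix with rows and columns \(u,v_1\) deleted.

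\textbf{Main obstacle.} The key step is to show that \(B\) is again the signed interlacement matrix of a chord diagram with \(k-2\) chords, up to resigning. This is the classical local-complementation (pivot) behaviour of circle graphs. Concretely, take the diagram obtained from \(C\) by the pivot on the edge \(uv_1\): reverse the segments between the endpoints of \(u\) and \(v_1\), which is the tour-language operation of flipping at \(u\) and \(v_1\). Then delete \(u\) and \(v_1\).

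Over \(\mathbb F_2\), the entries of \(B\) match the pivoted interlacement matrix. This is the standard fact that Cohn--Lempel transitions act by pivots, as in \cite{CohnLempel1972,Traldi2011}. The delicate part is verifying the signs. An entry \(B(v_i,v_j)=A(v_i,v_j)\mp A(v_1,v_j)\mp\cdots\) must equal \(\pm1\) or \(0\), with a sign pattern consistent with the orientation rule for the new diagram. That means rechecking the orientation rule case by case, according to the relative order of the endpoints of \(v_1,v_i,v_j\) in and out of the arc. This is a finite check of a few alternation patterns.

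If this sign bookkeeping proves unwieldy, the fallback is Bouchet's theorem as cited \cite{Bouchet1987Unimodularity,BouchetCunninghamGeelen1998}. It states directly that the signed adjacency matrix of a circle graph, oriented by the chord-diagram rule, is principally unimodular, i.e. every principal minor lies in \(\{0,\pm1\}\). Combined with nonnegativity of skew-symmetric determinants, that gives \(\{0,1\}\).

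Once the pivoted matrix is identified as a signed circle matrix, the inductive hypothesis gives \(\pf(B)\in\{0,\pm1\}\), which closes the induction. If \(u\) crosses no chord, row \(u\) is zero and \(\pf(A_C)=0\).
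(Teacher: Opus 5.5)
Your fallback is exactly the paper's proof. The paper cites Bouchet's principal-unimodularity theorem for this chord-orientation signing, then uses \(\det(A_C[S,S])=\pf(A_C[S,S])^2\ge0\) to sharpen \(\{0,\pm1\}\) to \(\{0,1\}\). Your main route is genuinely different: a self-contained proof of Bouchet's theorem that expands the Pfaffian along a chord with minimal arc and recognizes the remaining matrix as another signed circle matrix. That route is correct, and the sign check you left open is short. Retail so that every \(v_i\) has its tail in the arc of \(u\); then all multipliers in your row operations are \(+1\). The cyclic word becomes \(u^-\,v_1^-\,Y\,u^+\,Z\,v_1^+\,W\), where \(Y\) holds exactly one endpoint of each chord in \(V_c=\{v_2,\dots,v_r\}\). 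Your matrix is then \(B(x,y)=A(x,y)-[x\in V_c]\,A(v_1,y)+[y\in V_c]\,A(v_1,x)\), which is exactly the Schur complement of the \(\{u,v_1\}\) block. The pivoted diagram with \(u,v_1\) deleted is the cyclic word \(Y\,W\,Z\): the two blocks of the complementary arc on either side of \(v_1^+\) are exchanged, not reversed. So ``reverse the segments'' should read ``exchange the segments''; reversing a segment is the local-complementation move and changes the crossing pattern. Keep the inherited tails and split into cases by whether the non-\(Y\) endpoints of \(x,y\) lie in \(Z\) or \(W\), and by their relative order. This check shows that \(B\) equals the signed interlacement matrix of \(Y\,W\,Z\) exactly, with no resigning, which closes your induction. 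Your route buys a proof for the paper's precise sign convention, rather than relying on that convention matching Bouchet's orientation up to resigning. It also makes visible that the inductive step is the same two-flip segment exchange used in the tour update of \cref{sec:data-structure}. The paper's citation buys brevity.
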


\begin{proof}
The congruence \(A_C\bmod2=I_C\) is immediate from the definition.  The
principal determinant statement is Bouchet's unimodularity theorem for circle
graphs \cite{Bouchet1987Unimodularity}.  The same chord-orientation construction
is stated explicitly in the circle-graph discussion of Bouchet, Cunningham, and
Geelen \cite{BouchetCunninghamGeelen1998}.  Their terminology allows principal
determinants \(0\) or \(\pm1\).  In the present skew-symmetric case,
\(\det(A_C[S,S])=\pf(A_C[S,S])^2\), so the determinants are nonnegative and
hence lie in \(\{0,1\}\).
\end{proof}

\begin{theorem}[Flat skew-determinantal representation]\label{thm:tour-pfaffian}
Let \(H\) be a \(2\)-in/\(2\)-out directed Eulerian multigraph and fix an
Eulerian tour \(C\).  There is a real skew-symmetric matrix
\(A_C\in\R^{V\times V}\) such that for every \(S\subseteq V\),
\[
  \det(A_C[S,S])\in\{0,1\},
  \qquad
  \det(A_C[S,S])=1
  \Longleftrightarrow
  C^S\in\ET(H).
\]
Consequently, the uniform distribution on \(\ET(H)\), represented relative to
\(C\), is
\[
  \mu_{A_C}(S)
  =
  \frac{\det(A_C[S,S])}{\sum_T \det(A_C[T,T])}.
\]
\end{theorem}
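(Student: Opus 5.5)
The plan is to take \(A_C\) to be the signed interlacement matrix constructed above and to combine \cref{thm:circle-pu} with the Cohn--Lempel equality. First, \(A_C\) is real and skew-symmetric by construction, since \(A_C(v,u)=-A_C(u,v)\) and the diagonal is zero. By \cref{thm:circle-pu}, \(\det(A_C[S,S])\in\{0,1\}\) for every \(S\subseteq V\). This already gives the first claim.

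For the equivalence, I will compare the real determinant with its reduction modulo \(2\). The determinant is a polynomial with integer coefficients in the entries. Reducing entries modulo \(2\) therefore commutes with taking determinants:
\[
  \det(A_C[S,S])\bmod 2=\det\nolimits_{\mathbb F_2}\bigl(I_C[S,S]\bigr),
\]
using \(A_C\bmod2=I_C\) from \cref{thm:circle-pu}. Since the real determinant lies in \(\{0,1\}\), it equals \(1\) exactly when it is odd. That happens exactly when \(I_C[S,S]\) is nonsingular over \(\mathbb F_2\). By the Cohn--Lempel equality, nonsingularity means the nullity is \(0\). Then the circuit partition \(C^S\) has exactly one circuit, i.e.\ \(C^S\in\ET(H)\). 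The empty set is consistent with this: \(\det\) of the empty matrix is \(1\), and \(C^\varnothing=C\) is a tour.

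For the consequence, the map \(S\mapsto C^S\) is a bijection from subsets of \(V\) to transition systems of \(H\), as noted in the preliminaries. Hence the uniform distribution on \(\ET(H)\) pulls back to the uniform distribution on \(\{S:\det(A_C[S,S])=1\}\). Because every weight is \(0\) or \(1\), this uniform distribution is exactly the normalized weight \(\mu_{A_C}\). The normalizer \(\sum_T\det(A_C[T,T])=\card{\ET(H)}\) is positive because \(C\) itself is a tour.

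The only step with real content is matching the \(\{0,1\}\) real determinant with the \(\mathbb F_2\) test. It rests entirely on \cref{thm:circle-pu}: without unimodularity, a determinant equal to \(4\), say, would be nonzero over \(\mathbb R\) yet vanish modulo \(2\). With that theorem assumed, the argument is a direct parity check.
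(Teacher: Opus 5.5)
Your proposal is correct and follows essentially the same argument as the paper: Bouchet's unimodularity (\cref{thm:circle-pu}) gives the $\{0,1\}$ values, reduction modulo $2$ matches a $\{0,1\}$-valued real determinant with nonsingularity of $I_C[S,S]$ over $\mathbb F_2$, Cohn--Lempel converts this into the one-circuit condition, and $C=C^\varnothing$ makes the normalizer positive. Your explicit appeal to the bijection $S\mapsto C^S$ for the consequence only spells out a step the paper leaves implicit.
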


\begin{proof}
The Cohn--Lempel equality gives the equivalence between being a single circuit
and nonsingularity of the binary interlacement principal submatrix
\cite{CohnLempel1972,Traldi2011}.  Let \(A_C\) be the oriented interlacement
matrix defined above.  By \cref{thm:circle-pu}, each principal determinant of
\(A_C\) is \(0\) or \(1\).

It remains to match the real test with the binary one.  Since \(A_C\bmod2=I_C\),
we have
\[
  \det(A_C[S,S])\bmod2=\det(I_C[S,S])\quad\text{in }\mathbb F_2.
\]
For a determinant already known to be \(0\) or \(1\), being nonzero over the
reals is therefore equivalent to being nonsingular over \(\mathbb F_2\).
Combining this with Cohn--Lempel proves the theorem.  The denominator is nonzero
because the reference tour \(C=C^\varnothing\) is feasible and
\(\det(A_C[\varnothing,\varnothing])=1\).
\end{proof}

For this particular matrix \(A_C\), every positive weight is equal to \(1\).
Consequently the normalizing constant is exactly the number of feasible
degree-two tours represented relative to \(C\), and each feasible subset is
counted once.  This is the sense in which the distribution is flat.
Thus the abstract skew-determinantal flip--repair walk has a concrete tour
interpretation in degree two.  After flipping one vertex \(i\), the repair step
chooses uniformly among the vertices \(j\) for which flipping \(j\) gives
another one-cycle transition system.  The choice \(j=i\) is always available
and is the self-loop.  This is exactly the degree-two flip--repair chain
defined in \cref{sec:prelim}.  The next section proves that this walk mixes
rapidly, in fact for every real skew-symmetric matrix, not only for matrices
arising from chord diagrams.

\section{The Skew-Determinantal Flip--Repair Walk}\label{sec:pfaffian-walk}

We now prove the mixing theorem used by the sampler.  The Eulerian-tour
matrices from \cref{sec:interlace} are \(0/1\)-weighted examples of the
following more general object, so we state the argument at that level.
Let \(A\in\R^{n\times n}\) be skew-symmetric.  For \(S\subseteq[n]\), write
\(A_S=A_{S,S}\), and set \(\pf(A_S)=0\) when \(\card{S}\) is odd.  Define
\[
  w(S)=\det(A_S)=\pf(A_S)^2,
  \qquad
  Z=\sum_{S\subseteq[n]}w(S).
\]
We use the convention that the empty principal determinant is \(1\), so
\(Z\ge1\).  The equality \(w(S)=\pf(A_S)^2\) also explains why these weights
are nonnegative.
Since odd skew-symmetric matrices have determinant zero, \(w\) is supported on
even sets.  Let
\[
  \Omega=\{S\subseteq[n]:w(S)>0\},
  \qquad
  \mu(S)=w(S)/Z.
\]
We call \(\mu\) the skew-determinantal measure associated with \(A\).

The flip--repair walk first flips one coordinate and then applies the repair
step, which is the time reversal of that flip.  We write this two-step chain as
\(P=KK^*\), using the following \(L^2\) convention: \(K(S,Y)\) is the forward
transition probability from even sets to odd sets, while as an operator on
functions \(K\) averages a function on the odd layer back to the even layer.
Its adjoint \(K^*\), with respect to \(\mu\) and the intermediate measure
\(\nu\), is the repair step from odd sets to even sets.  The forward kernel
sends an even \(S\in\Omega\) to an odd set \(Y\) by flipping a uniformly random
coordinate:
\[
  K(S,Y)=\frac1n\1_{\{\card{S\symdiff Y}=1\}}.
\]
The intermediate measure is
\[
  \nu(Y)=\sum_S\mu(S)K(S,Y)
  =
  \frac1{nZ}\sum_{i=1}^n w(Y\symdiff\{i\}).
\]
For odd \(Y\) with \(\nu(Y)>0\), the reverse kernel is the conditional law of
the previous even state given that the first flip reached \(Y\):
\[
  K^*(Y,T)
  =
  \frac{w(T)}
  {\sum_i w(Y\symdiff\{i\})}
  \qquad
  \text{if }T\in\Omega\text{ and }\card{T\symdiff Y}=1,
\]
and \(K^*(Y,T)=0\) otherwise.  The denominator sums the weights of all even
neighbors of \(Y\); neighbors of weight zero simply contribute nothing.
Thus
\[
  P(S,T)=
  \frac1n
  \sum_{\substack{Y:\ \card{Y\symdiff S}=\card{Y\symdiff T}=1}}
  \frac{w(T)}{\sum_i w(Y\symdiff\{i\})}.
\]
Here \(S,T\in\Omega\).  Every odd \(Y\) that appears in the sum is reachable
from \(S\), so the denominator is positive because it includes the term
\(w(S)\).
The two layers and the repair step are illustrated in
\cref{fig:pfaffian-walk}.
The chain is reversible with stationary distribution \(\mu\): with the
operator convention above, \(P=KK^*\) is self-adjoint in \(L^2(\mu)\), and
Markov kernels of this form preserve constants.
When \(\Omega\) has one state, we use the convention that the spectral gap is
\(1\) and the mixing time is \(0\).

\begin{figure}[t]
  \centering
  \begin{tikzpicture}[
    >=Latex,
    every node/.style={font=\small},
    state/.style={circle, draw=black!65, fill=black!2, minimum size=9mm, inner sep=1pt},
    defect/.style={circle, draw=teal!65!black, fill=teal!7, minimum size=9mm, inner sep=1pt},
    arr/.style={-{Latex[length=2.1mm,width=1.5mm]}, line width=.55pt, black!70},
    band/.style={rounded corners=2pt, fill=black!3, draw=black!18, line width=.45pt},
    defectband/.style={rounded corners=2pt, fill=teal!5, draw=teal!45!black, line width=.45pt}
  ]
    \path[band] (-.82,-2.06) rectangle (.82,2.22);
    \path[defectband] (1.98,-2.06) rectangle (3.62,2.22);
    \path[band] (4.93,-2.06) rectangle (6.57,2.22);

    \node[state] (S) at (0,0) {\(S\)};
    \node[defect] (Y1) at (2.8,.75) {\(Y\)};
    \node[defect] (Y2) at (2.8,-.75) {\(Y'\)};
    \node[state] (T1) at (5.75,1.25) {\(T_1\)};
    \node[state] (T2) at (5.75,.35) {\(T_2\)};
    \node[state] (T3) at (5.75,-.55) {\(T_3\)};
    \node[state] (T4) at (5.75,-1.45) {\(T_4\)};

    \draw[arr] (S) -- node[above, sloped, font=\scriptsize] {flip \(i\)} (Y1);
    \draw[arr] (S) -- node[below, sloped, font=\scriptsize] {flip \(i'\)} (Y2);
    \draw[arr] (Y1) -- node[above, sloped, font=\scriptsize] {\(K^*\)} (T1);
    \draw[arr] (Y1) -- (T2);
    \draw[arr] (Y2) -- (T3);
    \draw[arr] (Y2) -- node[below, sloped, font=\scriptsize] {\(K^*\)} (T4);

    \node[font=\scriptsize, text=black!65] at (0,1.98) {even};
    \node[font=\scriptsize, text=teal!65!black] at (2.8,1.98) {odd defect};
    \node[font=\scriptsize, text=black!65] at (5.75,1.98) {even};
  \end{tikzpicture}
  \caption{One step of the skew-determinantal flip--repair walk.  From an even
  feasible set \(S\), the chain flips a uniformly random coordinate to reach an
  odd intermediate set.  It then samples from the feasible even neighbors of
  that odd set with probabilities proportional to \(w(T)=\det(A_{T,T})\).}
  \label{fig:pfaffian-walk}
\end{figure}

\begin{theorem}[Skew-determinantal spectral gap]\label{thm:spectral-comparison}
For every real skew-symmetric \(n\times n\) matrix \(A\), with \(n\ge2\), the
skew-determinantal flip--repair walk satisfies
\[
  \operatorname{gap}(P)\ge \frac2n.
\]
The constant \(2/n\) is sharp.
\end{theorem}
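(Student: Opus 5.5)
The plan is to pass from \(L^2(\mu)\) to the \(2^n\)-dimensional space spanned by the \(e_S\) using Pfaffian square roots, and to diagonalize the problem there in an ``excitation'' basis built from the normalized vector \(\Psi_A/\norm{\Psi_A}\).

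\textbf{Reduction to the odd layer.} Since \(P=KK^*\), we have
\[
  \cE_P(f,f)=\norm{f}_\mu^2-\norm{K^*f}_\nu^2 ,
\]
so it suffices to prove
\[
  \norm{K^*f}_\nu^2\le(1-2/n)\norm{f}_\mu^2
  \qquad\text{for all } f\perp_\mu 1 .
\]

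\textbf{The isometry.} I would use the map \(U:L^2(\mu)\to\R^{2^n}\) given by
\[
  Uf=Z^{-1/2}\sum_{S\in\Omega} f(S)\pf(A_S)\,e_S .
\]
It is an isometry, and it sends the constant function \(1\) to \(\hat\Psi=\Psi_A/\norm{\Psi_A}\). Introduce signed creation and annihilation operators \(c_i^*,c_i\) on the subset coordinate, with the usual sign \((-1)^{\card{\{j\in S:j<i\}}}\). The signed flips \(\gamma_i=c_i+c_i^*\) are then orthogonal involutions that anticommute.

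For fixed odd \(Y\), the vector \((\pf(A_{Y\symdiff\{i\}}))_i\) is, up to these signs, the \(Y\)-coordinate of \(\gamma_i\Psi_A\). Consequently \(K^*f(Y)\) is a weighted average of the entries \(f(Y\symdiff\{i\})\) with weights \(\pf(A_{Y\symdiff\{i\}})^2\). This gives
\[
  n\norm{K^*f}_\nu^2=\sum_Y\frac{\bigl(\sum_i \pm\pf(A_{Y\symdiff\{i\}})\,(Uf)_{Y\symdiff\{i\}}\bigr)^2}{\sum_i\pf(A_{Y\symdiff\{i\}})^2}\cdot Z .
\]

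\textbf{Excitation basis.} The structural step is to show that
\[
  \Psi_A=\exp\!\Bigl(\tfrac12\sum_{i,j}A_{ij}c_i^*c_j^*\Bigr)e_\varnothing
\]
is a Gaussian vacuum. Using the real block normal form \(A=Q^\top\bigl(\bigoplus_k\lambda_k J\bigr)Q\) with \(Q\) orthogonal, together with covariance of exterior powers under \(Q\), the vector \(\Psi_A\) becomes, after a rotation of modes, a tensor product of two-mode states \(e_\varnothing+\lambda_k e_{\{2k-1,2k\}}\). In these rotated coordinates I would build the quasi-particle operators that annihilate \(\Psi_A\), and show that the vectors obtained by applying products of \(k\) distinct rotated flips to \(\hat\Psi\) form an orthonormal basis \(\{\phi_\alpha\}\) graded by an excitation number \(k\). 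Degree \(0\) is \(\hat\Psi\) alone. Even sets correspond to even \(k\), so every \(f\perp_\mu 1\) has \(Uf\) supported on degrees \(k\ge2\).

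\textbf{Dirichlet bound and the main obstacle.} The key estimate is
\[
  \cE_P(f,f)\ge\frac1n\sum_\alpha k_\alpha\,\ip{Uf}{\phi_\alpha}^2 ,
\]
which gives the gap \(2/n\) immediately. I would derive it by writing \(n\,\cE_P\) as \(\sum_i\norm{\gamma_iUf}^2\) minus, for each odd \(Y\), the squared projection onto the one-dimensional ``local vacuum'' direction \(\bigl(\pf(A_{Y\symdiff\{i\}})\bigr)_i\). The removed part is then controlled through a row-cancellation identity. Knuth's overlapping-Pfaffian identity expresses \(\sum_i\pf(A_{Y\symdiff\{i\}})\pf(A_{Y'\symdiff\{i\}})\) in terms of Pfaffians of overlapping index sets, and I expect it to show that the local projection removes at most \(n-k\) units of mass from a degree-\(k\) excitation.

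This cancellation step is where I expect the real difficulty, because the projection is nonlinear: its denominators depend on \(Y\). My first attempt would be to bound the denominator-weighted projection by Cauchy--Schwarz against the quadratic form \(\sum_i\gamma_i(\cdot)\gamma_i\), which acts diagonally on the excitation basis. I would check that this form has eigenvalue \(n-2k\) on degree-\(k\) vectors before the vacuum correction.

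\textbf{Sharpness.} Take \(A\) to be a direct sum of \(2\times2\) blocks \(\begin{pmatrix}0&1\\-1&0\end{pmatrix}\). The support is then a product of independent pair choices, and the walk resamples a given block with probability \(2/n\). A function depending on one block gives eigenvalue exactly \(1-2/n\), so the constant \(2/n\) cannot be improved.
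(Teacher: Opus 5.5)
Your architecture matches the paper's. It has the isometry $Uf$ (the paper's $h_f/\sqrt Z$), excitations $\xi_T=\gamma_T\widehat\Psi$ made orthonormal via the block normal form, and the reduction of the gap to $\cE_P(f,f)\ge\frac1n\langle Uf,\mathsf N_AUf\rangle$. It also has parity forcing degree $\ge2$ and the block example for sharpness (for odd $n$, add a zero coordinate). Your rewriting
\[
  n\,\cE_P(f,f)=\sum_{Y\text{ odd}}\Bigl(\lVert v_Y\rVert^2-\langle v_Y,\widehat u_Y\rangle^2\Bigr),
  \qquad
  v_Y=\bigl((\gamma_iUf)_Y\bigr)_{i},\quad
  u_Y=\bigl((\gamma_i\widehat\Psi)_Y\bigr)_{i},
\]
is also correct, though your display for $n\lVert K^*f\rVert_\nu^2$ carries a spurious factor $Z$.

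The gap is the step that carries all the content. Let $a$ be the excitation coordinates of $Uf$ and $N_0e_T=\lvert T\rvert e_T$. You need the quadratic-form bound $\sum_Y\langle v_Y,\widehat u_Y\rangle^2\le\langle a,(nI-N_0)a\rangle$, which is the paper's $\mathsf P_A\preceq I-\mathsf N_A/n$. None of your proposed mechanisms gives it:
\begin{itemize}
\item With the natural signs, the full sum $\sum_i\pf(A_{Y\symdiff\{i\}})\pf(A_{Y'\symdiff\{i\}})$ equals $Z\langle u_Y,u_{Y'}\rangle$. That is an overlap of two local vacuum directions, not the quantity controlling the projections.
\item ``A degree-$k$ excitation loses at most $n-k$'' concerns diagonal entries only. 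But $\mathsf P_A$ is in general not diagonal in the excitation basis: for $n=3$, $A_{12}=A_{13}=1$, $A_{23}=0$, one finds $\langle\xi_{\{1,2\}},\mathsf P_A\xi_{\{1,3\}}\rangle=1/18$.
\item Cauchy--Schwarz applied to all of $v_Y$ only gives $\cE_P\ge0$.
\item A form with eigenvalue $n-2k$ cannot by itself bound the removed mass. In your own sharpness example, a one-block excitation loses exactly $n-2>n-4$.
\end{itemize}

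The missing idea is the paper's row isotropy. Write $\gamma_i=\varepsilon_i+\iota_i$ (insertion plus deletion, your $c_i^*+c_i$), and let $R_Y$ be the row of excitation coordinates of $e_Y$. The $\gamma_i$ act on excitation coordinates with the same signs as on subset coordinates, so $(\gamma_iUf)_Y=\langle R_Y,\gamma_ia\rangle$, and $u_Y=\beta_Y$ is the degree-one part of $R_Y$.

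The key fact is $\varepsilon_{\beta_Y}R_Y=0$, i.e.\ $\langle R_Y,\iota_{\beta_Y}a\rangle=0$ for every $a$. In words, the deletion half of $v_Y$ is orthogonal to the local vacuum direction. Coefficientwise, for each even $R\neq\varnothing$ it reads $\sum_{i\in R}\pm\pf(A_{Y\symdiff\{i\}})\pf(A_{Y\symdiff(R\setminus\{i\})})=0$. Setting $Y'=Y\symdiff R$, this is a signed sum of $\pf(A_{Y\symdiff\{i\}})\pf(A_{Y'\symdiff\{i\}})$ over $i\in Y\symdiff Y'$ only. It is the specialization of Knuth's overlapping-Pfaffian identity to the lists $Y\symdiff R$ and $R$. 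So you were looking in the right family of identities, but at the wrong statement.

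Given isotropy, the argument closes in three steps:
\begin{itemize}
\item $\langle v_Y,\widehat u_Y\rangle=\langle R_Y,\varepsilon_{\widehat\beta_Y}a\rangle$.
\item Cauchy--Schwarz in $i$ bounds its square by $\sum_i\langle R_Y,\varepsilon_ia\rangle^2$.
\item Parseval over the orthonormal rows $R_Y$ gives $\sum_i\lVert\varepsilon_ia\rVert^2=\langle a,(nI-N_0)a\rangle$, which is exactly $n-k$ as an operator bound.
\end{itemize}
Your $n-2k$ form presumably belongs here, since $\sum_i\lVert\varepsilon_ia\rVert^2=\tfrac12\bigl(n\lVert a\rVert^2+\langle a,(nI-2N_0)a\rangle\bigr)$. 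But it can only enter after isotropy has discarded the deletion half.
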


\begin{theorem}[Flat log-Sobolev inequality]\label{thm:flat-lsi}
Assume \(n\ge2\) and \(w(S)\in\{0,1\}\) for all \(S\subseteq[n]\).  Then
\[
  \Ent_\mu(f^2)
  \le
  n(\log n+2)\,\cE_P(f,f)
  \qquad\text{for every }f:\Omega\to\R.
\]
Consequently, for every \(0<\eps<1\),
\[
  t_{\mathrm{mix}}(\eps)
  \le
  O\!\left(n\log n[1+\log n+\log(1/\eps)]\right).
\]
The bound is from every starting state in \(\Omega\).
\end{theorem}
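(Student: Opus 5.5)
The plan is to follow the route sketched in the introduction. We work in the \(2^n\)-dimensional space with basis \(e_S\) and the vector \(\Psi=\Psi_A/\norm{\Psi_A}\). Because the weights are flat, \(w(S)=\pf(A_S)^2\in\{0,1\}\), every coordinate of \(\Psi_A\) is \(0\) or \(\pm1\). Hence \(\norm{\Psi_A}^2=\card{\Omega}\), and \(\mu\) is the uniform measure on \(\Omega\).

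\emph{Step 1: excitation coordinates.} The signed flip operators are \(c_i=a_i+a_i^\dagger\), built from the insertion operators \(a_i^\dagger\) and deletion operators \(a_i\), and they anticommute. Applying the products \(c_R=\prod_{i\in R}c_i\) to \(\Psi\) gives vectors \(\phi_R=c_R\Psi\) for \(R\subseteq[n]\). I will take from the spectral-gap part (the proof of \cref{thm:spectral-comparison}) that \(\{\phi_R\}\) is orthonormal and spans the relevant space. This comes from the block normal form \(A=O^\top(\bigoplus \lambda_k J)O\) together with exterior-power covariance, which reduces everything to the product case. I will also take from that proof the Dirichlet-form domination \(\cE_P(f,f)\ge \frac1n\sum_R\card{R}\,\abs{\hat f(R)}^2\).

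\emph{Step 2: identifying functions with vectors.} A function \(f:\Omega\to\R\) corresponds to the vector \(F=\sum_S f(S)\pf(A_S)e_S/\sqrt{\card{\Omega}}\). Under this map \(\norm{F}^2=\E_\mu f^2\), and \(\E_\mu f=\ip{F}{\Psi}\) is the degree-zero coefficient. The state basis is \(\{\pm e_S\}_{S\in\Omega}\), restricted to the even part. The key overlap bound is
\[
\abs{\ip{e_S}{\phi_R}}=\abs{\pf(A_{S\symdiff R})}/\sqrt{\card{\Omega}}\le 1/\sqrt{\card{\Omega}}.
\]
It holds because \(c_R\) maps \(e_T\) to \(\pm e_{T\symdiff R}\), and the resulting Pfaffians lie in \(\{0,\pm1\}\). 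This is exactly where flatness is used.

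\emph{Step 3: uncertainty to LSI.} We may assume \(\E_\mu f^2=1\). Let \(p_S=f(S)^2/\card{\Omega}\) be the distribution in the state basis and \(q_R=\abs{\hat f(R)}^2\) the distribution in the excitation basis. Then \(\Ent_\mu(f^2)=\log\card{\Omega}-H(p)\). Maassen--Uffink gives \(H(p)+H(q)\ge\log\card{\Omega}\), hence \(\Ent_\mu(f^2)\le H(q)\). For a distribution on subsets \(R\), we split \(H(q)\le H(\card{R})+\sum_k q(\card{R}=k)\log\binom nk\). Using \(\log\binom nk\le k\log n\) and \(H(\card R)\le O(\log n)\cdot\Pr[\card R\ge1]+h(q_\varnothing)\) gives roughly \(\log n\sum_R\card{R}q_R\), plus a term controlled by the mass off degree zero. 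That mass is \(1-q_\varnothing=\Var_\mu f\le\sum_R\card{R}q_R\).

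The obstacle is that \(h(q_\varnothing)\) is not bounded by the variance when the variance is small. Rothaus's lemma handles this: \(\Ent(f^2)\le\Ent(\tilde f^2)+2\Var f\) with \(\tilde f=f-\E f\), and \(\tilde f\) has \(q_\varnothing=0\). Applying the uncertainty estimate to \(\tilde f\) gives
\[
\Ent(\tilde f^2)\le(\log n+O(1))\sum_R\card R q_R.
\]
Since \(\cE_P(\tilde f,\tilde f)=\cE_P(f,f)\), combining with Step 1 gives
\[
\Ent_\mu(f^2)\le n(\log n+2)\cE_P(f,f).
\]
The delicate bookkeeping is getting the constant \(2\). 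I would use \(H(q)\le\sum_R q_R\card R\log n\) plus the entropy of the size distribution, bounding the latter via \(\sum_k q_k\log(1/q_k)\le\sum_k k q_k\) when \(q_0=0\). That bound should follow from Gibbs' inequality with the reference distribution \(2^{-k}\), giving \(k\log2\le k\).

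\emph{Step 4: mixing.} With \(\rho=1/(n(\log n+2))\), the standard LSI bound gives
\[
t_{\mathrm{mix}}(\eps)\le\frac1{\rho}\Bigl(\log\log\frac1{\mu_{\min}}+\log\frac1{\eps}+O(1)\Bigr),
\]
for the lazy or continuous-time chain. The chain \(P=KK^*\) is positive semidefinite, so it transfers to discrete time. Here \(\mu_{\min}=1/\card{\Omega}\ge2^{-n}\), so \(\log\log(1/\mu_{\min})\le\log n\), which yields the claimed \(O(n\log n[1+\log n+\log(1/\eps)])\).
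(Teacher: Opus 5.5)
Your proposal is correct and follows the paper's proof. The chain is the same: flat incoherence of the state and excitation bases, Maassen--Uffink giving \(\Ent_\mu(f^2)\le H(q)\), and a centered-entropy bound \(\Ent_\mu(g^2)\le(\log n+1)\Deg_A(g)\). Your chain-rule split with \(\log\binom nk\le k\log n\) and a geometric reference on sizes is an equivalent repackaging of the paper's single relative-entropy bound against weights \(n^{-\card T}\) on nonempty even sets. The rest also matches: Rothaus centering, the comparison \(\cE_P(f,f)\ge\Deg_A(f)/n\), and the positive-semidefinite transfer to discrete time.

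The one step you should make explicit is absorbing Rothaus's \(2\Var_\mu(f)\) term. The constant \(2\) requires \(2\Var_\mu(f)\le\Deg_A(f)\), which holds because every nonempty even excitation has degree at least \(2\) (equivalently, the \(2/n\) gap). The bound \(\Var_\mu f\le\sum_R\card{R}q_R\) that you actually wrote down would only give \(n(\log n+3)\).
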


The proof is a comparison argument after a square-root change of variables.
Instead of analyzing the transition matrix directly on functions
\(f:\Omega\to\R\), we multiply each coordinate by a signed square root of its
weight:
\[
  h_f=\sum_{S\in\Omega} f(S)\pf(A_S)e_S .
\]
The point of this normalization is that the repair step becomes an orthogonal
projection.  The second step is to use the basis obtained by applying signed
coordinate flips to the normalized Pfaffian vector.  We call the resulting
basis vectors excitations, by analogy with Fourier characters on the Boolean
cube.  The excitation degree is measured by a diagonal operator
\(\mathsf N_A\).

The central estimate compares the projection operator of the walk with this
degree operator:
\[
  \mathsf P_A\preceq I-\frac1n\mathsf N_A .
\]
Once this is proved, the spectral gap is almost immediate.  The degree-zero
excitation is the constant direction, and every other even excitation has degree
at least \(2\).  Thus the Dirichlet form is at least \(2/n\) times the variance.

The flat log-Sobolev bound uses the same coordinates.  When all nonzero weights
are \(1\), no state basis vector has a large overlap with any excitation basis
vector.  Entropic uncertainty turns this incoherence into an entropy bound.  In
this order, the rest of the section constructs \(\mathsf P_A\), constructs
\(\mathsf N_A\), compares them, and then applies the same coordinates to
entropy.

\subsection{The Square-Root Operator}

We first rewrite the walk in the square-root coordinates above.  Fix an odd
intermediate set \(Y\).  The repair step averages over the feasible even
neighbors of \(Y\), with weights \(w(S)\).  After replacing \(w(S)\) by
\(\pf(A_S)^2\), this weighted average becomes an orthogonal projection onto a
single vector \(\Phi_Y\).  Averaging over the first flip gives the operator
\(\mathsf P_A\).

Let \(V=\R^n\), with standard basis \(e_1,\ldots,e_n\), and let
\[
  \cH=\bigwedge\nolimits^\bullet V
  =
  \bigoplus_{k=0}^n \bigwedge\nolimits^k V .
\]
For \(S=\{i_1<\cdots<i_k\}\), write
\[
  e_S=e_{i_1}\wedge\cdots\wedge e_{i_k},
  \qquad e_\varnothing=1.
\]
The vectors \(e_S\) form an orthonormal basis.  We use
\(\cH^{\mathrm{even}}\) and \(\cH^{\mathrm{odd}}\) for the spans of the even
and odd basis vectors.

The Pfaffian square-root vector of \(A\) is the vector whose coordinates are the
principal Pfaffians:
\[
  \Psi=\Psi_A=
  \sum_{S\text{ even}} \pf(A_S)e_S .
\]
Equivalently, by the perfect-matching expansion of the Pfaffian
\cite{Knuth1996},
\[
  \Psi=
  \exp\left(\sum_{1\le i<j\le n} A_{ij}e_i\wedge e_j\right).
\]
Therefore \(\norm{\Psi}^2=Z\).  For a function \(f:\Omega\to\R\), put
\[
  h_f=f\Psi
  :=
  \sum_{S\in\Omega} f(S)\pf(A_S)e_S .
\]
Then
\begin{equation}\label{eq:sqrt-embedding}
  \norm{h_f}^2=Z\E_\mu{f^2},
  \qquad
  \ip{h_f}{\Psi}=Z\E_\mu{f}.
\end{equation}

For an odd set \(Y\), let
\[
  \Star(Y)=\set{S\subseteq[n]\given \card{S\symdiff Y}=1\text{ and }S\text{ is even}},
  \qquad
  C(Y)=\sum_{S\in \Star(Y)} w(S).
\]
Define
\[
  \Phi_Y=
  \sum_{S\in \Star(Y)}\pf(A_S)e_S\in\cH^{\mathrm{even}} .
\]
Thus \(\norm{\Phi_Y}^2=C(Y)\).  Let
\(\operatorname{Proj}_u\) denote orthogonal projection onto the line spanned by
a nonzero vector \(u\).  Define the square-root operator
\[
  \mathsf P_A
  =
  \frac1n
  \sum_{\substack{Y\text{ odd}\\ C(Y)>0}}
  \operatorname{Proj}_{\Phi_Y}
  \qquad\text{on }\cH^{\mathrm{even}} .
\]

\begin{lemma}[Square-root representation]\label{lem:square-root-representation}
For every \(f:\Omega\to\R\),
\[
  \ip{f}{Pf}_\mu
  =
  \frac1Z\ip{h_f}{\mathsf P_A h_f}.
\]
\end{lemma}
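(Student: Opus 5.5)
Both sides will be expanded as sums over odd intermediate sets \(Y\) and matched term by term.

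For the left side, \(P=KK^*\) gives \(\ip{f}{Pf}_\mu=\ip{K^*f}{K^*f}_\nu=\sum_Y \nu(Y)\,(K^*f)(Y)^2\), where \((K^*f)(Y)=\sum_{T}K^*(Y,T)f(T)\). Only odd \(Y\) with \(\nu(Y)>0\) contribute. For an odd \(Y\), the even neighbours \(Y\symdiff\{i\}\) are exactly \(\Star(Y)\), so \(\sum_i w(Y\symdiff\{i\})=C(Y)\). Hence \(\nu(Y)=C(Y)/(nZ)\), and \(\nu(Y)>0\) iff \(C(Y)>0\). Substituting the formula for \(K^*\), for each such \(Y\) we get
\[
  \nu(Y)(K^*f)(Y)^2=\frac{C(Y)}{nZ}\Bigl(\frac{1}{C(Y)}\sum_{T\in\Star(Y)\cap\Omega}w(T)f(T)\Bigr)^2
  =\frac{1}{nZ}\,\frac{\bigl(\sum_{T\in\Star(Y)\cap\Omega}w(T)f(T)\bigr)^2}{C(Y)}.
\]

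For the right side, \(\ip{h}{\operatorname{Proj}_u h}=\ip{h}{u}^2/\norm{u}^2\), and \(\norm{\Phi_Y}^2=C(Y)\) has already been noted. So it remains to compute \(\ip{h_f}{\Phi_Y}\). Both vectors are expanded in the orthonormal basis \(e_S\), so the inner product is the sum over \(S\in\Omega\cap\Star(Y)\) of \(f(S)\pf(A_S)\cdot\pf(A_S)=f(S)w(S)\). Sets in \(\Star(Y)\setminus\Omega\) have \(\pf(A_S)=0\) and contribute nothing. This gives
\[
  \frac1Z\ip{h_f}{\mathsf P_A h_f}=\frac1{nZ}\sum_{Y:\,C(Y)>0}\frac{\bigl(\sum_{S\in\Star(Y)\cap\Omega}w(S)f(S)\bigr)^2}{C(Y)},
\]
which matches the left side term by term.

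There is no serious obstacle. The one point needing care is the operator convention: that \(\ip{f}{KK^*f}_\mu=\norm{K^*f}_\nu^2\) with \(K^*\) the \(\nu\)-adjoint. If the convention is unclear, I would instead expand \(\ip{f}{Pf}_\mu=\sum_{S,T}\mu(S)P(S,T)f(S)f(T)\) directly from the displayed formula for \(P(S,T)\). Interchanging sums to group by \(Y\) and using \(\mu(S)=w(S)/Z\) yields \(\frac1{nZ}\sum_Y\frac{1}{C(Y)}\sum_{S,T\in\Star(Y)\cap\Omega}w(S)w(T)f(S)f(T)\), which is the same expression. Odd \(Y\) with \(C(Y)=0\) are unreachable from \(\Omega\), so they are absent from both sides.
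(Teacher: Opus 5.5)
Your proof is correct and is essentially the paper's own argument. Like the paper, you write \(\ip{f}{Pf}_\mu=\norm{K^*f}_{L^2(\nu)}^2\), use \(\nu(Y)=C(Y)/(nZ)\) and \(\ip{h_f}{\Phi_Y}=\sum_{S\in\Star(Y)}f(S)w(S)\) together with \(\norm{\Phi_Y}^2=C(Y)\), and match the two sides term by term over odd \(Y\) with \(C(Y)>0\). Your fallback, expanding \(\sum_{S,T}\mu(S)P(S,T)f(S)f(T)\) directly, is a valid check of the same identity that does not depend on the operator convention.
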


\begin{proof}
For an odd \(Y\),
\[
  \ip{h_f}{\Phi_Y}
  =
  \sum_{S\in \Star(Y)} f(S)\pf(A_S)^2
  =
  \sum_{S\in \Star(Y)} f(S)w(S).
\]
The odd marginal of the first flip is \(\nu(Y)=C(Y)/(nZ)\), and the repair
average at \(Y\) is
\[
  (K^*f)(Y)=
  \frac{\sum_{S\in \Star(Y)} f(S)w(S)}{C(Y)}
\]
when \(C(Y)>0\).  Hence
\[
  \ip{f}{Pf}_\mu
  =
  \norm{K^*f}_{L^2(\nu)}^2
  =
  \frac1{nZ}
  \sum_{\substack{Y\text{ odd}\\ C(Y)>0}}
  \frac{\ip{h_f}{\Phi_Y}^2}{\norm{\Phi_Y}^2},
\]
which is exactly \(Z^{-1}\ip{h_f}{\mathsf P_Ah_f}\).
\end{proof}

\subsection{Excitation Coordinates}

We now build the basis in which the comparison is diagonal.  The analogy is
with the Fourier basis on the Boolean cube.  There the constant function is the
degree-zero vector, and multiplying by coordinate characters produces basis
vectors whose degree is the number of coordinates used.  Here the normalized
Pfaffian vector \(\widehat\Psi=\Psi/\sqrt Z\) plays the role of the constant
vector.  Applying signed coordinate flips to \(\widehat\Psi\) gives the other
basis vectors, and the number of flips applied is the excitation degree.

The signed flips are the standard insertion-plus-deletion operators on subset
vectors.  For \(b\in\R^n\), let
\[
  \varepsilon_b x=b\wedge x,
  \qquad
  \iota_b=\varepsilon_b^*,
  \qquad
  c_b=\varepsilon_b+\iota_b .
\]
Thus \(\varepsilon_b\) inserts \(b\), \(\iota_b\) deletes \(b\), and
\(c_b\) is the signed coordinate flip.  The exterior-algebra relations are
\[
  \varepsilon_b\varepsilon_{b'}+\varepsilon_{b'}\varepsilon_b=0,
  \qquad
  \iota_b\iota_{b'}+\iota_{b'}\iota_b=0,
  \qquad
  \iota_b\varepsilon_{b'}+\varepsilon_{b'}\iota_b=\ip{b}{b'}I.
\]
Equivalently,
\[
  c_bc_{b'}+c_{b'}c_b=2\ip{b}{b'}I .
\]
These are the standard finite-dimensional creation--annihilation relations
\cite{Berezin1966,Bravyi2005}.  We write \(c_i=c_{e_i}\), and for
\(T=\{i_1<\cdots<i_k\}\),
\[
  c_T=c_{i_1}\cdots c_{i_k}.
\]

Set \(\widehat\Psi=\Psi/\sqrt Z\).  The excitation vectors are
\[
  \xi_T=c_T\widehat\Psi,
  \qquad T\subseteq[n].
\]

\begin{lemma}[Excitation basis]\label{lem:excitation-basis}
The vectors \(\{\xi_T:T\subseteq[n]\}\) form an orthonormal basis of \(\cH\).
The vectors with \(T\) even form an orthonormal basis of
\(\cH^{\mathrm{even}}\).
\end{lemma}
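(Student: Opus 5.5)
The plan is to compute Gram entries directly, $\ip{\xi_T}{\xi_{T'}}=Z^{-1}\ip{c_T\Psi}{c_{T'}\Psi}$, and to show they equal $\1_{T=T'}$. Once orthonormality holds, the $2^n$ vectors form an orthonormal basis of the $2^n$-dimensional space $\cH$. Parity then gives the second claim: $\Psi\in\cH^{\mathrm{even}}$, each $c_i$ flips parity, so $\xi_T\in\cH^{\mathrm{even}}$ exactly when $\card{T}$ is even. The even-$T$ vectors are $2^{n-1}$ orthonormal vectors inside the $2^{n-1}$-dimensional space $\cH^{\mathrm{even}}$, hence a basis of it.

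\emph{Reduction via the Clifford relations.} Each $c_i$ is self-adjoint, since $c_b=\varepsilon_b+\varepsilon_b^*$, and $c_i^2=I$, $c_ic_j=-c_jc_i$ for $i\ne j$. Hence $c_T^*c_{T'}=\pm c_{T\symdiff T'}$. It therefore suffices to prove
\[
  \ip{\Psi}{c_U\Psi}=0\qquad\text{for every nonempty }U\subseteq[n],
\]
together with $\ip{\Psi}{\Psi}=Z$, which has already been noted. For $\card U$ odd the vanishing is automatic by parity. The real content is the case $U$ even and nonempty.

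\emph{Reduction to normal form.} Take the real block normal form $A=QBQ^\top$, with $Q$ orthogonal and $B$ a direct sum of blocks $\begin{pmatrix}0&\lambda_k\\-\lambda_k&0\end{pmatrix}$ together with zero blocks. Let $\Lambda(Q)=\bigwedge^\bullet Q$ be the induced orthogonal operator on $\cH$. Because $\Lambda(Q)$ is an algebra automorphism of the wedge product, $\Psi_A=\exp(\sum_{i<j}A_{ij}e_i\wedge e_j)=\Lambda(Q)\Psi_B$. Covariance gives $\Lambda(Q)\varepsilon_b\Lambda(Q)^{-1}=\varepsilon_{Qb}$, and likewise for $\iota_b$ by taking adjoints, so $\Lambda(Q)c_b\Lambda(Q)^{-1}=c_{Qb}$.

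The statement to prove is equivalent to $\ip{\Psi}{c_{b_1}\cdots c_{b_k}\Psi}=0$ for all orthonormal $b_1,\dots,b_k$ with $k\ge1$, by multilinearity and antisymmetry. This family of statements is invariant under $Q$. Equivalently, one can note that $\mathrm{span}\{\xi_T\}$ orthonormality is a basis-independent property of the Clifford action on the ideal generated by $\widehat\Psi$. So it suffices to treat $A=B$.

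\emph{The block case.} With $A=B$, $\Psi_B=\bigotimes_k(1+\lambda_k e_{2k-1}\wedge e_{2k})$, with a factor $1$ on the zero coordinates. The exterior algebra factorizes as a graded tensor product over blocks, and the $c_i$ act blockwise up to signs. The expectation therefore factors as a product over blocks. On a $2\times2$ block with $\psi=1+\lambda e_{12}$ one checks directly:
\begin{itemize}
  \item $c_1\psi=e_1-\lambda e_2$;
  \item $c_2\psi=e_2+\lambda e_1$;
  \item $c_1c_2\psi=e_{12}-\lambda$.
\end{itemize}
Each of these is orthogonal to $\psi$ and to the others. On a zero coordinate $\psi=1$ and $c_i\psi=e_i$, which are again orthogonal. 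Hence $\ip{\Psi_B}{c_U\Psi_B}=0$ for every nonempty $U$.

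\emph{Main obstacle.} The main obstacle is the covariance step. One must check carefully that $\exp$ of the 2-form transforms correctly under $\Lambda(Q)$, and that reducing from coordinate monomials $c_U$ to arbitrary orthonormal frames is legitimate. This uses the fact that the $c_{b_1}\cdots c_{b_k}$ over orthonormal frames span the same operator space as the $c_U$ with $\card U=k$. The block computation itself is routine.
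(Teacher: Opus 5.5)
Your proof is correct, but it is organized differently from the paper's. The paper proves more in the block case: $u_0,c_pu_0,c_qu_0,c_pc_qu_0$ form an orthonormal basis of each block's exterior algebra, and it tensors these bases. It then transfers the whole Gram matrix to a general $A$ by expanding $c_{Qe_{i_1}}\cdots c_{Qe_{i_k}}=\sum_{\card{U}=k}\det(Q_{U,T})\,c_U$ and applying Cauchy--Binet to the compound matrix.

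You instead use self-adjointness and the Clifford relations, $c_T^*c_{T'}=\pm c_{T\symdiff T'}$, to collapse the Gram matrix to the single family of moments $\ip{\Psi}{c_U\Psi}$. The lemma thus becomes the statement that $\widehat\Psi$ has no nontrivial moments. That statement is visibly invariant under rotating the frame. So after passing to normal form you need only that a product of $c_b$'s over an orthonormal frame is a combination of $c_W$ with $\card{W}=k\ge1$, so every term vanishes. No Cauchy--Binet identity is needed, and in the block case only the pairing $\ip{\psi}{c_1c_2\psi}$ has real content.

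The paper's route, in exchange, exhibits the excitation basis explicitly in block coordinates. Both arguments rest on the same inputs. These are the covariance $\Lambda(Q)c_b\Lambda(Q)^{-1}=c_{Qb}$, the identity $\Lambda(Q)\Psi_B=\Psi_A$, and the absence of contraction terms for orthonormal frames, which you correctly flag as the point needing care.

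One computational slip should be fixed. With $\iota_b=\varepsilon_b^*$ one has $\iota_1(e_1\wedge e_2)=e_2$ and $\iota_2(e_1\wedge e_2)=-e_1$. Hence $c_1\psi=e_1+\lambda e_2$ and $c_2\psi=e_2-\lambda e_1$, so both of your listed odd vectors have the wrong sign on the $\lambda$ term. This is not harmless bookkeeping: applying $c_1$ to your $c_2\psi=e_2+\lambda e_1$ gives $e_{12}+\lambda$, whose inner product with $\psi$ is $2\lambda\ne0$. The value $c_1c_2\psi=e_{12}-\lambda$ that you state, which is the one the argument needs, is correct. It follows from the correct $c_2\psi$, not from the one you wrote.
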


\begin{proof}
We first check orthonormality in the canonical block form, where the calculation
factors into two-dimensional pieces, and then reduce the general case to this
one by an orthogonal change of coordinates.

In the block case, \(A\) is a direct sum of \(2\times2\) skew blocks
\(\begin{psmallmatrix}0&a_r\\-a_r&0\end{psmallmatrix}\), with possibly one
additional zero row and column.  Fix one block \(\{p,q\}\), write \(a=a_r\),
and set \(r=\sqrt{1+a^2}\).  The normalized factor of
\(\widehat\Psi\) on this block is
\[
  u_0=\frac{1+a\,e_p\wedge e_q}{r}.
\]
Using the displayed anticommutation relations, or equivalently the definitions
of insertion and deletion, the four vectors obtained from \(u_0\) are
\[
\begin{aligned}
  u_0
  &=
  \frac{1+a\,e_p\wedge e_q}{r},
  &
  c_pu_0
  &=
  \frac{e_p+a e_q}{r},
  \\
  c_qu_0
  &=
  \frac{-a e_p+e_q}{r},
  &
  c_pc_qu_0
  &=
  \frac{-a+e_p\wedge e_q}{r}.
\end{aligned}
\]
Each has norm one.  The two even vectors are orthogonal because
\(\ip{1+a\,e_p\wedge e_q}{-a+e_p\wedge e_q}=0\), and the two odd vectors are
orthogonal because \(\ip{e_p+a e_q}{-a e_p+e_q}=0\).  Even and odd vectors are
orthogonal to each other.  Thus \(u_0,c_pu_0,c_qu_0,c_pc_qu_0\) form an
orthonormal basis of \(\bigwedge^\bullet\R^{\{p,q\}}\).  Tensoring these bases
over all blocks, and using the usual basis on any zero coordinate, proves the
block case.

For a general skew-symmetric \(A\), choose an orthogonal matrix \(Q\) such that
\(B=QAQ^\mathsf T\) is in this block form \cite{HornJohnson2012}.  Let
\(\Lambda(Q)\) be the induced action of \(Q\) on exterior powers:
\[
  \Lambda(Q)(v_1\wedge\cdots\wedge v_k)
  =
  Qv_1\wedge\cdots\wedge Qv_k .
\]
By Cauchy--Binet, \(\Lambda(Q)\) is orthogonal.  It satisfies
\[
  \Lambda(Q)\varepsilon_b=\varepsilon_{Qb}\Lambda(Q),
  \qquad
  \Lambda(Q)\iota_b=\iota_{Qb}\Lambda(Q),
  \qquad
  \Lambda(Q)c_b=c_{Qb}\Lambda(Q).
\]
The Pfaffian vector is covariant under the same change of coordinates:
\[
  \Lambda(Q)\Psi_A=\Psi_{QAQ^\mathsf T}=\Psi_B.
\]
In coordinates this is the Pfaffian Cauchy--Binet, or minor-summation,
identity \cite{Knuth1996}.  In the present notation it follows directly from
the exterior exponential formula for \(\Psi_A\).  Let
\[
  \omega_A=\sum_{i<j}A_{ij}e_i\wedge e_j
  =
  \frac12\sum_{i,j}A_{ij}e_i\wedge e_j .
\]
Since \(\Lambda(Q)\) respects wedge products,
\[
\begin{aligned}
  \Lambda(Q)\omega_A
  &=
  \frac12\sum_{i,j}A_{ij}(Qe_i)\wedge(Qe_j)  \\
  &=
  \frac12\sum_{r,s}(QAQ^\mathsf T)_{rs}e_r\wedge e_s
  =
  \omega_{QAQ^\mathsf T}.
\end{aligned}
\]
Therefore
\[
  \Lambda(Q)\Psi_A
  =
  \Lambda(Q)\exp(\omega_A)
  =
  \exp(\Lambda(Q)\omega_A)
  =
  \exp(\omega_{QAQ^\mathsf T})
  =
  \Psi_{QAQ^\mathsf T}.
\]

Write \(\xi_T^A\) and \(\widehat\Psi_B\) for the excitation vector defined from
\(A\) and the normalized Pfaffian vector defined from \(B\).  Then
\[
  \Lambda(Q)\xi_T^A
  =
  c_{Qe_{i_1}}\cdots c_{Qe_{i_k}}\widehat\Psi_B .
\]
Let \(g_j=c_{Qe_j}\).  Since the columns of \(Q\) are orthonormal, the
operators \(g_j\) satisfy the same relations as the \(c_j\)'s:
\[
  g_jg_\ell+g_\ell g_j=2\delta_{j\ell}I.
\]
Thus, for distinct \(i_1,\ldots,i_k\), all contraction terms vanish in
\(g_{i_1}\cdots g_{i_k}\): it is the signed-toggle image of the exterior
product
\[
  (Qe_{i_1})\wedge\cdots\wedge(Qe_{i_k})
\]
under the map \(e_U\mapsto c_U\).  Expanding this exterior product gives
\[
  c_{Qe_{i_1}}\cdots c_{Qe_{i_k}}
  =
  \sum_{\substack{U\subseteq[n]\\ \card{U}=k}}
  \det(Q_{U,T})\,c_U .
\]
Thus the Gram matrix of the vectors
\(c_{Qe_{i_1}}\cdots c_{Qe_{i_k}}\widehat\Psi_B\) is the compound-matrix
Gram matrix
\[
  \sum_{\substack{U\subseteq[n]\\ \card{U}=k}}
  \det(Q_{U,T})\det(Q_{U,T'}),
\]
which is \(\delta_{T,T'}\) by Cauchy--Binet.  Orthogonality for \(A\) follows
by applying \(\Lambda(Q)^*\).  Since each \(c_i\) changes parity, \(\xi_T\)
lies in \(\cH^{\mathrm{even}}\) exactly when \(T\) is even.
\end{proof}

Define the excitation number operator by
\[
  \mathsf N_A\xi_T=\card{T}\,\xi_T.
\]
For \(f:\Omega\to\R\), define its average excitation degree by
\[
  \Deg_A(f)
  =
  \frac1Z\ip{h_f}{\mathsf N_A h_f}.
\]
Thus \(\mathsf N_A\) is simply the degree operator in the excitation basis.  This
is the usual number operator in finite fermionic Fock space after the orthogonal
change of generators determined by \(\widehat\Psi\)
\cite{Berezin1966,Bravyi2005}, but the proof uses only the displayed diagonal
form.

\subsection{Projection versus Number}

We now prove the central inequality
\(\mathsf P_A\preceq I-\mathsf N_A/n\).  We need to understand each projection
vector \(\Phi_Y\) in excitation coordinates.

Write an even vector \(h\in\cH^{\mathrm{even}}\) in excitation coordinates as
\[
  h=\sqrt Z\sum_{T\text{ even}} a_T\xi_T,
  \qquad
  a=\sum_{T\text{ even}} a_Te_T .
\]
For an odd set \(Y\), define its coordinate row in the excitation basis by
\[
  R_Y=
  \sum_{T\text{ odd}} \ip{e_Y}{\xi_T}e_T
  \in\bigwedge\nolimits^{\mathrm{odd}}\R^n.
\]
The vectors \(R_Y\), over odd \(Y\), form an orthonormal basis of the odd
coordinate space; \(R_Y\) is the row of the change-of-basis matrix indexed by
the ordinary odd basis vector \(e_Y\).  Let
\[
  \beta_Y=\pi_1R_Y\in\bigwedge\nolimits^1\R^n\simeq\R^n,
\]
where \(\pi_1\) is orthogonal projection onto degree one.
The point of \(\beta_Y\) is that this degree-one part contains exactly the
information needed to express the projection at \(Y\).

\begin{lemma}[Projection-coordinate identity]\label{lem:projection-coordinate}
For every odd \(Y\),
\[
  C(Y)=Z\norm{\beta_Y}^2.
\]
Moreover, if \(h=\sqrt Z\sum_Ta_T\xi_T\), then
\[
  \ip{h}{\Phi_Y}=Z\ip{R_Y}{c_{\beta_Y}a}.
\]
Consequently, when \(C(Y)>0\) and
\(\widehat\beta_Y=\beta_Y/\norm{\beta_Y}\),
\[
  \frac1Z\frac{\ip{h}{\Phi_Y}^2}{\norm{\Phi_Y}^2}
  =
  \abs*{\ip{R_Y}{c_{\widehat\beta_Y}a}}^2 .
\]
\end{lemma}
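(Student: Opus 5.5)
The plan is to write \(\Phi_Y\) using the signed flips \(c_i\), and then move everything into excitation coordinates through an isometry that commutes with the flips.

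\emph{Step 1: \(\Phi_Y\) in terms of flips.} For odd \(Y\) and each \(i\), the flip \(c_ie_Y\) is \(\pm e_{Y\symdiff\{i\}}\). As \(i\) ranges over \([n]\), these vectors are orthonormal. The even sets \(Y\symdiff\{i\}\) are exactly \(\Star(Y)\), and \(\pf(A_{Y\symdiff\{i\}})=\pm\ip{\Psi}{e_{Y\symdiff\{i\}}}\) with the matching sign. Since \(c_i\) is self-adjoint,
\[
  \Phi_Y=\sum_{i=1}^n \ip{\Psi}{c_ie_Y}\,c_ie_Y
  =\sum_{i=1}^n \ip{c_i\Psi}{e_Y}\,c_ie_Y .
\]
The point is that the sign ambiguity cancels: it appears once in the coefficient and once in the vector.

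\emph{Step 2: identify the coefficients with \(\beta_Y\).} By definition \(c_i\Psi=\sqrt Z\,\xi_{\{i\}}\), so
\[
  \ip{c_i\Psi}{e_Y}=\sqrt Z\,\ip{e_Y}{\xi_{\{i\}}}=\sqrt Z\,(\beta_Y)_i .
\]
Here we use that the degree-one part of \(R_Y\) records exactly the overlaps of \(e_Y\) with \(\xi_{\{i\}}\). Because the \(c_ie_Y\) are orthonormal, this gives
\[
  C(Y)=\norm{\Phi_Y}^2=Z\norm{\beta_Y}^2 .
\]
It also gives, for any \(h\),
\[
  \ip{h}{\Phi_Y}=\sqrt Z\sum_i(\beta_Y)_i\ip{c_ih}{e_Y}=\sqrt Z\,\ip{c_{\beta_Y}h}{e_Y},
\]
using linearity of \(b\mapsto c_b\).

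\emph{Step 3: the intertwining isometry.} Define \(U:\cH\to\cH\) by \(Ue_T=\xi_T\). By \cref{lem:excitation-basis}, \(U\) is orthogonal. The key claim is that \(Uc_i=c_iU\) for every \(i\). To see this, note that for \(T=\{i_1<\cdots<i_k\}\) we have \(c_T1=e_T\), because no deletion terms survive for distinct indices. Hence both \(c_ie_T=c_ic_T1\) and \(c_ic_T\) reduce, via the Clifford relation \(c_ic_j+c_jc_i=2\delta_{ij}I\), to the same sign times \(e_{T\symdiff\{i\}}\) and \(c_{T\symdiff\{i\}}\), respectively. Applying this to \(\widehat\Psi\) gives \(c_i\xi_T=Uc_ie_T\).

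With \(h=\sqrt Z\,Ua\), this yields
\[
  \ip{c_{\beta_Y}h}{e_Y}=\sqrt Z\,\ip{Uc_{\beta_Y}a}{e_Y}=\sqrt Z\,\ip{c_{\beta_Y}a}{U^*e_Y}.
\]
Finally \(U^*e_Y=\sum_T\ip{e_Y}{\xi_T}e_T=R_Y\); the sum runs over odd \(T\) by parity. Combining with Step 2 gives \(\ip{h}{\Phi_Y}=Z\ip{R_Y}{c_{\beta_Y}a}\). Dividing the square of this by \(Z\norm{\Phi_Y}^2=Z^2\norm{\beta_Y}^2\) gives the last displayed identity, with \(\widehat\beta_Y\).

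I expect the main obstacle to be the sign bookkeeping in Step 3. One must verify that the sign produced when \(c_i\) acts on \(c_T\) as a product of operators agrees exactly with the sign produced when \(c_i\) acts on the basis vector \(e_T\). Routing both through \(c_T1=e_T\) settles this, since it makes the two computations literally the same Clifford reduction.
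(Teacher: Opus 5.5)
Your proof is correct and is essentially the paper's argument. Both identify \((\beta_Y)_i=\langle e_Y, c_i\widehat\Psi\rangle\), let the exterior signs of \(c_ie_Y\) square away against the Pfaffian signs, and use that \(c_i\) acts on excitation coordinates by the same signed toggle as on ordinary coordinates (your \(Uc_i=c_iU\)); your compact form \(\Phi_Y=\sqrt Z\,c_{\beta_Y}e_Y\) and your derivation of the intertwining from \(c_T1=e_T\) and the Clifford relations just spell out a step the paper asserts in one line.
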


\begin{proof}
The vector \(R_Y\) is the row of the change-of-basis matrix from excitation
coordinates to ordinary odd subset coordinates.  Its degree-one part
\(\beta_Y\) has one coordinate for each even neighbor \(Y\symdiff\{i\}\) of
\(Y\), with the corresponding Pfaffian sign.
For each coordinate \(i\),
\[
  (\beta_Y)_i
  =
  \ip{R_Y}{e_i}
  =
  \ip{e_Y}{c_i\widehat\Psi}
  =
  \frac{(c_i\Psi)_Y}{\sqrt Z}.
\]
For fixed \(Y\) and \(i\), let \(s(Y,i)\in\{\pm1\}\) be the exterior sign with
\[
  (c_i x)_Y=s(Y,i)x_{Y\symdiff\{i\}}
\]
for every vector \(x\) in the subset basis.  Therefore
\[
  (c_i\Psi)_Y^2=w(Y\symdiff\{i\}),
\]
and summing over \(i\) gives \(C(Y)=Z\norm{\beta_Y}^2\).

The same sign convention applies in excitation coordinates: \(c_i\xi_T\)
changes the index \(T\) with the same exterior sign as \(c_i e_T\) changes the
ordinary basis vector \(e_T\).  Therefore \(c_i a\) is the
excitation-coordinate vector of \(c_i h/\sqrt Z\).  Hence
\[
  \ip{R_Y}{c_i a}
  =
  \frac{(c_i h)_Y}{\sqrt Z}.
\]
It follows that
\[
\begin{aligned}
  \ip{R_Y}{c_{\beta_Y}a}
  &=
  \sum_i (\beta_Y)_i\ip{R_Y}{c_i a} \\
  &=
  \frac1Z\sum_i (c_i\Psi)_Y(c_i h)_Y \\
  &=
  \frac1Z\sum_{S\in \Star(Y)} \pf(A_S)h_S \\
  &=
  \frac1Z\ip{h}{\Phi_Y}.
\end{aligned}
\]
The third line uses the sign notation above: if \(S=Y\symdiff\{i\}\), then
\((c_i\Psi)_Y(c_i h)_Y=s(Y,i)^2\pf(A_S)h_S=\pf(A_S)h_S\).  The normalized
formula follows from \(\norm{\Phi_Y}^2=C(Y)=Z\norm{\beta_Y}^2\).
\end{proof}

The next lemma is the only cancellation needed for the comparison.  It says
that each row \(R_Y\) is killed by insertion in the direction of its degree-one
part \(\beta_Y\).  The cancellation is the following specialization of
Knuth's overlapping-Pfaffian relation \cite{Knuth1996}.  We write the needed
special case explicitly in the proof, after the signs have been fixed.

\begin{lemma}[Row isotropy]\label{lem:row-isotropy}
For every odd set \(Y\),
\[
  \varepsilon_{\beta_Y}R_Y=0.
\]
\end{lemma}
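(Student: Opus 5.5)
The plan is to convert the isotropy statement into a quadratic identity among principal Pfaffians of \(A\), and then recognize that identity as the special case of Knuth's overlapping-Pfaffian relation \cite{Knuth1996} mentioned before the statement. The Pfaffian vector \(\Psi\) is not covariant in a way that fixes the coordinate set \(Y\), because both \(\beta_Y\) and \(R_Y\) are defined through the ordinary basis vector \(e_Y\). So I will not reduce to the block normal form, as in \cref{lem:excitation-basis}. I will work directly in subset coordinates instead.

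\emph{Step 1: coordinates of \(R_Y\).} For odd \(T\), the coordinate \((R_Y)_T=\ip{e_Y}{c_T\widehat\Psi}\) equals \(\ip{c_T^*e_Y}{\widehat\Psi}\). Since each \(c_i\) is self-adjoint, \(c_T^*=\pm c_T\). Moreover \(c_i e_X=s(X,i)e_{X\symdiff\{i\}}\), with the exterior sign \(s\) used in the proof of \cref{lem:projection-coordinate}. Hence
\[
  (R_Y)_T=\sigma(Y,T)\,\frac{\pf(A_{Y\symdiff T})}{\sqrt Z}
\]
for an explicit sign \(\sigma(Y,T)\in\{\pm1\}\), which is a product of exterior signs. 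Specializing to \(T=\{i\}\) gives \((\beta_Y)_i=\sigma(Y,\{i\})\pf(A_{Y\symdiff\{i\}})/\sqrt Z\), which agrees with the proof of \cref{lem:projection-coordinate}.

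\emph{Step 2: expand the wedge.} For every even \(U\subseteq[n]\), the \(e_U\)-coordinate of \(\varepsilon_{\beta_Y}R_Y=\beta_Y\wedge R_Y\) is
\[
  \sum_{i\in U}(-1)^{\card{\{j\in U:j<i\}}}(\beta_Y)_i\,(R_Y)_{U\setminus\{i\}} .
\]
After multiplying by \(Z\), the claim becomes
\[
  \sum_{i\in U}\theta(Y,U,i)\,\pf(A_{Y\symdiff\{i\}})\,\pf(A_{Y\symdiff U\symdiff\{i\}})=0
\]
for every even \(U\), where \(\theta\) collects the wedge sign and the two \(\sigma\)'s. The case \(U=\varnothing\) is vacuous, since then there are no terms.

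\emph{Step 3: the Pfaffian exchange relation.} Put \(S_1=Y\) and \(S_2=Y\symdiff U\). Both are odd and \(S_1\symdiff S_2=U\). The identity above is then exactly the Wenzel/Knuth Grassmann--Plücker relation for principal Pfaffians in the case of two odd sets:
\[
  \sum_{i\in S_1\symdiff S_2}\pm\,\pf(A_{S_1\symdiff\{i\}})\,\pf(A_{S_2\symdiff\{i\}})=0 .
\]
One could also prove it by noting that \(\Psi=\exp(\omega_A)\) satisfies the quadratic Cartan pure-spinor relation \(\sum_i(\varepsilon_i\Psi\otimes\iota_i\Psi+\iota_i\Psi\otimes\varepsilon_i\Psi)=0\). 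I would verify this by checking it on \(\exp\) of a \(2\)-form: the operator \(\sum_i(\varepsilon_i\otimes\iota_i+\iota_i\otimes\varepsilon_i)\) is \(O(n)\)-invariant, so the relation can be checked in block form. The paper cites Knuth, so I will invoke that identity directly and use the spinor argument only as a cross-check.

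\emph{Main obstacle.} The main obstacle is the sign bookkeeping. The combined sign \(\theta(Y,U,i)\) must match the alternating sign in Knuth's relation for every \(i\in U\), up to a single factor independent of \(i\). Both signs count inversions between \(i\) and the elements of \(Y\) and \(U\), so I would fix the convention \(s(X,i)=(-1)^{\card{\{j\in X:j<i\}}}\) and compare the two signs term by term. If this becomes unwieldy, the fallback is the invariant route: \(\varepsilon_{\beta_Y}R_Y=0\) is equivalent to \(\ip{e_Y\otimes e_Y}{\sum_i(\cdots)}=0\) for the Cartan relation above, and that relation is sign-free once it is established.
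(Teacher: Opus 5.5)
Your proposal is correct and follows essentially the same route as the paper. Both compute \((R_Y)_T=\pm\pf(A_{Y\symdiff T})/\sqrt Z\), expand the \(e_U\)-coefficient of \(\beta_Y\wedge R_Y\), and identify the resulting quadratic sum as Knuth's overlapping-Pfaffian relation for the odd sets \(Y\) and \(Y\symdiff U\), leaving the sign matching at about the level of detail the paper itself gives. One correction to your fallback: the lemma is encoded by the Cartan-relation components at \(e_Y\otimes e_{Y\symdiff U}\), not at \(e_Y\otimes e_Y\) (that component only gives the vacuous case \(U=\varnothing\)), and reading them off still needs the same insertion/deletion sign comparison, so that route does not avoid the bookkeeping.
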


\begin{proof}
For odd \(T\), define the sign
\[
  \chi_Y(T)=\ip{e_Y}{c_Te_{Y\symdiff T}}\in\{\pm1\}.
\]
Since \(\xi_T=c_T\Psi/\sqrt Z\),
\begin{equation}\label{eq:row-coordinate}
  \ip{R_Y}{e_T}
  =
  \ip{e_Y}{\xi_T}
  =
  \frac1{\sqrt Z}\chi_Y(T)\pf(A_{Y\symdiff T}).
\end{equation}
Write
\[
  \operatorname{ins}(i,U)
  =
  (-1)^{\card{\set{j\in U\given j<i}}}
\]
for the sign in \(\varepsilon_i e_U=\operatorname{ins}(i,U)e_{U\cup\{i\}}\).

Fix an even set \(R\).  The coefficient of \(e_R\) in
\(\varepsilon_{\beta_Y}R_Y\) is
\[
  \sum_{i\in R}
  \operatorname{ins}(i,R\setminus\{i\})
  \ip{R_Y}{e_i}
  \ip{R_Y}{e_{R\setminus\{i\}}}.
\]
By \cref{eq:row-coordinate}, this is \(1/Z\) times
\[
  \sum_{i\in R}
  \sigma_Y(R,i)
  \pf(A_{Y\symdiff\{i\}})
  \pf(A_{Y\symdiff(R\setminus\{i\})}),
\]
where
\[
  \sigma_Y(R,i)
  =
  \operatorname{ins}(i,R\setminus\{i\})
  \chi_Y(\{i\})
  \chi_Y(R\setminus\{i\}).
\]
Knuth's overlapping-Pfaffian identity, specialized to the two ordered lists
\(Y\symdiff R\) and \(R\) and written in this ordered-subset convention, gives
\begin{equation}\label{eq:knuth-specialized}
  \sum_{i\in R}
  \sigma_Y(R,i)
  \pf(A_{Y\symdiff\{i\}})
  \pf(A_{Y\symdiff(R\setminus\{i\})})
  =0.
\end{equation}
Elements common to the two lists are retained in both lists; this is exactly
the overlapping case in Knuth's relation.  The sign \(\sigma_Y(R,i)\) is the
sign obtained by moving the singled-out element \(i\) through the ordered subset
basis, once for the insertion \(e_{R\setminus\{i\}}\mapsto e_R\) and once for
each of the two toggle coefficients in \(\chi_Y\).
The sum is zero by the specialized overlapping-Pfaffian identity
\cref{eq:knuth-specialized}.  Hence every coefficient of
\(\varepsilon_{\beta_Y}R_Y\) is zero.
\end{proof}

\begin{lemma}[Number domination]\label{lem:number-domination}
As quadratic forms on \(\cH^{\mathrm{even}}\),
\[
  \mathsf P_A\preceq I-\frac1n\mathsf N_A.
\]
\end{lemma}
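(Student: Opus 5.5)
The plan is to write both sides of the inequality in excitation coordinates and reduce the claim to a Cauchy--Schwarz estimate. The key input is \cref{lem:row-isotropy}: it removes the deletion half of the flip operator \(c_{\widehat\beta_Y}\), leaving only insertions. Insertions are easy to sum because \(\sum_i\iota_i\varepsilon_i\) acts diagonally.

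Fix \(h\in\cH^{\mathrm{even}}\) and write \(h=\sqrt Z\sum_{T\text{ even}}a_T\xi_T\).

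\textbf{Step 1: the right-hand side.} By \cref{lem:excitation-basis}, the \(\xi_T\) are orthonormal. Hence
\[
\frac1Z\ip{h}{(I-\tfrac1n\mathsf N_A)h}=\sum_T\Bigl(1-\frac{\card{T}}n\Bigr)a_T^2 .
\]

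\textbf{Step 2: the left-hand side.} By the definition of \(\mathsf P_A\) and the last display of \cref{lem:projection-coordinate},
\[
\frac1Z\ip{h}{\mathsf P_Ah}=\frac1n\sum_{Y\text{ odd},\,C(Y)>0}\abs*{\ip{R_Y}{c_{\widehat\beta_Y}a}}^2 .
\]

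\textbf{Step 3: isotropy kills the deletion term.} Split \(c_{\widehat\beta_Y}=\varepsilon_{\widehat\beta_Y}+\iota_{\widehat\beta_Y}\). Since \(\iota_b=\varepsilon_b^*\),
\[
\ip{R_Y}{\iota_{\widehat\beta_Y}a}=\ip{\varepsilon_{\widehat\beta_Y}R_Y}{a}=0
\]
by \cref{lem:row-isotropy}, after rescaling \(\beta_Y\) to unit length. Thus each summand equals \(\abs{\ip{R_Y}{\varepsilon_{\widehat\beta_Y}a}}^2\). Expand \(\varepsilon_{\widehat\beta_Y}a=\sum_i(\widehat\beta_Y)_i\varepsilon_i a\) and apply Cauchy--Schwarz with \(\norm{\widehat\beta_Y}=1\):
\[
\abs{\ip{R_Y}{\varepsilon_{\widehat\beta_Y}a}}^2\le\sum_{i=1}^n\abs{\ip{R_Y}{\varepsilon_i a}}^2 .
\]

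\textbf{Step 4: sum over all odd \(Y\).} Extend the sum to all odd \(Y\), including those with \(C(Y)=0\); this only adds nonnegative terms. The \(R_Y\) form an orthonormal basis of the odd coordinate space, and each \(\varepsilon_i a\) is odd since \(a\) is even. Parseval therefore gives
\[
\sum_Y\abs{\ip{R_Y}{\varepsilon_ia}}^2=\norm{\varepsilon_ia}^2 .
\]
Finally, \(\norm{\varepsilon_ia}^2=\ip{a}{\iota_i\varepsilon_ia}\), and \(\iota_i\varepsilon_ie_T\) equals \(e_T\) if \(i\notin T\) and \(0\) otherwise. Summing over \(i\),
\[
\sum_i\norm{\varepsilon_ia}^2=\sum_T(n-\card{T})a_T^2 .
\]
Dividing by \(n\) gives exactly the right-hand side from Step 1, which proves the lemma.

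The main obstacle is Step 3. One must check that the deletion contribution vanishes exactly, rather than merely being small. This is precisely the content of \cref{lem:row-isotropy}, so the remaining steps reduce to bookkeeping.
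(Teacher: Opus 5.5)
Your proposal is correct and follows the paper's proof step for step. You rewrite each rank-one projection via the projection-coordinate identity, use row isotropy to drop the deletion part of \(c_{\widehat\beta_Y}\), apply Cauchy--Schwarz in \(i\), sum over odd \(Y\) using orthonormality of the rows \(R_Y\), and evaluate \(\sum_i\iota_i\varepsilon_i=nI-N_0\); the only differences are cosmetic, namely computing the right-hand side in excitation coordinates first and making explicit the harmless extension of the sum to all odd \(Y\).
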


\begin{proof}
At this point the proof is a Bessel-type estimate.  The projection-coordinate
identity expresses each rank-one projection in \(\mathsf P_A\) as the square of
an inner product against \(c_{\widehat\beta_Y}a\).  Row isotropy removes the
deletion part of \(c_{\widehat\beta_Y}\), leaving only insertions.  Summing over
\(Y\) then uses orthonormality of the rows \(R_Y\).

Let \(h=\sqrt Z\sum_Ta_T\xi_T\) be even.  By the definition of
\(\mathsf P_A\) and \cref{lem:projection-coordinate},
\begin{equation}\label{eq:projection-number-start}
  \frac1Z\ip{h}{\mathsf P_Ah}
  =
  \frac1n
  \sum_{\substack{Y\text{ odd}\\ \beta_Y\ne0}}
  \abs*{\ip{R_Y}{c_{\widehat\beta_Y}a}}^2 .
\end{equation}
For such a \(Y\), row isotropy gives
\(\varepsilon_{\widehat\beta_Y}R_Y=0\).  Since
\(\iota_{\widehat\beta_Y}^*=\varepsilon_{\widehat\beta_Y}\),
\[
  \ip{R_Y}{\iota_{\widehat\beta_Y}a}
  =
  \ip{\varepsilon_{\widehat\beta_Y}R_Y}{a}
  =0.
\]
Thus \(c_{\widehat\beta_Y}\) may be replaced by
\(\varepsilon_{\widehat\beta_Y}\) in \cref{eq:projection-number-start}.  By
Cauchy--Schwarz in the coordinate \(i\),
\[
  \abs*{\ip{R_Y}{\varepsilon_{\widehat\beta_Y}a}}^2
  =
  \abs*{\sum_i (\widehat\beta_Y)_i\ip{R_Y}{\varepsilon_i a}}^2
  \le
  \sum_i\abs{\ip{R_Y}{\varepsilon_i a}}^2 .
\]
Summing over \(Y\) and using that the \(R_Y\)'s form an orthonormal basis of the
odd coordinate space gives
\[
  \frac1Z\ip{h}{\mathsf P_Ah}
  \le
  \frac1n\sum_i\norm{\varepsilon_i a}^2.
\]
If \(N_0e_T=\card{T}e_T\) is the ordinary degree operator, then
\[
  \sum_i\varepsilon_i^*\varepsilon_i
  =
  \sum_i\iota_i\varepsilon_i
  =
  nI-N_0,
\]
because \(\iota_i\varepsilon_i\) is the identity on basis vectors not
containing \(i\) and is zero on basis vectors containing \(i\).  Hence
\[
  \frac1Z\ip{h}{\mathsf P_Ah}
  \le
  \frac1n\ip{a}{(nI-N_0)a}
  =
  \frac1Z\ip*{h}{\left(I-\frac1n\mathsf N_A\right)h}.
\]
This proves the quadratic-form domination.
\end{proof}

\begin{lemma}[Degree comparison]\label{lem:degree-comparison}
For every \(f:\Omega\to\R\),
\[
  \cE_P(f,f)\ge \frac1n\Deg_A(f).
\]
\end{lemma}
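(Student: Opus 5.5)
The plan is to combine the square-root representation with number domination.

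First rewrite the Dirichlet form as \(\cE_P(f,f)=\E_\mu{f^2}-\ip{f}{Pf}_\mu\). By \cref{eq:sqrt-embedding}, \(\E_\mu{f^2}=Z^{-1}\norm{h_f}^2=Z^{-1}\ip{h_f}{h_f}\). By \cref{lem:square-root-representation}, \(\ip{f}{Pf}_\mu=Z^{-1}\ip{h_f}{\mathsf P_Ah_f}\). Together these give
\[
  \cE_P(f,f)=\frac1Z\ip{h_f}{(I-\mathsf P_A)h_f}.
\]

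Next, note that \(h_f\in\cH^{\mathrm{even}}\), since \(\pf(A_S)=0\) for odd \(S\) and \(\Omega\) consists of even sets. We may therefore apply \cref{lem:number-domination} to \(h_f\). It gives \(I-\mathsf P_A\succeq \frac1n\mathsf N_A\) on \(\cH^{\mathrm{even}}\), and hence
\[
  \cE_P(f,f)\ge\frac1{nZ}\ip{h_f}{\mathsf N_Ah_f}=\frac1n\Deg_A(f),
\]
by the definition of \(\Deg_A\).

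The only point needing care is consistency of domains. The operator \(\mathsf P_A\) is defined on \(\cH^{\mathrm{even}}\), and \(\mathsf N_A\) preserves \(\cH^{\mathrm{even}}\) by \cref{lem:excitation-basis}, so the quadratic-form comparison applies directly. We also check that the identity \(\norm{h_f}^2=Z\E_\mu{f^2}\) makes \(I\) the correct comparison operator. No further obstacle is expected, since all the substantive work is already contained in \cref{lem:number-domination}.
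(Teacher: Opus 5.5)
Your proposal is correct and matches the paper's proof. Both write \(\cE_P(f,f)=\E_\mu{f^2}-\ip{f}{Pf}_\mu\), express each term as a quadratic form in \(h_f\) via \cref{eq:sqrt-embedding} and \cref{lem:square-root-representation}, and apply \cref{lem:number-domination} to the even vector \(h_f\). Your explicit check that \(h_f\in\cH^{\mathrm{even}}\) is a detail the paper leaves implicit.
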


\begin{proof}
Apply \cref{lem:square-root-representation,lem:number-domination} to
\(h_f\):
\[
  \ip{f}{Pf}_\mu
  \le
  \frac1Z\ip*{h_f}{\left(I-\frac1n\mathsf N_A\right)h_f}
  =
  \E_\mu{f^2}-\frac1n\Deg_A(f).
\]
Subtracting from \(\E_\mu{f^2}\) gives the claim.
\end{proof}

\begin{proof}[Proof of \cref{thm:spectral-comparison}]
We now return from the subset-vector space to the Markov chain.  The operator
comparison gives a lower bound on the Dirichlet form in terms of excitation
degree; the parity of the state space supplies the factor \(2\).

Assume first that \(\card{\Omega}\ge2\).  If \(\E_\mu{f}=0\), then
\Cref{eq:sqrt-embedding} gives \(h_f\perp\Psi\).  The vector
\(\xi_\varnothing=\widehat\Psi\) is the unique excitation vector of degree zero.
Also \(h_f\) lies in the even subspace, so all remaining excitation degrees are
at least \(2\).  Therefore
\[
  \Deg_A(f)
  \ge
  2\norm{f}_{L^2(\mu)}^2
  =
  2\Var_\mu(f).
\]
Together with \cref{lem:degree-comparison}, this gives
\(\cE_P(f,f)\ge(2/n)\Var_\mu(f)\), which is the spectral-gap bound.  It also
implies irreducibility on \(\Omega\): otherwise a nonconstant function that is
constant on each communicating class would have zero Dirichlet form.

If \(\card{\Omega}=1\), our convention gives gap \(1\), and the theorem is
immediate for \(n\ge2\).

Sharpness follows from a block-diagonal example.  Let \(A\) be a direct sum of
\(2\times2\) skew blocks, with one additional zero coordinate when \(n\) is odd.
The support is a product of \(\lfloor n/2\rfloor\) binary choices.  A
flip--repair step resamples a given block with probability \(2/n\), and the
second eigenvalue is exactly \(1-2/n\).  Thus the gap is exactly \(2/n\).
\end{proof}

\subsection{Flat Log-Sobolev Bound}

Assume from now on that \(n\ge2\) and \(w(S)\in\{0,1\}\) for all
\(S\subseteq[n]\).  Then \(Z=\card{\Omega}\).  The spectral proof did not use
flatness.  The flat assumption enters through an incoherence bound between the
state basis \(\{e_S\}\) and the excitation basis \(\{\xi_T\}\).
This subsection follows the standard entropy route: incoherence gives entropic
uncertainty, entropic uncertainty bounds entropy by excitation-basis entropy,
and a one-line Gibbs bound controls that entropy by the average excitation
degree.  Rothaus's centering lemma removes the mean.

\begin{lemma}[Flat incoherence]\label{lem:flat-incoherence}
Assume \(w(S)\in\{0,1\}\) for every \(S\).  For even \(S,T\),
\[
  \abs{\ip{e_S}{\xi_T}}\le \frac1{\sqrt Z}.
\]
\end{lemma}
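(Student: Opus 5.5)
We compute the inner product \(\ip{e_S}{\xi_T}\) explicitly in the subset basis, exactly as in \cref{eq:row-coordinate} for odd sets. By definition \(\xi_T=c_T\Psi/\sqrt Z\). The operator \(c_T\) is a product of the signed coordinate flips \(c_i\), \(i\in T\). Each \(c_i\) sends a basis vector \(e_U\) to \(\pm e_{U\symdiff\{i\}}\), since exactly one of \(\varepsilon_i e_U\) and \(\iota_i e_U\) is nonzero. Hence \(c_T e_U=\pm e_{U\symdiff T}\) with a sign depending only on \(U\) and \(T\). Consequently \(c_T\) is a signed permutation matrix in the subset basis, and the \(e_S\)-coordinate of \(c_T\Psi\) is \(\pm\) the \(e_{S\symdiff T}\)-coordinate of \(\Psi\).

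Concretely, define \(\chi_S(T)=\ip{e_S}{c_Te_{S\symdiff T}}\in\{\pm1\}\), the even analogue of the sign in the proof of \cref{lem:row-isotropy}. Then
\[
  \ip{e_S}{\xi_T}
  =
  \frac{1}{\sqrt Z}\,\chi_S(T)\,\pf(A_{S\symdiff T}).
\]
When \(S\) and \(T\) are even, \(S\symdiff T\) is even, so this Pfaffian is a genuine principal Pfaffian rather than the conventional zero.

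The flatness assumption now finishes the argument. We have \(\pf(A_{S\symdiff T})^2=w(S\symdiff T)\in\{0,1\}\), so \(\abs{\pf(A_{S\symdiff T})}\le1\). This gives \(\abs{\ip{e_S}{\xi_T}}\le 1/\sqrt Z\).

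The only step needing care is the claim that \(c_T\) acts as a signed permutation of the basis \(\{e_U\}\) with no cross terms. This holds because each factor \(c_i=\varepsilon_i+\iota_i\) has exactly one nonvanishing term on each \(e_U\), and a composition of signed permutations is again a signed permutation. The exact value of \(\chi_S(T)\) is irrelevant, since only its absolute value enters. The parity hypotheses on \(S\) and \(T\) are needed only to make \(S\symdiff T\) even; the odd case is handled identically and gives the same bound.
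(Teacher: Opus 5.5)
Your proof is correct and matches the paper's argument. Since \(c_T\) acts as a signed permutation of the subset basis, \(\ip{e_S}{\xi_T}=\pm\pf(A_{S\symdiff T})/\sqrt Z\), and flatness then gives \(\abs{\pf(A_{S\symdiff T})}=\sqrt{w(S\symdiff T)}\le 1\). Your check that each \(c_i=\varepsilon_i+\iota_i\) has exactly one nonvanishing term on every \(e_U\) just spells out what the paper compresses into ``up to sign.''
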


\begin{proof}
Since \(\xi_T=c_T\Psi/\sqrt Z\), the operator \(c_T\) moves the coefficient at
\(S\symdiff T\) to the coefficient at \(S\), up to sign.  Hence
\[
  \ip{e_S}{\xi_T}
  =
  \pm\frac{\pf(A_{S\symdiff T})}{\sqrt Z}.
\]
If \(S\symdiff T\notin\Omega\), this number is zero.  If
\(S\symdiff T\in\Omega\), flatness gives
\(\det(A_{S\symdiff T})=1\), and the Pfaffian has absolute value \(1\).
\end{proof}

\begin{lemma}[State-excitation uncertainty]\label{lem:state-excitation-uncertainty}
Assume \(w(S)\in\{0,1\}\) for every \(S\).  Let
\(\norm{f}_{L^2(\mu)}=1\), set \(v=h_f/\sqrt Z\), and define
\[
  p_S=\abs{\ip{e_S}{v}}^2,
  \qquad
  q_T=\abs{\ip{\xi_T}{v}}^2 .
\]
Then
\[
  \Ent_\mu(f^2)\le H(q),
\]
where \(H(q)=-\sum_Tq_T\log q_T\).
\end{lemma}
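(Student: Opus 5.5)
Our plan is to compute the two sides and then invoke the Maassen--Uffink inequality with the overlap bound from \cref{lem:flat-incoherence}.

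\emph{Computing \(\Ent_\mu(f^2)\) in terms of \(p\).} Since \(w\in\{0,1\}\) and \(Z=\card{\Omega}\), we have \(\mu(S)=1/Z\) on \(\Omega\) and \(\abs{\pf(A_S)}=1\) there. Hence \(p_S=f(S)^2\pf(A_S)^2/Z=f(S)^2/Z\) for \(S\in\Omega\), and \(p_S=0\) otherwise. By \cref{eq:sqrt-embedding}, \(\norm{v}^2=\E_\mu{f^2}=1\), so \(p\) is a probability vector and \(\E_\mu{f^2}\log\E_\mu{f^2}=0\). Therefore
\[
  \Ent_\mu(f^2)=\frac1Z\sum_{S\in\Omega}f(S)^2\log f(S)^2
  =\sum_S p_S\log(Zp_S)=\log Z-H(p).
\]
So the claim is equivalent to \(H(p)+H(q)\ge\log Z\).

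\emph{Applying entropic uncertainty.} The vector \(v\) is a unit vector in \(\cH^{\mathrm{even}}\). Two orthonormal bases of this space are available: \(\{e_S\}_{S\text{ even}}\), and \(\{\xi_T\}_{T\text{ even}}\) by \cref{lem:excitation-basis}. Since \(v\) is even, \(q_T=0\) for odd \(T\), so \(q\) is a probability vector on even \(T\) and \(H(q)\) is unchanged by restricting to even indices. By \cref{lem:flat-incoherence}, the maximum overlap is \(c=\max_{S,T\text{ even}}\abs{\ip{e_S}{\xi_T}}\le Z^{-1/2}\). The Maassen--Uffink inequality \cite{MaassenUffink1988} gives \(H(p)+H(q)\ge-2\log c\ge\log Z\), and this is exactly the required bound.

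\emph{Main obstacle.} The only delicate point is to apply Maassen--Uffink in the correct space: the even subspace, where both families really are orthonormal bases. It is also worth checking that the real-coefficient setting causes no trouble; the inequality holds for any pair of orthonormal bases of a finite-dimensional Hilbert space, so it does not. Everything else is the bookkeeping identity \(\Ent_\mu(f^2)=\log Z-H(p)\), which uses flatness a second time, through \(\pf(A_S)^2=1\) on \(\Omega\).
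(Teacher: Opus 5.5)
Your proposal is correct and follows the paper's proof essentially step for step. You rewrite \(\Ent_\mu(f^2)=\log Z-H(p)\) using flatness, then apply Maassen--Uffink on \(\cH^{\mathrm{even}}\) with the overlap bound \(c\le Z^{-1/2}\) from \cref{lem:flat-incoherence}; your explicit check that \(q\) vanishes on odd \(T\) is a point the paper leaves implicit.
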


\begin{proof}
The distributions \(p\) and \(q\) are the squared coordinates of the same unit
vector in two orthonormal bases of \(\cH^{\mathrm{even}}\).  Since \(h_f\) is
supported on \(\Omega\) and \(Z=\card{\Omega}\),
\[
  p_S=
  \begin{cases}
    f(S)^2/Z,& S\in\Omega,\\
    0,& S\notin\Omega.
  \end{cases}
\]
Therefore
\[
  H(p)=\log Z-\Ent_\mu(f^2).
\]
The Maassen--Uffink entropic uncertainty inequality \cite{MaassenUffink1988}
says that if two orthonormal bases have maximum overlap \(c\), then
\[
  H(p)+H(q)\ge -2\log c.
\]
By \cref{lem:flat-incoherence}, \(c\le1/\sqrt Z\), so
\(H(p)+H(q)\ge\log Z\).  Combining this with the displayed formula for \(H(p)\)
gives the claim.
\end{proof}

\begin{lemma}[Centered entropy]\label{lem:centered-entropy}
Assume \(w(S)\in\{0,1\}\) for every \(S\).  If \(\E_\mu{f}=0\), then
\[
  \Ent_\mu(f^2)\le (\log n+1)\Deg_A(f).
\]
\end{lemma}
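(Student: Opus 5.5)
Starting from \cref{lem:state-excitation-uncertainty}, it is enough to bound the Shannon entropy \(H(q)\) of the excitation-coordinate distribution by \((\log n+1)\Deg_A(f)\). Normalize \(\norm{f}_{L^2(\mu)}=1\); both sides are homogeneous of degree two in \(f\), so nothing is lost, and the case \(f=0\) is trivial. Put \(v=h_f/\sqrt Z\). Then \(v\) is a unit vector in \(\cH^{\mathrm{even}}\), and \(q_T=\abs{\ip{\xi_T}{v}}^2\) is a probability distribution on even sets \(T\), by \cref{lem:excitation-basis}. By \cref{eq:sqrt-embedding}, \(\ip{v}{\xi_\varnothing}=\ip{h_f}{\Psi}/Z=\E_\mu{f}=0\), so \(q_\varnothing=0\). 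The definition of \(\mathsf N_A\) gives \(\Deg_A(f)=\sum_T\card{T}q_T=:d\). Since \(q\) lives on even sets with \(\card T\ge2\), we have \(d\ge2\).

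The entropy is controlled by a Gibbs variational bound. For any \(\lambda>0\),
\[
  H(q)\le \lambda\sum_T\card{T}q_T+\log\sum_{T\ne\varnothing}e^{-\lambda\card{T}}.
\]
This follows from nonnegativity of the relative entropy of \(q\) against the Gibbs weights \(e^{-\lambda\card T}\) restricted to nonempty \(T\). The partition function satisfies
\[
  \sum_{T\ne\varnothing}e^{-\lambda\card T}=(1+e^{-\lambda})^n-1\le e^{ne^{-\lambda}}-1.
\]
Take \(\lambda=\log n\). The sum is then at most \(e-1<e\), its logarithm is at most \(1\), and
\[
  H(q)\le d\log n+1\le d\log n+d/2\le(\log n+1)d,
\]
using \(d\ge2\). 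Combining this with \(\Ent_\mu(f^2)\le H(q)\) from \cref{lem:state-excitation-uncertainty} gives the lemma.

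The step that needs the most care is the additive constant in the Gibbs bound. It has to be absorbed using the centering \(q_\varnothing=0\), which is what excludes \(T=\varnothing\) from the partition function, together with the parity gap \(d\ge2\). If the constant came out larger than \(d\), one would instead tune \(\lambda=\log n+c\) and use the sharper estimate \(\log(e^{ne^{-\lambda}}-1)\) to keep the result within \((\log n+1)d\). The remaining ingredients are direct citations of \cref{lem:excitation-basis,lem:state-excitation-uncertainty} and \cref{eq:sqrt-embedding}.
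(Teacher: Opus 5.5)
Your proposal is correct and follows essentially the same route as the paper's proof. Both arguments normalize \(f\), apply the state--excitation uncertainty lemma, use centering and parity to get \(q_\varnothing=0\) and \(\Deg_A(f)\ge2\), and apply the Gibbs variational bound at \(\theta=\log n\) with partition function at most \(e-1\). The only difference is cosmetic: the paper absorbs the constant \(\log(e-1)<1\) into \(\Deg_A(f)\) directly, while you write it as \(1\le d/2\).
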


\begin{proof}
If \(f=0\), the claim is immediate.  By homogeneity, assume
\(\norm{f}_{L^2(\mu)}=1\).  Set \(v=h_f/\sqrt Z\) and
\(q_T=\abs{\ip{\xi_T}{v}}^2\).  By
\cref{lem:state-excitation-uncertainty},
\[
  \Ent_\mu(f^2)\le H(q).
\]
Since \(\E_\mu{f}=0\), we have \(h_f\perp\Psi\), so \(q_\varnothing=0\).  Also
\(v\) lies in the even subspace, so \(q_T=0\) for odd \(T\), and
\[
  \Deg_A(f)=\sum_{T\text{ even}} \card{T}q_T.
\]
For any \(\theta>0\), the Gibbs variational principle gives
\[
  H(q)
  \le
  \theta\sum_T \card{T}q_T
  +
  \log\sum_{\substack{T\text{ even}\\T\ne\varnothing}} e^{-\theta\card{T}}.
\]
Indeed, this is the nonnegativity of relative entropy from \(q\) to the
probability distribution proportional to \(e^{-\theta\card{T}}\) on nonempty
even sets.  Taking \(\theta=\log n\),
\[
  \sum_{\substack{T\text{ even}\\T\ne\varnothing}} n^{-\card{T}}
  \le
  \sum_{T\ne\varnothing} n^{-\card{T}}
  =
  (1+1/n)^n-1
  \le e-1.
\]
All mass of \(q\) is on nonempty even sets, so \(\Deg_A(f)\ge2\).  Since
\(\log(e-1)<1\),
\[
  H(q)
  \le
  (\log n)\Deg_A(f)+\log(e-1)
  \le
  (\log n+1)\Deg_A(f).
\]
\end{proof}

\begin{lemma}[Rothaus centering]\label{lem:rothaus}
For every real \(a\) and every real function \(g\),
\[
  \Ent_\mu((g+a)^2)
  \le
  \Ent_\mu(g^2)+2\E_\mu{g^2}.
\]
\end{lemma}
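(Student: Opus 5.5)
The plan is to prove the inequality in two stages: first the classical centered Rothaus lemma, then a reduction of an arbitrary \(g\) to its centered part. Set \(m=\E_\mu{g}\) and \(h=g-m\), so \(\E_\mu{h}=0\) and \(\E_\mu{g^2}=m^2+\E_\mu{h^2}\).

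\textbf{Stage one (centered lemma).} Show that for every mean-zero \(h\) and every real \(b\),
\[
  \Ent_\mu((h+b)^2)\le\Ent_\mu(h^2)+2\E_\mu{h^2}.
\]
I would follow the standard route of Rothaus and Bakry--Gentil--Ledoux.
\begin{itemize}
\item Fix \(h\) and study \(F(b)=\Ent_\mu((h+b)^2)\) as a function of \(b\).
\item As \(b\to\infty\), expanding \(x\log x\) to second order gives \(F(b)\to 2\Var_\mu(h)=2\E_\mu{h^2}\). This uses \(\E_\mu{h}=0\), which kills the first-order term.
\item On finite \(b\), use the variational formula \(\Ent_\mu(\Phi)=\sup\set{\E_\mu{\Phi\psi}\given \E_\mu{e^\psi}\le1}\). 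Let \(\psi\) be the optimizer for \((h+b)^2\), namely \(\psi=\log\bigl((h+b)^2/\E_\mu{(h+b)^2}\bigr)\). Expand \((h+b)^2=h^2+2bh+b^2\) and bound \(\E_\mu{h^2\psi}\le\Ent_\mu(h^2)\).
\item The cross and constant terms \(2b\E_\mu{h\psi}+b^2\E_\mu{\psi}\) are then controlled by \(2\E_\mu{h^2}\). This uses Jensen (\(\E_\mu{\psi}\le0\)) and the explicit form of \(\psi\). The finite space makes every expansion legitimate, and homogeneity lets me normalize \(\E_\mu{(h+b)^2}=1\).
\end{itemize}

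\textbf{Stage two (uncentered reduction).} Write \(g+a=h+(a+m)\) and apply stage one with \(b=a+m\):
\[
  \Ent_\mu((g+a)^2)\le\Ent_\mu(h^2)+2\E_\mu{h^2}.
\]
Since \(2\E_\mu{g^2}=2\E_\mu{h^2}+2m^2\), the statement follows once I prove
\[
  \Ent_\mu(h^2)\le\Ent_\mu((h+m)^2)+2m^2 .
\]
I would not try to get this from stage one, because that would require a lower bound on \(\Ent_\mu(g^2)\), which is the wrong direction. Instead I would prove it directly from the same variational formula, now taking \(\psi\) optimal for \(h^2\). This gives
\[
  \Ent_\mu(h^2)-\Ent_\mu((h+m)^2)\le -2m\E_\mu{h\psi}-m^2\E_\mu{\psi}.
\]
The plan is to bound the right-hand side by \(2m^2\) using the explicit optimizer \(\psi=\log(h^2/\E_\mu{h^2})\) together with \(\E_\mu{h}=0\).

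\textbf{Main obstacle.} The cross term \(\E_\mu{h\psi}\) is the hard step. Its sign is not controlled, and it is not obviously dominated by \(\abs{m}\) alone. My first attempt would be to use \(\E_\mu{h}=0\) to replace \(h\psi\) by \(h(\psi-c)\) for a well-chosen constant \(c\), then apply Cauchy--Schwarz and the pointwise inequality \(x\log x\ge x-1\), trading \(\E_\mu{h\psi}\) against \(m\) and \(\E_\mu{h^2}\).

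If that fails, the fallback is to avoid the split entirely. I would redo the stage-one argument directly for uncentered \(g\), expanding \((g+a)^2=g^2+2ag+a^2\) against the optimizer \(\psi\) for \((g+a)^2\). The term \(2\E_\mu{g^2}\) has to absorb both \(2a\E_\mu{g\psi}\) and the \(m^2\) contribution coming from the nonzero mean.
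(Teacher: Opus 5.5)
Both stages hinge on an intermediate inequality that is false, so this is a genuine gap rather than an unfinished computation.

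In Stage One you bound $\E_\mu{h^2\psi}\le\Ent_\mu(h^2)$ and claim the rest satisfies $2b\E_\mu{h\psi}+b^2\E_\mu{\psi}\le 2\E_\mu{h^2}$. Take $\mu$ uniform on two points, $h=\pm s$ and $b=\sqrt{1-s^2}$. Then $(h+b)^2=1\pm u$ with $u=2bs$, $\E_\mu{(h+b)^2}=1$, and $\psi=\log(1\pm u)$, so
\[
  2b\E_\mu{h\psi}+b^2\E_\mu{\psi}
  =u\operatorname{artanh}(u)+\frac{b^2}{2}\log(1-u^2)
  =2s^2+\frac{4}{3}s^4+O(s^6).
\]
This exceeds $2\E_\mu{h^2}=2s^2$; numerically it is about $0.192>0.18$ at $s=0.3$. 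The lemma holds here only because $\E_\mu{h^2\psi}=\frac{s^2}{2}\log(1-u^2)<0=\Ent_\mu(h^2)$. The slack lives in the step you treat as harmless, so no argument that bounds the two pieces separately against the single test function $\psi$ can work. Your fallback uses the same split for uncentered $g$, and this example (already centered) defeats it too.

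Stage Two is worse: the inequality $\Ent_\mu(h^2)\le\Ent_\mu((h+m)^2)+2m^2$ that you need for centered $h$ is false. Your own bound $-2m\E_\mu{h\psi}-m^2\E_\mu{\psi}$ is first order in $m$, with slope $-2\E_\mu{h\psi}$ generically nonzero, so it can never give $2m^2$. A direct check confirms the failure. Let $h=2$ with probability $1/3$ and $h=-1$ with probability $2/3$, so $\Ent_\mu(h^2)\approx0.462$; with $m=-0.1$ one gets $\Ent_\mu((h+m)^2)+2m^2\approx0.295+0.02$. Equivalently, the Stage One output $\Ent_\mu(h^2)+2\E_\mu{h^2}$ can exceed the target $\Ent_\mu(g^2)+2\E_\mu{g^2}$, so routing through the centered case cannot be repaired.

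The missing idea is the paper's interpolation, which never centers. Set $G(t)=\Ent_\mu((tg+a)^2)$ and $r_t=(tg+a)^2/\E_\mu{(tg+a)^2}$. Then $G(0)=G'(0)=0$, and
\[
  G''(t)=2\E_\mu{g^2\log r_t}+4\E_\mu{g^2}
  -\frac{4\bigl(\E_\mu{g(tg+a)}\bigr)^2}{\E_\mu{(tg+a)^2}}
  \le 2\Ent_\mu(g^2)+4\E_\mu{g^2}
\]
by the variational formula at each $t$, using $\E_\mu{r_t}=1$. Then $G(1)=\int_0^1(1-t)G''(t)\,dt$ gives the lemma, after perturbing to avoid zeros of $tg+a$. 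In effect, $\Ent_\mu(g^2)$ is tested against the family $\log r_t$ averaged with weight $2(1-t)$, rather than against the single endpoint function $\log r_1$. That averaging is exactly what your one-shot split lacks.
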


\begin{proof}
This is the centering inequality of Rothaus \cite{Rothaus1981}.  We include the
standard differentiable calculation in the nonvanishing case to fix the
constant used here; the finite-state inequality follows by approximation.
Assume first that \((tg+a)^2\) is positive on \(\Omega\) for \(0<t<1\).  Let
\[
  G_a(t)=\Ent_\mu((tg+a)^2),
  \qquad
  R(t)=t^2\Ent_\mu(g^2)+2t^2\E_\mu{g^2}.
\]
Then \(G_a(0)=R(0)=0\) and \(G_a'(0)=R'(0)=0\).  Direct differentiation gives
\[
  G_a''(t)
  =
  2\E*_\mu{
    g^2\log
    \frac{(tg+a)^2}{\E_\mu{(tg+a)^2}}
  }
  +
  4\E_\mu{g^2}
  -
  \frac{4\E_\mu{g(tg+a)}^2}{\E_\mu{(tg+a)^2}}.
\]
The last term is nonpositive.  If
\[
  r=\frac{(tg+a)^2}{\E_\mu{(tg+a)^2}},
\]
then \(\E_\mu{r}=1\), and the variational formula for entropy gives
\[
  \E_\mu{g^2\log r}\le \Ent_\mu(g^2).
\]
Thus \(G_a''(t)\le R''(t)\).  Integrating twice from \(0\) to \(1\) gives the
claim.  The general finite-state case follows by replacing \((g,a)\) with a
small perturbation for which the nonvanishing assumption holds and taking a
limit.
\end{proof}

\begin{proof}[Proof of \cref{thm:flat-lsi}]
If \(\card{\Omega}=1\), every function is constant and the claim is immediate.
Assume \(\card{\Omega}\ge2\).  Let \(g=f-\E_\mu{f}\).  By
\cref{lem:rothaus},
\[
  \Ent_\mu(f^2)
  \le
  \Ent_\mu(g^2)+2\Var_\mu(f).
\]
The centered entropy lemma gives
\[
  \Ent_\mu(g^2)\le(\log n+1)\Deg_A(g).
\]
Subtracting the mean changes only the \(\xi_\varnothing\) coefficient, so
\(\Deg_A(g)=\Deg_A(f)\).  Since all nonconstant even excitation degrees are at
least \(2\),
\[
  \Var_\mu(f)\le\frac12\Deg_A(f).
\]
Therefore
\[
  \Ent_\mu(f^2)
  \le
  (\log n+2)\Deg_A(f).
\]
Using \cref{lem:degree-comparison},
\[
  \Ent_\mu(f^2)
  \le
  n(\log n+2)\cE_P(f,f).
\]

Let \(C_{\mathrm{LS}}=n(\log n+2)\).  By \cref{thm:spectral-comparison}, the
chain is irreducible on \(\Omega\).  The log-Sobolev-to-mixing theorem for
finite reversible chains, in the convention
\(\Ent_\mu(f^2)\le C_{\mathrm{LS}}\cE_P(f,f)\), gives for the continuous-time
semigroup \(e^{-t(I-P)}\)
\cite{DiaconisSaloffCoste1996,LevinPeresWilmer2017}
\[
  t_{\mathrm{mix}}^{\mathrm{cont}}(x,\eps)
  \le
  O\left(
    C_{\mathrm{LS}}
    \left[
      1+\log\log\frac1{\mu(x)}+
      \log\frac1\eps
    \right]
  \right).
\]
In the flat case \(\mu\) is uniform on \(\Omega\subseteq2^{[n]}\), so
\(1+\log\log(1/\mu(x))\le O(1+\log n)\).

Finally, \(P=KK^*\) is positive semidefinite, so every eigenvalue of \(P\) lies
in \([0,1]\).  On each nonconstant eigenvector, \(\lambda^t\le e^{-t(1-\lambda)}\),
while the stationary eigenvector is unchanged.  Thus the discrete-time chain is
no slower in \(L^2(\mu)\) than the continuous-time chain.  The standard
\(L^2(\mu)\)-to-total-variation comparison transfers the same total-variation
bound to discrete time.
\end{proof}

\section{Implementing the Degree-Two Chain}\label{sec:data-structure}

The previous section bounds the number of flip--repair steps.  We now describe
how to execute those steps in sublinear time on a changing tour.  Let \(H\) be
a \(2\)-in/\(2\)-out Eulerian digraph with \(M\) arcs.  The fixed reference tour
used in \cref{sec:interlace} is needed for the mixing proof, but not for the
implementation.  The data structure stores only the current one-cycle
transition system, written as a cyclic sequence
\[
  \Gamma=(e_1,e_2,\ldots,e_M).
\]
An occurrence means one position of this cyclic sequence.  Since each position
is an arc whose head is some vertex, every occurrence is owned by the head of
that arc; in a \(2\)-in/\(2\)-out graph each vertex owns exactly two
occurrences.  Flipping a coordinate of the subset representation is exactly
swapping the two successors after the two occurrences owned by the same vertex
in the current transition system.  For a vertex \(v\), let \(a_v,b_v\) be these
two incoming-arc occurrences, viewed as positions on the cyclic sequence
\(\Gamma\).  Thus every vertex is a chord of the current tour circle.

\begin{lemma}[Crossing criterion]\label{lem:crossing}
Let \(x\) be the first flipped vertex.  Its two occurrences split the tour
circle into two open intervals \(I\) and \(I^c\).  After flipping \(x\), the
tour becomes the two directed cycles corresponding to these intervals.  A
repair vertex \(y\ne x\) restores a single Eulerian tour if and only if exactly
one of \(a_y,b_y\) lies in \(I\).  Equivalently, the chords \(x\) and \(y\)
cross.
In addition, flipping \(x\) back restores the original tour.
\end{lemma}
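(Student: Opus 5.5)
We prove this directly from the cyclic-sequence representation, without the interlacement machinery.

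\emph{Setup.} Write the tour as the cyclic word
\[
  \Gamma=(a_x,\,P,\,b_x,\,Q),
\]
where \(P\) is the block of occurrences strictly after \(a_x\) and up to just before \(b_x\), and \(Q\) is the complementary block. With this labeling, \(I\) consists of \(a_x\) followed by \(P\), and \(I^c\) consists of \(b_x\) followed by \(Q\); the intervals are taken half-open at the chord endpoints, which is the convention under which every occurrence is counted once. In the current transition system the successor of \(a_x\) is the first arc of \(P\), and the successor of \(b_x\) is the first arc of \(Q\).

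\emph{Flipping \(x\).} Flipping \(x\) swaps these two successors. After the flip, \(a_x\) is followed by \(Q\) and \(b_x\) by \(P\), while every other successor is unchanged. Following the resulting permutation gives two cycles: \((a_x,Q)\) and \((b_x,P)\). Each is one interval of the circle, closed at one chord endpoint. This proves the first claim. Flipping \(x\) a second time restores the original successors, so it restores the original tour; this is the last claim.

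\emph{Repairing with \(y\neq x\).} Now flip \(y\), which swaps the successors of \(a_y\) and \(b_y\) in the two-cycle state. There are two cases.
\begin{itemize}
  \item If \(a_y\) and \(b_y\) lie in the same cycle, swapping successors at two positions of a single cycle splits that cycle into two. The result has three cycles, so it is not a tour.
  \item If \(a_y\) and \(b_y\) lie in different cycles, write the cycles as \((a_y,U)\) and \((b_y,W)\). After the swap the permutation traverses the single cycle \((a_y,W,b_y,U)\), which is one Eulerian tour.
\end{itemize}
This is the standard fact that a transposition of successors merges two cycles when its endpoints lie in different cycles and splits one cycle when they lie in the same cycle.

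\emph{Translating to the circle.} It remains to express "different cycles" in terms of \(I\). The two cycles are exactly the occurrence sets \(I\) and \(I^c\). Since \(y\ne x\), the occurrences \(a_y,b_y\) are distinct from \(a_x,b_x\), so each lies in \(P\) or in \(Q\). Hence \(y\) merges the two cycles if and only if exactly one of \(a_y,b_y\) lies in \(P\), that is, in \(I\). That condition says the endpoints of \(y\) separate the endpoints of \(x\) on the circle, which is precisely the statement that the chords \(x\) and \(y\) cross.

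\emph{Main obstacle.} The only delicate step is the bookkeeping of which endpoint each interval includes. An occurrence is identified with the arc whose successor is being chosen, so the endpoint occurrence \(a_x\) belongs to the cycle that continues into \(Q\) after the flip. Because \(a_y,b_y\) never coincide with \(a_x\) or \(b_x\), this convention has no effect on the crossing criterion.
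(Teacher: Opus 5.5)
Your proof is correct and follows the same route as the paper's proof: flipping $x$ cuts the tour into the two cycles given by the two arcs of the circle, and a successor swap at $y\neq x$ merges them exactly when $a_y,b_y$ lie on different cycles; you just write out the merge/split fact and the contents of each cycle explicitly. One small slip: under your half-open convention $I=\{a_x\}\cup P$, the cycles you derived are $\{b_x\}\cup P$ and $\{a_x\}\cup Q$, not $I$ and $I^c$. This is harmless because $a_y,b_y\in P\cup Q$, and taking the lemma's open intervals $I=P$, $I^c=Q$ avoids it altogether.
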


\begin{proof}
Flipping \(x\) swaps the successors after the two incoming-arc occurrences
\(a_x,b_x\).  In cyclic order, this cuts the original directed cycle at the two
positions and reconnects each interval to itself, giving two directed cycles.  A
repair swap at a vertex \(y\ne x\) merges two cycles exactly when its two
owned occurrences lie on different cycles.  These are precisely the chords
crossing \(x\).  The remaining feasible repair flip is \(y=x\), which undoes the
first flip.
\end{proof}

Thus the repair step, in the flat skew-determinantal tour measure, is exactly
dynamic uniform sampling from the queried chord itself and all chords crossing
it in the current tour.  Sampling the queried chord gives a self-loop; sampling
a crossing chord is followed by a constant number of cyclic cuts and
concatenations.  \Cref{fig:crossing-move} shows both the crossing test and the
resulting cyclic reassembly.
The implementation therefore never needs to materialize the odd intermediate
two-cycle state.  It samples the first vertex \(x\), samples the repair vertex
from \(\{x\}\) together with the chords crossing \(x\) in the current tour, and
then either stays put or applies the combined two-flip update.
The naive implementation scans all chords after the first flip, costing
\(O(M)\) per flip--repair step.  Since the mixing bound already has \(O(M)\)
steps up to polylogarithmic factors, this would give a quadratic-time sampler.
We use a chunk structure to answer the same crossing query in roughly
\(\sqrt M\) time while still supporting the four-cut update to the tour.

\begin{figure}[t]
  \centering
  \begin{tikzpicture}[
    every node/.style={font=\small},
    >=Latex,
    dot/.style={circle, draw=black!70, fill=white, inner sep=1.25pt},
    chordx/.style={line width=1pt, blue!65!black},
    chordy/.style={line width=1pt, teal!60!black},
    blockp/.style={line width=2.4pt, blue!55!black, line cap=round},
    blockq/.style={line width=2.4pt, teal!60!black, line cap=round},
    blockr/.style={line width=2.4pt, orange!75!black, line cap=round},
    blocks/.style={line width=2.4pt, black!55, line cap=round},
    arr/.style={-{Latex[length=2.2mm,width=1.5mm]}, line width=.55pt, black!70}
  ]
    \begin{scope}[xshift=-.2cm]
      \node[font=\small] at (0,1.78) {crossing criterion};
      \draw[blue!12, line width=6pt, line cap=round] (145:1.18) arc (145:-25:1.18);
      \draw[black!10, line width=6pt, line cap=round] (-25:1.18) arc (-25:-215:1.18);
      \draw[black!55, line width=.55pt] (0,0) circle (1.18);
      \coordinate (a) at (145:1.18);
      \coordinate (c) at (55:1.18);
      \coordinate (b) at (-25:1.18);
      \coordinate (d) at (-135:1.18);
      \draw[chordx] (a) -- (b);
      \draw[chordy] (c) -- (d);
      \node[dot, label=left:\(a\)] at (a) {};
      \node[dot, label=above:\(c\)] at (c) {};
      \node[dot, label=right:\(b\)] at (b) {};
      \node[dot, label=below:\(d\)] at (d) {};
      \node[blue!65!black, font=\scriptsize] at (.18,.35) {\(x\)};
      \node[teal!60!black, font=\scriptsize] at (-.28,-.34) {\(y\)};
      \node[font=\scriptsize, text=blue!65!black] at (.82,.74) {\(I\)};
      \node[font=\scriptsize, text=black!55] at (-.9,-.72) {\(I^c\)};
      \node[draw=black!35, rounded corners=1.5pt, fill=white, align=center,
        font=\scriptsize, inner sep=2pt] at (0,-1.62) {flip \(x\) back\\gives \(S\)};
    \end{scope}

    \begin{scope}[xshift=3.35cm]
      \node[font=\small] at (2.65,1.78) {cyclic reassembly};
      \node[font=\scriptsize, anchor=east, text=black!65] at (-.25,.55) {before};
      \node[font=\scriptsize, anchor=east, text=black!65] at (-.25,-.85) {after};

      \node[dot] at (0,.55) {\(\scriptstyle a\)};
      \draw[blockp] (.28,.55) -- node[midway, above=2pt, font=\scriptsize] {\(P\)} (1.12,.55);
      \node[dot] at (1.4,.55) {\(\scriptstyle c\)};
      \draw[blockq] (1.68,.55) -- node[midway, above=2pt, font=\scriptsize] {\(Q\)} (2.52,.55);
      \node[dot] at (2.8,.55) {\(\scriptstyle b\)};
      \draw[blockr] (3.08,.55) -- node[midway, above=2pt, font=\scriptsize] {\(R\)} (3.92,.55);
      \node[dot] at (4.2,.55) {\(\scriptstyle d\)};
      \draw[blocks] (4.48,.55) -- node[midway, above=2pt, font=\scriptsize] {\(S\)} (5.32,.55);

      \draw[arr] (2.65,.18) -- node[right, font=\scriptsize, align=left, text=black!65] {flip \(x\), then \(y\)} (2.65,-.48);

      \node[dot] at (0,-.85) {\(\scriptstyle a\)};
      \draw[blockr] (.28,-.85) -- node[midway, below=2pt, font=\scriptsize] {\(R\)} (1.12,-.85);
      \node[dot] at (1.4,-.85) {\(\scriptstyle d\)};
      \draw[blockq] (1.68,-.85) -- node[midway, below=2pt, font=\scriptsize] {\(Q\)} (2.52,-.85);
      \node[dot] at (2.8,-.85) {\(\scriptstyle b\)};
      \draw[blockp] (3.08,-.85) -- node[midway, below=2pt, font=\scriptsize] {\(P\)} (3.92,-.85);
      \node[dot] at (4.2,-.85) {\(\scriptstyle c\)};
      \draw[blocks] (4.48,-.85) -- node[midway, below=2pt, font=\scriptsize] {\(S\)} (5.32,-.85);
    \end{scope}
  \end{tikzpicture}
  \caption{In a \(2\)-in/\(2\)-out graph, the first flip cuts the current tour
  at the two occurrences of \(x\).  Apart from flipping \(x\) back, a repair
  flip restores one tour exactly when its chord crosses \(x\); the four
  intervals are reassembled without reversing orientation.}
  \label{fig:crossing-move}
\end{figure}

\subsection{Chunk Structure}

Fix a chunk parameter \(B\), which will be set to \(\lceil\sqrt M\rceil\) at
the end.  Maintain the cyclic sequence in chunks of size between \(B/2\) and
\(2B\), except during a local rebuild.  Let \(q=O(M/B)\) be the number of
chunks.  Give each live chunk a stable identifier from a pool of
\(Q=O(M/B)\) identifiers, with constant-factor slack for the temporary chunks
created during one update.  The aggregate vectors used by the data structure
are indexed by this identifier pool, so splitting or merging chunks does not
require resizing them.

When we rebuild a local window of chunks, we first give the new chunks fresh
identifiers.  Then, for every vertex with at least one occurrence in the
rebuilt window, we record its old unordered pair of chunk identifiers and its
new unordered pair, delete the old pair-list entry and endpoint contributions,
and insert the new ones.  This vertex-by-vertex update is important when both
occurrences of a vertex lie in the rebuilt window: the old pair is removed once
and the new pair is inserted once, so coincident identifiers cause no ambiguity.
Only after all such updates have been performed do we retire the old chunk
identifiers.  At that point no live vertex has an occurrence in a retired
identifier, and every aggregate contribution involving that identifier has been
deleted, because aggregate entries are maintained as a sum of these per-vertex
endpoint contributions.  We then clear the \(O(Q)\) pair-list headers incident
to each retired identifier, or equivalently bump a generation counter for those
headers, before putting the identifier back on the free list.  No global scan of
the treap vectors is needed; the zero aggregate coordinates are a consequence of
the deleted endpoint contributions.

For every occurrence, store its chunk, offset, and owner vertex.  For every
vertex, store handles to its two occurrences.  For every unordered pair of
chunks \(\{C,D\}\), allowing \(C=D\), maintain a dynamic array
\[
  L[C,D]
  =
  \{v:\text{ one occurrence of }v\text{ lies in }C
  \text{ and the other lies in }D\}.
\]
When \(C\ne D\), this list is stored only once but is addressable as both
\(L[C,D]\) and \(L[D,C]\); when \(C=D\), it contains the vertices whose two
occurrences are both in \(C\).  Each vertex stores its index in its current pair
list, so moving it between lists takes \(O(1)\) time by swap-delete.  The total
size of all lists is \(O(M)\), and there are \(O(Q^2)=O(M^2/B^2)\) headers.
The first flip of each step is sampled from a static array of the degree-two
vertices, since the vertex set does not change during the walk.
Thus the only nontrivial sampling problem is the repair flip: sample uniformly
from the queried chord and the chords crossing it.

The cyclic order of chunks is represented by any balanced binary sequence tree;
for concreteness one may use a treap \cite{SeidelAragon1996}.  Each internal
node \(U\) stores a vector indexed by chunk identifiers.  The invariant used by
the crossing query is the following: if \(D\) is not a chunk of the subtree
\(U\), then
\[
  \mathbf V_U[D]
  =
  \#\{v:\text{ one occurrence of }v\text{ is in a chunk of }U
  \text{ and the other occurrence is in }D\}.
\]
It is helpful to view the vector as storing endpoint contributions.  A vertex
with occurrences in chunks \(C\) and \(D\) contributes one unit to coordinate
\(D\) on every ancestor of \(C\), and one unit to coordinate \(C\) on every
ancestor of \(D\); when \(C=D\), these two contributions are combined on the
same ancestor path.  For a query we only read coordinates \(D\) outside the
subtree \(U\), where this contribution rule is exactly the displayed invariant.
Coordinates \(D\) lying inside \(U\) are maintained by the same bottom-up
summation rule so that later splits and concatenations can expose them as
outside coordinates of some new node.
Splitting or concatenating the treap recomputes \(O(\log q)\) internal nodes,
each in \(O(Q)\) time, for \(O(Q\log q)\) time.
For an affected vertex whose old chunk pair is \(\{C,D\}\) and whose new chunk
pair is \(\{C',D'\}\), we update the pair lists and adjust exactly these
endpoint contributions: delete one unit in coordinate \(D\) along the ancestor
path of \(C\) and one unit in coordinate \(C\) along the ancestor path of \(D\),
then insert the analogous contributions for \(C',D'\).  If some identifiers
coincide, the corresponding increments are combined.  This costs \(O(\log q)\)
per affected vertex.

\subsection{Sampling Crossings}

Given a query chord \(x\), let \(I=(a_x,b_x)\) be one of the two open intervals.
Decompose \(I\) into at most two partial boundary chunks and a set \(F\) of full
chunks.  The set \(F\) may be empty, and the two boundary chunks may coincide.
Let \(\mathcal B\) be the set of boundary chunks; see \cref{fig:chunk-query}.
Membership in \(I\) always means membership in this open interval, so the two
endpoints of the queried chord \(x\) are not counted as lying in \(I\).
The choice of which of the two intervals to call \(I\) does not affect the
crossing set: replacing \(I\) by its complement simply swaps the roles of
``inside'' and ``outside.''

Every crossing vertex other than \(x\) has exactly one endpoint in \(I\) and
one endpoint outside \(I\).  We split these vertices into two disjoint classes,
according to whether one of the two endpoints lies in a boundary chunk of the
interval decomposition.  This is only a data-structural split; the flip--repair
step must sample uniformly from the union of the two classes and \(x\) itself.
The classes are exhaustive because every endpoint of a crossing vertex lies
either in a boundary chunk or in a full inside/outside chunk, and they are
disjoint by definition.
Thus class A consists exactly of crossing vertices whose inside endpoint lies
in a full chunk of \(F\) and whose outside endpoint lies in a full chunk outside
\(F\cup\mathcal B\); class B is everything involving at least one boundary
endpoint.

\begin{figure}[t]
  \centering
  \begin{tikzpicture}[
    x=.95cm,
    y=1cm,
    >=Latex,
    every node/.style={font=\small},
    chunk/.style={draw=black!55, line width=.45pt, minimum width=.9cm, minimum height=.64cm, inner sep=0pt},
    boundary/.style={chunk, fill=black!8},
    full/.style={chunk, fill=blue!8},
    outside/.style={chunk, fill=black!3},
    target/.style={chunk, fill=orange!12},
    arr/.style={-{Latex[length=2mm,width=1.4mm]}, line width=.55pt, black!65}
  ]
    \foreach \x/\style/\lab in {
      0/outside/\(C_0\),
      1/boundary/\(\mathcal B\),
      2/full/\(F\),
      3/full/\(F\),
      4/full/\(F\),
      5/boundary/\(\mathcal B\),
      6/target/\(D\),
      7/outside/\(C_7\)
    } {
      \node[\style] (c\x) at (\x,0) {\lab};
    }

    \draw[decorate, decoration={brace, amplitude=3pt}, black!65]
      (.55,.72) -- (5.45,.72)
      node[midway, above=4pt, font=\scriptsize] {open interval \(I=(a_x,b_x)\)};
    \node[circle, fill=blue!65!black, inner sep=1.3pt, label=above left:{\scriptsize \(a_x\)}] at (.75,.42) {};
    \node[circle, fill=blue!65!black, inner sep=1.3pt, label=above right:{\scriptsize \(b_x\)}] at (5.25,.42) {};

    \draw[blue!60!black, line width=1pt] (1.55,-.55) -- (4.45,-.55)
      node[midway, below=2pt, font=\scriptsize] {full chunks \(F\)};
    \draw[arr] (4.55,-.95) -- node[midway, below=2pt, font=\scriptsize]
      {coordinate \(D\)} (5.75,-.95);
    \node[draw=teal!55!black, fill=teal!5, rounded corners=1.5pt,
      align=center, font=\scriptsize, inner sep=3pt] at (2.95,-1.18)
      {\(\mathbf V_F[D]\) counts pair-list entries\\with one endpoint in \(F\)
      and the other in chunk \(D\)};
  \end{tikzpicture}
  \caption{A crossing query in the chunk data structure.  Boundary chunks are
  inspected explicitly.  For full chunks \(F\), the treap aggregate
  \(\mathbf V_F[D]\) counts how many chord endpoints in \(F\) are paired with
  endpoints in each outside chunk \(D\).}
  \label{fig:chunk-query}
\end{figure}

\paragraph{Class A.}
No endpoint lies in a boundary chunk.  Then one endpoint lies in a full inside
chunk \(C\in F\), and the other lies in a full outside chunk
\(D\notin F\cup\mathcal B\).  Cover the contiguous set of full chunks \(F\) by the
usual \(O(\log q)\) treap nodes, and let \(\mathbf V_F\) be the sum of their
stored vectors.  For every outside chunk \(D\notin F\cup\mathcal B\), the invariant
above gives
\[
  \mathbf V_F[D]
  =
  \#\{v:\text{ one endpoint of }v\text{ lies in }F,\text{ the other in }D\}.
\]
Because \(D\) is outside \(F\), every such vertex appears in exactly one pair
list \(L[C,D]\) with \(C\in F\), and conversely every vertex in one of these
lists has exactly one endpoint in the open interval \(I\).  Thus
\[
  \mathbf V_F[D]=\sum_{C\in F}\card{L[C,D]}.
\]
Let
\[
  W_A=\sum_{D\notin F\cup\mathcal B}\mathbf V_F[D].
\]
If \(W_A>0\), sample class A as follows: choose \(D\) with probability
\(\mathbf V_F[D]/W_A\), choose an inside chunk \(C\in F\) with probability
\(\card{L[C,D]}/\mathbf V_F[D]\), and then choose a uniformly random vertex
from \(L[C,D]\).  The second choice can be implemented by scanning the
\(O(q)\) chunk headers in \(F\), using the stored sizes of the lists
\(L[C,D]\); this scan is within the per-step budget.
For a fixed class-A vertex in the list \(L[C,D]\), the probability of selecting
it is
\[
  \frac{\mathbf V_F[D]}{W_A}\cdot
  \frac{\card{L[C,D]}}{\mathbf V_F[D]}\cdot
  \frac1{\card{L[C,D]}}
  =
  \frac1{W_A},
\]
so this procedure is uniform over class A.

\paragraph{Class B.}
At least one endpoint lies in a boundary chunk.  Enumerate all occurrences in
the boundary chunks, at most \(O(B)\) occurrences.  For each owner vertex \(v\),
test whether exactly one of \(a_v,b_v\) lies in \(I\).  The test uses the chunk
ranks in the top-level sequence tree and the offsets inside the two boundary
chunks; computing the needed ranks costs \(O(\log q)\) per tested vertex, or
\(O(B\log q)\) total for the boundary scan.  Deduplicate using timestamps,
yielding an explicit list \(Q_B\) of size \(W_B\).
The test removes noncrossing boundary vertices and also removes \(x\), because
the endpoints of \(x\) are the open-interval endpoints and are not counted as
lying in \(I\).  Sampling class B means choosing a uniformly random vertex from
this explicit list \(Q_B\).

Let \(W=W_A+W_B\).  The valid repair flips are \(x\) itself and the \(W\)
crossing vertices.  Choose \(x\) with probability \(1/(W+1)\), which gives a
self-loop.  Equivalently, if \(W=0\) the repair flip is forced to be \(x\).
If the chain does not choose \(x\), then \(W>0\); choose class A with
probability \(W_A/W\) and class B with probability \(W_B/W\), and sample inside
the chosen class as above.  A class-A vertex has probability
\((W_A/W)\cdot(1/W_A)=1/W\), and a class-B vertex has probability
\((W_B/W)\cdot(1/W_B)=1/W\).  Thus, conditional on not choosing the self-loop,
the repair flip is uniform over all crossing vertices, and unconditionally it
is uniform over all feasible repair choices adjacent to the odd intermediate
state.  Each such choice is a distinct coordinate flip from the odd set, so
sampling a vertex is the same as sampling the corresponding feasible even
neighbor of the flip--repair transition.

\subsection{Tour Update}

If the sampled repair flip is \(x\), the chain stays at the current tour.
Otherwise, suppose the four involved occurrences have cyclic order
\[
  a\; P\; c\; Q\; b\; R\; d\; S,
\]
where \(a,b\) belong to \(x\) and \(c,d\) belong to \(y\).  Flipping \(x\) and
\(y\) changes the cyclic order to
\[
  a\; R\; d\; Q\; b\; P\; c\; S.
\]
This is a constant number of cuts and concatenations; no reversal is needed in
the directed successor-swap convention.  We first split chunks at the four cut
positions, perform the treap update, and then rebuild the boundary chunks.  Only
\(O(B)\) occurrences change chunk membership during this rebuild.
The pair lists depend only on the two chunk identifiers containing a vertex's
occurrences.  Therefore merely reordering whole chunks during the cyclic
reassembly does not change any pair-list membership; only occurrences lying in
newly split or rebuilt chunks need to be moved between pair lists and reflected
in the treap aggregates.

\begin{lemma}[Local chunk rebuilding]\label{lem:local-rebuild}
After one nontrivial flip--repair update, the chunk invariant can be restored by
changing chunk identifiers for \(O(B)\) occurrences and rebuilding \(O(1)\)
chunks.
\end{lemma}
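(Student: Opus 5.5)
The plan is to track exactly which chunks are disturbed by the four-cut update and show that the damage is confined to a constant number of chunks near the four cut positions. Before the update every chunk has size in \([B/2,2B]\). The update cuts the cyclic sequence at the four positions immediately after the occurrences \(a,c,b,d\). These positions lie in at most four chunks. We split each such chunk at the cut into two pieces, so the sequence \(a\,P\,c\,Q\,b\,R\,d\,S\) is now a concatenation of whole chunks. At most eight of these are pieces produced by the splits; every other chunk is an untouched original. The blocks \(P,Q,R,S\) (with their leading marked occurrences) are contiguous runs of chunks. The reassembly \(a\,R\,d\,Q\,b\,P\,c\,S\) therefore only permutes these runs, implemented as \(O(1)\) treap splits and concatenations. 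As the paper already observes, reordering whole chunks changes no chunk identifier and no pair-list membership.

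Next we restore the size invariant. The only chunks that can violate \([B/2,2B]\) are the \(O(1)\) split pieces, each of size at most \(2B\). For each maximal group of adjacent small pieces at a seam of the new sequence, we merge the pieces with one neighboring chunk. The result is a run of \(O(1)\) chunks whose total length is \(O(B)\), and we re-partition it greedily into chunks of size in \([B/2,2B]\). When the total length is at least \(B/2\), this partition exists: cut into pieces of length \(B\), and fold the remainder into the last piece, keeping it at most \(2B\). If the total is below \(B/2\), this can only happen when \(M<B/2\), which is excluded because \(B=\lceil\sqrt M\rceil\) and \(M\) is large. Otherwise we absorb one more neighbor. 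There are \(O(1)\) seams (at most eight), and each rebuilt window contains \(O(1)\) chunks of size \(O(B)\). Hence the rebuilt chunks are \(O(1)\) in number and cover \(O(B)\) occurrences in total. These are exactly the occurrences that receive fresh chunk identifiers, following the vertex-by-vertex identifier protocol described earlier.

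The main obstacle is the bookkeeping needed to confirm that nothing outside the rebuilt windows changes. Occurrences in untouched chunks keep their chunk identifiers. Their pair lists \(L[C,D]\) and endpoint contributions to the treap vectors are keyed only by identifiers, so they remain valid. The ancestor paths of untouched chunks change only through the treap restructuring, and the split/concatenate recomputation already handles that. I would verify two further points. First, adjacent seams may share a neighbor chunk; this is handled by merging overlapping windows, which keeps the count \(O(1)\). Second, a single vertex may have both occurrences inside the rebuilt region; this is handled by the delete-old-pair/insert-new-pair rule stated earlier. With these checks the lemma follows, with \(O(B)\) relabeled occurrences and \(O(1)\) rebuilt chunks.
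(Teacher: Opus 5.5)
Your proposal is correct and follows essentially the same route as the paper. Both arguments split only the at most four cut chunks, observe that only the resulting pieces can violate the size bound, absorb neighbouring chunks into $O(1)$ coalesced windows of length $O(B)$, and repartition each window greedily. The only differences are minor: the paper adds both the left and right neighbours rather than one, treats $M<B/2$ as a trivial degenerate case instead of excluding it, and your segments $Pc$, $Qb$, $Rd$, $Sa$ carry their marked occurrences as trailing rather than leading elements.
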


\begin{proof}
Before the update, every chunk has size at most \(2B\).  The four cut positions
therefore lie in at most four old chunks of total size at most \(8B\).  Splitting
at the cut positions creates only pieces of these old chunks; every other chunk
is moved as an intact block during the cyclic reassembly and keeps its
identifier.

After the reassembly, all chunks that can violate the size lower bound are among
these \(O(1)\) split pieces.  For each maximal run of such boundary pieces, add
its immediate left and right neighboring chunks, unless the whole cyclic
sequence is already contained in the run.  Coalesce overlapping windows.  The
result is \(O(1)\) disjoint windows of total length \(O(B)\): they contain the
pieces of the four old cut chunks and at most two neighboring chunks of size at
most \(2B\) per run.

Repartition each window into consecutive chunks of sizes in \([B/2,2B]\), except
in the degenerate case \(M<B/2\), where there is only one chunk and all
operations are constant time.  One concrete rule is to cut chunks of size \(B\)
from left to right and, if the final remainder has size below \(B/2\), merge it
with the previous new chunk; the merged chunk has size below \(3B/2\).  If a
window had length below \(B/2\), the added neighbor makes its length at least
\(B/2\).  Since each rebuilt window has length \(O(B)\), it creates \(O(1)\)
chunks and changes identifiers for only \(O(B)\) occurrences in total.
\end{proof}

\begin{theorem}[Degree-two implementation]\label{thm:data-structure}
For a \(2\)-in/\(2\)-out directed Eulerian graph with \(M\) arcs, the
degree-two flip--repair chain can be implemented with
\[
  O\!\left(M+\frac{M^2}{B^2}\right)
\]
space and
\[
  O\!\left(B\log M+\frac{M}{B}\log M\right)
\]
amortized time per step.  With \(B=\lceil\sqrt M\rceil\), this is \(O(M)\)
space and \(O(\sqrt M\log M)\) time per step.
\end{theorem}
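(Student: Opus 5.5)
The proof is an accounting argument over the structures already described: the occurrence/vertex tables, the pair lists $L[C,D]$, the identifier pool of size $Q=O(M/B)$, and the treap whose internal nodes carry vectors of length $Q$. Correctness of one step is already in hand. \Cref{lem:crossing} identifies the feasible repair flips with $x$ together with the chords crossing $x$. The class A/class B decomposition samples uniformly from this set. The four-cut reassembly $a\,P\,c\,Q\,b\,R\,d\,S\mapsto a\,R\,d\,Q\,b\,P\,c\,S$ realizes the two-flip update. What remains is to bound space and time and to verify that the invariants survive each update.

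\textbf{Space.} There are three contributions. Occurrence and vertex records take $O(M)$. The pair lists have total size $O(M)$, plus $O(Q^2)=O(M^2/B^2)$ headers. The treap has $O(q)$ internal nodes, each storing a vector of length $Q$, for $O(q\,Q)=O(M^2/B^2)$ in total. Summing gives $O(M+M^2/B^2)$, which is $O(M)$ when $B=\lceil\sqrt M\rceil$.

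\textbf{Query time.} I would split the cost into four parts.
\begin{itemize}
\item Locating the chunks and offsets of $a_x,b_x$ and their ranks costs $O(\log q)$.
\item Forming $\mathbf V_F$ adds $O(\log q)$ covering node vectors, each of length $Q$, for $O(Q\log q)$. The sum $W_A$ over $D\notin F\cup\mathcal B$, the choice of $D$, and the scan of the chunks $C\in F$ by list sizes each cost $O(Q)$.
\item The class B boundary scan touches $O(B)$ occurrences and spends $O(\log q)$ on each rank test, for $O(B\log q)$.
\item Choosing between the self-loop and the two classes costs $O(1)$.
\end{itemize}
Since $\log q=O(\log M)$, the query costs $O(B\log M+(M/B)\log M)$.

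\textbf{Update time.} Four splits and the concatenations needed for the cyclic reassembly are $O(1)$ treap operations. Each one recomputes $O(\log q)$ node vectors of length $Q$, so the total is $O(Q\log q)$. By \cref{lem:local-rebuild}, only $O(B)$ occurrences change chunk identifier, and only $O(1)$ chunks are rebuilt. For each affected vertex we move it between pair lists in $O(1)$ and adjust its endpoint contributions along two ancestor paths in $O(\log q)$, for $O(B\log q)$ in total. Retiring the $O(1)$ old identifiers clears $O(Q)$ headers each, or bumps a generation counter. All of these costs fit within the per-step bound.

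The key point for correctness of the aggregates is that whole chunks moved during reassembly keep their identifiers. Their pair-list memberships are therefore unchanged, and their endpoint contributions are restored automatically when the treap recomputes node vectors bottom-up after each split or concatenation. So the invariant on $\mathbf V_U$ is maintained using only the per-vertex corrections for the $O(B)$ affected occurrences.

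\textbf{Where amortization enters.} The main obstacle is the treap: its depth bound $O(\log q)$ is only an expectation (or amortized, for other balanced trees), so the time bound is stated as amortized/expected. Preprocessing is also charged here. It consists of building the initial tour with Hierholzer in $O(M)$, chunking it, filling the pair lists, and building the treap vectors bottom-up in $O(qQ)=O(M)$ for $B=\sqrt M$. This one-time cost is absorbed into the $\Omega(M)$ steps.

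Plugging in $B=\lceil\sqrt M\rceil$ gives $O(\sqrt M\log M)$ time per step.
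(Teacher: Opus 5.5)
Your proposal is correct and follows the paper's proof essentially step for step. It has the same $O(M+Q^2)$ space count and the same $O(Q\log q+q+B\log q)$ query accounting, and its update accounting of $O(Q\log q)$ treap recomputation, $O(\log q)$ per each of the $O(B)$ affected vertices from the local-rebuild lemma, and $O(Q)$ for retired identifiers is also the paper's. The only differences are cosmetic: you fold the one-time preprocessing into the amortized per-step cost, whereas the paper charges it separately as lower order in the sampler's proof, and you omit the paper's remark that a cyclic interval crossing the linearization point is covered as two ordinary intervals, which does not change the bound.
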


\begin{proof}
The pair lists \(L[C,D]\) contain each vertex once, so their total size is
\(O(M)\).  The chunk-identifier pool has size \(Q=O(M/B)\), so there are
\(O(Q^2)=O(M^2/B^2)\) list headers.  The treap stores one length-\(Q\) vector
at each internal node.  Since the top-level tree has \(O(q)=O(Q)\) nodes, these
vectors take \(O(Q^2)\) space.  The total space is
\(O(M+Q^2)=O(M+M^2/B^2)\), which is \(O(M)\) when \(B\) is chosen on the order
of \(\sqrt M\).
The initial chunks, pair lists, and aggregate vectors can be built within the
same asymptotic space and time by a single pass over the tour followed by a
bottom-up construction of the treap.

For a query chord \(x\), locate the two chunks containing \(a_x\) and \(b_x\).
The open interval \(I=(a_x,b_x)\) consists of two partial boundary chunks and a
contiguous interval of full chunks \(F\) in the top-level cyclic order.  The
full chunks in \(F\) are covered by \(O(\log q)\) treap nodes, without changing
the chunk decomposition.  If the cyclic interval crosses the chosen
linearization point of the cyclic order, we cover it as two ordinary intervals,
which preserves the same asymptotic bound.  Summing their stored vectors costs
\(O(Q\log q)\).
The class-A chunk pair is then sampled by a prefix-sum scan over the \(q\)
possible outside chunks and, after the outside chunk \(D\) is chosen, a scan
over the inside chunks \(C\in F\).  This additional scan costs \(O(q)\).  Class
B only inspects the two boundary chunks, hence \(O(B)\) occurrences, and the
interval-membership tests for these occurrences cost \(O(B\log q)\).  Thus
sampling a valid repair flip costs
\[
  O(Q\log q+q+B\log q)=O(Q\log q+B\log q),
\]
because \(q\le Q\).

The tour update performs a constant number of treap splits and concatenations.
Each affected treap node recomputes a length-\(Q\) vector, for
\(O(Q\log q)\) time.  By \cref{lem:local-rebuild}, the only occurrences whose
chunk identifiers can change lie in \(O(1)\) rebuilt windows of total size
\(O(B)\).  Thus one update affects \(O(B)\) occurrences and hence \(O(B)\)
vertices.  For each affected vertex we update its pair-list entry in \(O(1)\)
time and perform the old-pair deletion and new-pair insertion in the aggregate
vectors on \(O(\log q)\) ancestor nodes as described above.  Occurrences in
chunks that are only cut, concatenated, or moved as whole chunks keep the same
pair-list entry.  Finally, for the \(O(1)\) retired identifiers, clearing the
incident pair-list headers or bumping their generation counters costs \(O(Q)\)
time, already within the query and treap-update budget.
Thus rebuilding contributes \(O(B\log q+Q)\) time to the step that performs it.
The identifier-free-list convention from \cref{sec:data-structure} is safe
because all per-vertex endpoint contributions involving an old rebuilt chunk are
deleted before the old identifier is recycled.
Substituting \(q=O(M/B)\) and \(Q=O(M/B)\) proves the theorem.
\end{proof}

\begin{proof}[Proof of \cref{thm:degree-two-sampler}]
If \(H\) has one vertex, then it has two loop arcs.  Of the two local
transitions, exactly one orders these two arcs in a single directed cycle; the
algorithm outputs that transition system.
Assume from now on that \(H\) has at least two vertices.
Find an initial Eulerian tour \(C\) of \(H\) by Hierholzer's algorithm
\cite{Hierholzer1873}.  By \cref{thm:tour-pfaffian}, representing transition
systems by the set of vertices flipped relative to \(C\) identifies the uniform
measure on \(\ET(H)\) with a flat skew-determinantal measure.  Start the chain
at \(S=\varnothing\), the state corresponding to \(C\).  The matrix \(A_C\) is
only an analysis device: the implementation of a step uses the current tour and
the crossing criterion of \cref{lem:crossing}, not the entries of \(A_C\).  By
\cref{thm:flat-lsi}, the flip--repair chain reaches total variation distance
\(\eps\) from stationarity after
\[
  O(M\log M[\log M+\log(1/\eps)])
\]
steps.  Here the number of coordinates is \(n=\card{V(H)}\), and
after the one-vertex loop case has been removed we have \(M=2n\); the displayed
bound is the same, up to constants, if one simply uses \(n\le M\).  By
\cref{thm:data-structure}, each step costs \(O(\sqrt M\log M)\) amortized time
and the maintained data structure uses \(O(M)\) space.  The initial tour and the
initial chunk data structure are built in \(O(M)\) time, which is lower order
than the walk length.  Multiplying the number of steps by the per-step cost
gives the claimed running time.
\end{proof}

\section{A Standard Degree-Reduction Gadget}\label{sec:gadgets}

It remains to connect the degree-two sampler to a graph with arbitrary
Eulerian degrees.  The reduction is local.  If
\(d=\deg^+(v)=\deg^-(v)\), then a transition system chooses a permutation from
the \(d\) incoming arcs of \(v\) to the \(d\) outgoing arcs.  We replace this
local choice by a degree-two switching network.
Vertices with \(d=1\) have a forced local transition; we contract them before
the reduction and expand them again after sampling.
This is a supporting reduction, not the main mixing argument.  It is in the
same spirit as earlier local gadget reductions for Eulerian tours and A-trails
\cite{GeStefankovic2012} and the degree-reduction step used for Eulerian tours
in parallel discrete sampling via continuous walks
\cite{AnariHuangLiuVuongXuYu2023}.
We first check whether every vertex has degree one.  In that all-forced case,
strong connectivity makes the graph a directed cycle and there is nothing to
randomize.  Otherwise, at least one vertex has degree at least two, and
suppressing a degree-one vertex does not change the degrees of any other vertex.
Thus a non-forced vertex remains throughout the suppression process, and after
all forced vertices have been suppressed the remaining graph has minimum degree
at least two.

The gadget reduction has two separate obligations.  First, contracting each
gadget back to the original vertex must preserve the number of cycles in the
transition system.  Second, for each original local permutation, the number of
expanded local settings projecting to it must be almost independent of the
permutation.  The next lemmas prove these two facts.

\begin{lemma}[Suppressing a forced vertex]\label{lem:suppress-degree-one}
Let \(v\) be a degree-one vertex in a directed Eulerian multigraph.  Let \(e\)
be the unique arc entering \(v\), and let \(f\) be the unique arc leaving
\(v\).  Assume \(e\ne f\); in a strongly connected graph this excludes only the
single-vertex single-loop case.  Form \(G/v\) by deleting \(e\) and \(f\) and
inserting one new arc \(g\) from the tail of \(e\) to the head of \(f\).  Then
transition systems of \(G\) are in bijection with transition systems of
\(G/v\), and the bijection preserves the number of cycles.  In particular, it
preserves the property of being one Eulerian tour.
\end{lemma}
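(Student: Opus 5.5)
The plan is to write down the bijection explicitly and then check that it acts on the induced arc permutations by deleting a single element from one cycle. Let \(u=\operatorname{tail}(e)\) and \(w=\operatorname{head}(f)\).

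\emph{The bijection.} A transition system \(\tau\) of \(G\) is determined by its bijections at vertices other than \(v\); at \(v\) the bijection \(e\mapsto f\) is forced. The graph \(G/v\) has the same vertex set minus \(v\). At \(u\), the new arc \(g\) takes the place of \(e\) as an outgoing incidence. At \(w\), \(g\) takes the place of \(f\) as an incoming incidence. Every other incidence is unchanged.

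We define \(\tau'\) on \(G/v\) by relabeling:
\begin{itemize}
  \item \(\tau'_u\) equals \(\tau_u\) with the output \(e\) replaced by \(g\);
  \item \(\tau'_w\) equals \(\tau_w\) with the input \(f\) replaced by \(g\);
  \item \(\tau'_x=\tau_x\) at every other vertex.
\end{itemize}
When \(u=w\), both replacements happen at the same vertex, one on the incoming side and one on the outgoing side. This map is clearly invertible, because the replacements can be undone and \(\tau_v\) is forced.

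\emph{Case checks.} Loops need care: \(e\) or \(f\) could be loops only if \(u=v\) or \(w=v\). But \(v\) has degree one and \(e\ne f\), so neither \(e\) nor \(f\) is a loop at \(v\). Hence \(u\ne v\) and \(w\ne v\). Similarly, \(g\) is a loop exactly when \(u=w\), and then the loop convention (distinct incoming and outgoing incidences) makes the relabeling well defined.

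\emph{Cycle structure.} Compare the permutations \(\pi_\tau\) on \(E\) and \(\pi_{\tau'}\) on \(E'=(E\setminus\{e,f\})\cup\{g\}\). In \(\pi_\tau\) we have \(\pi_\tau(e)=f\), and \(\pi_\tau^{-1}(e)\) is the arc entering \(u\) that \(\tau_u\) sends to \(e\). For any arc \(a\notin\{e,f\}\), the relabeling gives \(\pi_{\tau'}(a)=\pi_\tau(a)\) if \(\pi_\tau(a)\ne e\), and \(\pi_{\tau'}(a)=g\) if \(\pi_\tau(a)=e\). Likewise \(\pi_{\tau'}(g)=\tau_w(f)=\pi_\tau(f)\). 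Thus \(\pi_{\tau'}\) is obtained from \(\pi_\tau\) by contracting the consecutive pair \(e\to f\) in its cycle into the single element \(g\). This merges two consecutive elements of one cycle, which gives a nonempty cycle, so the number of cycles is unchanged.

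\emph{Main obstacle.} There is no real difficulty. The only step needing attention is the degenerate cases \(u=w\), or \(\pi_\tau(f)=e\) (that is, the cycle is exactly \(e,f\)). In the latter case the contracted cycle is the single element \(g\) with \(\pi_{\tau'}(g)=g\), which is still one cycle. I would verify these degenerate cases directly from the formulas above.
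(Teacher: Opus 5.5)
Your proposal is correct and follows the paper's proof: it uses the same relabeling bijection, replacing \(e\) by \(g\) as an output at the tail of \(e\) and \(f\) by \(g\) as an input at the head of \(f\), with \(\tau_v(e)=f\) forced. It then uses the same observation that \(\pi_\tau(e)=f\) makes \(e,f\) consecutive in one cycle, so contracting them to \(g\) neither splits nor merges cycles. Your extra checks are details the paper leaves implicit: that \(e,f\) are not loops, the case where the tail of \(e\) equals the head of \(f\), and the two-arc cycle giving \(\pi_{\tau'}(g)=g\), which is exactly where your formula \(\pi_{\tau'}(g)=\pi_\tau(f)\) needs the correction you already flagged.
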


\begin{proof}
At \(v\), every transition system must use the forced local successor
\(e\mapsto f\).  Given a transition system of \(G\), modify only the two local
bijections adjacent to the contracted path: at the tail of \(e\), replace the
outgoing incidence of \(e\) in the codomain by the outgoing incidence of the new
arc \(g\); at the head of \(f\), replace the incoming incidence of \(f\) in the
domain by the incoming incidence of \(g\).  All other local choices are
unchanged.  This gives a transition system of \(G/v\), and the operation is
reversible by expanding \(g\) back into the forced pair \(e,f\).

In the arc permutation of \(G\), the arcs \(e\) and \(f\) are consecutive in
every cycle because \(\pi(e)=f\).  Contracting this consecutive pair to the
single arc \(g\) does not split or merge cycles; it only shortens one cycle by
one arc.  Thus the number of cycles is preserved.
\end{proof}

A \(D\)-wire switching network has \(D\) input ports, \(D\) output ports, and
internal vertices called switches.  Ports are attachment points, not vertices of
the expanded graph.  The arcs of the network are directed wire segments running
from input ports through switches and then to output ports.  Each switch has two
incoming wire segments and two outgoing wire segments, so it is a
\(2\)-in/\(2\)-out vertex.  A switch setting is one of the two local successor
pairings at that vertex: straight or crossed.  Once every switch is set, every
input port is routed along a unique directed path to an output port, and these
paths define a permutation of \([D]\).

The networks we use are sparse random lazy-transposition networks with a short
guard layer whose locations are deterministic.  First insert switches between
wires
\[
  (1,2),(2,3),\ldots,(D-1,D).
\]
This touches every wire at least once.  Then choose
\[
  R=2\left\lceil C\,D\bigl(\log D+\log(1/(\eta\delta))\bigr)\right\rceil
\]
unordered pairs \(a_tb_t\) from \(\binom{[D]}2\), independently and uniformly.
For each chosen pair, insert one switch between the current wire \(a_t\) and
the current wire \(b_t\).  Wires that are not involved in a switch continue to
their next switch or output port as a single directed arc segment.  Thus the
network has \(D-1+R\) internal vertices and \(O(D+R)\) arc segments.  A setting
of the switches is exactly a product of lazy transpositions in the prescribed
order.  The guard layer is included only to make the graph replacement
well-formed after port suppression; it does not hurt the pointwise mixing
bound, because convolving an already pointwise-close distribution with any
other distribution preserves pointwise closeness to uniform.

We use the following upper-bound-sum form of the Diaconis--Shahshahani analysis
of random transpositions \cite{DiaconisShahshahani1981}.  Let
\(\widehat{S_D}\) be the set of irreducible representations of \(S_D\).  For
\(\lambda\in\widehat{S_D}\), let \(d_\lambda\) be its dimension, \(\chi_\lambda\)
its character, and
\[
  \alpha_\lambda
  =
  \frac12\left(1+\frac{\chi_\lambda(\tau)}{d_\lambda}\right),
\]
where \(\tau\) is any transposition.  Thus \(\alpha_\lambda\) is the scalar by
which the lazy random-transposition kernel acts on the \(\lambda\)th
isotypic component.  In particular, \(\alpha_\lambda\in[0,1]\), since these
scalars are eigenvalues of a lazy reversible Markov kernel.

\begin{lemma}[Diaconis--Shahshahani upper-bound sum]\label{lem:ds-upper}
There is an absolute constant \(C_{\rm DS}\) such that for every \(D\ge2\) and
every \(0<\xi<1\), if
\[
  k\ge C_{\rm DS}D\bigl(\log D+\log(1/\xi)\bigr),
\]
then
\[
  \sum_{\lambda\in\widehat{S_D},\,\lambda\ne(D)}
  d_\lambda^2\alpha_\lambda^k
  \le
  \xi .
\]
\end{lemma}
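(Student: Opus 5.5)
The plan is to bound the sum directly using the Diaconis--Shahshahani character bounds for transpositions. For a partition \(\lambda=(\lambda_1,\ldots,\lambda_\ell)\) of \(D\), the Frobenius formula gives
\[
  r_\lambda:=\frac{\chi_\lambda(\tau)}{d_\lambda}
  =\frac{1}{D(D-1)}\sum_j\bigl(\lambda_j^2-(2j-1)\lambda_j\bigr),
\]
so \(\alpha_\lambda=(1+r_\lambda)/2\). The first simplification uses laziness. Since \(\alpha_\lambda\in[0,1]\) and \(\alpha_{\lambda'}=1-\alpha_\lambda\) for the conjugate partition, every \(\alpha_\lambda\) is bounded away from \(1\) unless \(\lambda\) is near \((D)\). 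In particular \(\alpha_{(1^D)}=0\). The sign issue that complicates the non-lazy analysis, namely eigenvalues near \(-1\), therefore disappears. Every term satisfies \(\alpha_\lambda^k\le \exp(-k(1-\alpha_\lambda))=\exp(-k(1-r_\lambda)/2)\), so it suffices to bound
\[
  \sum_{\lambda\ne(D)} d_\lambda^2\exp\bigl(-k(1-r_\lambda)/2\bigr).
\]

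Next I would split by the length of the first row, writing \(\lambda_1=D-j\). For \(j\) small, say \(1\le j\le D/2\), I would use the standard monotonicity estimate \(r_\lambda\le r_{(D-j,j)}\). Combined with \(r_{(D-j,j)}=1-\frac{2j(D-j+1)}{D(D-1)}\), this gives \(1-r_\lambda\ge j(D-j+1)\cdot 2/D^2\ge j/D\). For the dimensions I would use \(\sum_{\lambda_1=D-j}d_\lambda^2\le \binom Dj^2 j!\le D^{2j}/j!\). Each \(j\)-block then contributes at most \(D^{2j}e^{-kj/(2D)}/j!\). With \(k\ge C D(\log D+\log(1/\xi))\) and \(C\) large, this is at most \((\xi/D)^{j}\) up to the \(1/j!\) factor, so the sum over \(j\ge1\) is \(O(\xi/D)\le\xi/2\).

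For \(\lambda_1< D/2\), I would use the cruder bound \(r_\lambda\le \lambda_1/D\le 1/2\), since \(r_\lambda\le\max_j\lambda_j/(D-1)\) approximately. Then \(1-r_\lambda\ge 1/2\), and the whole remaining sum is at most \(\sum_\lambda d_\lambda^2 e^{-k/4}=D!\,e^{-k/4}\le \exp(D\log D-k/4)\). This is at most \(\xi/2\) once \(k\ge 4D\log D+4\log(2/\xi)\), which the hypothesis covers for \(C_{\rm DS}\) large. The parameter ranges in the two regimes overlap slightly depending on how \(j\) and \(\lambda_1\) are defined, so I would set the cutoff at \(\lambda_1\ge D/2\) versus \(\lambda_1<D/2\).

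The main obstacle is the intermediate regime \(\lambda_1\) between roughly \(D/2\) and \(D-\sqrt D\). Here the inequality \(1-r_\lambda\ge j/D\) must hold uniformly over all shapes with first row \(D-j\), not only two-row shapes. I would first try the majorization fact that \(r\) is increasing in dominance order, which gives \(r_\lambda\le r_{(D-j,j)}\) when \(j\le D-j\). I would also check that \(j(D-j+1)\ge jD/2\) holds in this range. If the counting with \(D^{2j}/j!\) is too loose near \(j\approx D/2\), the fallback is to move the cutoff so that this range is absorbed by the \(D!e^{-k/4}\) bound. Since the statement only needs the form \(O(D(\log D+\log 1/\xi))\) and not sharp constants, generous constants are acceptable throughout. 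This is the classical argument of \cite{DiaconisShahshahani1981}, and citing it for the character-ratio and dominance estimates would be legitimate.
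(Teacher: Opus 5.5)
Your proof is correct. It uses the same Diaconis--Shahshahani mechanism the paper invokes, but the two proofs are presented differently. The paper's proof is purely a citation: it invokes their upper-bound sum and the Zone I--III character-ratio estimates, and remarks that laziness turns $1-\beta_\lambda$ into $(1-\beta_\lambda)/2$ and so only changes the constant. You instead reprove the needed non-sharp bound from scratch, using two regimes in place of three zones.

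The ``intermediate regime'' you flag as the main obstacle is already handled by your own estimates, and no fallback is needed.
\begin{itemize}
\item The split $\lambda_1\ge D/2$ versus $\lambda_1<D/2$ is exhaustive.
\item For $\lambda_1=D-j$ with $j\le D/2$, dominance gives $1-r_\lambda\ge 2j(D-j+1)/(D(D-1))\ge j/D$ uniformly over all shapes with that first row.
\item Each unit of $j$ then costs at most $D^2e^{-k/(2D)}\le D^2(\xi/D)^{C/2}$. So the count $\binom Dj^2 j!\le D^{2j}/j!$ is never too loose, even near $j\approx D/2$.
\item For $\lambda_1<D/2$, the bound $r_\lambda\le(\lambda_1-1)/(D-1)<1/2$ gives $D!\,e^{-k/4}$.
\end{itemize}
Concretely, $C_{\rm DS}=8$ already works. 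The first regime is at most $e^{\xi/4}-1<\xi/3$, and the second is at most $D^{-D}\xi^{2D}\le\xi/4$.

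What your route buys is an explicit, checkable argument for exactly the lazy kernel and the exponent $k$ (not $2k$) appearing in the statement. Laziness puts every $\alpha_\lambda$ in $[0,1]$, so the negative-eigenvalue issues of the non-lazy analysis disappear. It also avoids relying on the unverified assertion that the zone estimates transfer to the lazy chain with only a changed constant. The citation buys brevity and, in principle, sharper constants, which this lemma does not need.
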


\begin{proof}
This is the standard Fourier upper-bound lemma together with the
Diaconis--Shahshahani character estimates.  For completeness, we spell out the
normalization.  If \(Q\) is the lazy random-transposition kernel on \(S_D\), then
\(Q\) is central and Schur's lemma gives
\[
  \widehat Q(\lambda)=\alpha_\lambda I_{d_\lambda}
\]
on the \(\lambda\)th irreducible representation.  Fourier inversion therefore
gives, for the \(k\)-step law \(Q^k(e,\cdot)\),
\[
  \sup_{\sigma\in S_D}\abs{D!\,Q^k(e,\sigma)-1}
  \le
  \sum_{\lambda\ne(D)} d_\lambda^2 \alpha_\lambda^k .
\]
In their notation, this is the upper-bound sum in their equation (3.1), together
with the character-ratio estimates split into Zones I--III in their Section 3.
Those estimates show that the sum is at most \(\xi\) after
\(C D(\log D+\log(1/\xi))\) transpositions, for a universal constant \(C\).
They use only bounds on the transposition character ratio
\(\beta_\lambda=\chi_\lambda(\tau)/d_\lambda\).  Passing to the lazy walk
replaces \(\beta_\lambda\) by
\(\alpha_\lambda=(1+\beta_\lambda)/2\), so
\(1-\alpha_\lambda=(1-\beta_\lambda)/2\).  The same zone estimates therefore
apply with a different absolute constant.
\end{proof}

Equivalently, the \(k\)-step lazy random-transposition law \(p_k\) satisfies
\[
  \sup_{\sigma\in S_D}\abs{D!\,p_k(\sigma)-1}\le \xi
\]
under the same hypothesis, because Fourier inversion bounds the left-hand side
by the displayed character sum.  We use this character-sum form because it
passes cleanly through the second-moment calculation for a fixed random
switching network.

\begin{proof}[Proof of \cref{thm:pointwise-switching}]
Use the random network described at the start of the section with
\[
  C\ge C_{\rm DS}
\]
large enough for \cref{lem:ds-upper} with
\(\xi=\eta\delta/2\), and write \(R=2k\).  We first analyze only the random
suffix.  For a fixed realized suffix
\(\cR=(\tau_1,\ldots,\tau_R)\), let \(\mu_{\cR}\) be the distribution on \(S_D\)
obtained from uniform independent settings of these \(R\) switches.

We prove that, with probability at least \(1-\delta\) over the random
transpositions \(\tau_t\),
\[
  \max_{\sigma\in S_D}\abs{D!\,\mu_{\cR}(\sigma)-1}\le \eta/2.
  \tag{*}\label{eq:quenched-switching}
\]
Fix a unitary irreducible representation
\(\rho_\lambda:S_D\to U(d_\lambda)\).  The Fourier transform of
\(\mu_{\cR}\) at \(\lambda\) is
\[
  \widehat\mu_{\cR}(\lambda)
  =
  P_{\lambda,R}P_{\lambda,R-1}\cdots P_{\lambda,1},
  \qquad
  P_{\lambda,t}=\frac{I+\rho_\lambda(\tau_t)}2 .
\]
Each \(P_{\lambda,t}\) is an orthogonal projection, since \(\tau_t^2=e\).  Also,
because a uniformly random transposition is a conjugacy-class average, Schur's
lemma gives
\[
  \E_{\tau_t}{P_{\lambda,t}}=\alpha_\lambda I .
\]
We claim that
\begin{equation}\label{eq:random-product-hs}
  \E_{\cR}{\norm{\widehat\mu_{\cR}(\lambda)}_{\mathrm{HS}}^2}
  =
  d_\lambda\alpha_\lambda^R .
\end{equation}
Indeed, if \(M_t=P_{\lambda,t}\cdots P_{\lambda,1}\), then, conditioning on
\(M_{t-1}\),
\[
  \E_{\tau_t}{\norm{M_t}_{\mathrm{HS}}^2\mid M_{t-1}}
  =
  \E_{\tau_t}{\operatorname{Tr}(P_{\lambda,t}M_{t-1}M_{t-1}^*P_{\lambda,t})}
  =
  \alpha_\lambda\norm{M_{t-1}}_{\mathrm{HS}}^2,
\]
where we used \(P_{\lambda,t}^2=P_{\lambda,t}\) and cyclicity of trace.
Starting from \(M_0=I\) proves \cref{eq:random-product-hs}.

Fourier inversion on \(S_D\) gives, for every fixed suffix and every
\(\sigma\in S_D\),
\[
  D!\,\mu_{\cR}(\sigma)-1
  =
  \sum_{\lambda\ne(D)}
  d_\lambda\operatorname{Tr}\!\left(
    \rho_\lambda(\sigma^{-1})\widehat\mu_{\cR}(\lambda)
  \right).
\]
Since \(\rho_\lambda(\sigma^{-1})\) is unitary,
\[
  \abs*{\operatorname{Tr}\!\left(
    \rho_\lambda(\sigma^{-1})\widehat\mu_{\cR}(\lambda)
  \right)}
  \le
  \sqrt{d_\lambda}\,
  \norm{\widehat\mu_{\cR}(\lambda)}_{\mathrm{HS}}.
\]
Thus the random variable
\[
  B(\cR)
  =
  \sum_{\lambda\ne(D)}
  d_\lambda^{3/2}
  \norm{\widehat\mu_{\cR}(\lambda)}_{\mathrm{HS}}
\]
dominates \(\max_\sigma\abs{D!\mu_{\cR}(\sigma)-1}\).  By Jensen's inequality
and \cref{eq:random-product-hs},
\[
  \E_{\cR}{B(\cR)}
  \le
  \sum_{\lambda\ne(D)}
  d_\lambda^{3/2}\sqrt{d_\lambda\alpha_\lambda^R}
  =
  \sum_{\lambda\ne(D)}
  d_\lambda^2\alpha_\lambda^k
  \le
  \frac{\eta\delta}{2},
\]
where the last inequality is \cref{lem:ds-upper}.  Markov's inequality gives
\(\P{B(\cR)>\eta/2}\le\delta\), proving \cref{eq:quenched-switching}.

On this good event,
\[
  1-\eta/2
  \le
  D!\,\mu_{\cR}(\sigma)
  \le
  1+\eta/2
  \qquad\forall\sigma\in S_D.
\]
Since \(1-\eta/2\ge e^{-\eta}\) and \(1+\eta/2\le e^\eta\) for
\(0<\eta<1\), the random suffix alone has the required pointwise
multiplicative bound.

Let \(\nu\) be the distribution induced by the deterministic guard layer under
uniform independent switch settings.  The full network law is either
\(\nu*\mu_{\cR}\) or \(\mu_{\cR}*\nu\), depending on the convention for
composition.  In either case, pointwise bounds for \(\mu_{\cR}\) imply the same
pointwise bounds for the convolution, because
\[
  \sum_{\pi\in S_D}\nu(\pi)\frac{e^{-\eta}}{D!}
  \le
  (\nu*\mu_{\cR})(\sigma)
  \le
  \sum_{\pi\in S_D}\nu(\pi)\frac{e^\eta}{D!}.
\]
The proof for \(\mu_{\cR}*\nu\) is identical.  Hence the full guarded network
satisfies the theorem.
\end{proof}

We now use this network as a graph replacement.  For a degree-\(d\) vertex
\(v\), apply \cref{thm:pointwise-switching} with \(D=d\).  Delete \(v\).
For each incoming arc \(i_j\), keep its tail and make its head the \(j\)th input
port.  For each outgoing arc \(o_j\), make the \(j\)th output port its tail and
keep its head.  The ports only splice original arcs to network wire segments.
After all splices are made, suppress the ports:
each maximal directed path whose internal points are ports and whose endpoints
are switches is replaced by one arc.  The guard layer ensures that every wire
has at least one switch, so port suppression does not create a closed directed
component with no switch vertex.  After suppression, the only vertices left in
the expanded graph are switches, and every such vertex has indegree and
outdegree two.  A realized network can still fail the pointwise guarantee of
\cref{thm:pointwise-switching}; that failure is not detected by the algorithm
and is accounted for only by the bad construction event in the analysis.
\Cref{fig:switching-gadget} shows this local replacement.

\begin{figure}[t]
  \centering
  \begin{tikzpicture}[
    scale=1.06,
    transform shape,
    x=1cm,
    y=.76cm,
    >=Latex,
    every node/.style={font=\small},
    wire/.style={line width=.7pt, blue!60!black},
    port/.style={circle, draw=black!58, fill=white, inner sep=1.05pt},
    switchbox/.style={draw=black!60, fill=white, rounded corners=1pt, line width=.45pt},
    arrow/.style={-{Latex[length=2mm,width=1.4mm]}, line width=.6pt},
    label/.style={font=\scriptsize, text=black!70}
  ]
    \draw[black!30, rounded corners=2pt, fill=black!2] (.45,.18) rectangle (7.15,4.28);
    \node[font=\scriptsize, fill=black!2, inner sep=2pt] at (3.8,4.28)
      {guarded random lazy-transposition network};

    \foreach \y/\lab/\outlab in {
      3.56/\(i_1\)/\(o_1\),
      2.68/\(i_2\)/\(o_2\),
      1.80/\(\cdots\)/\(\cdots\),
      .92/\(i_d\)/\(o_d\)
    } {
      \draw[arrow, blue!60!black] (-.25,\y) -- (.75,\y);
      \draw[wire] (.75,\y) -- (6.85,\y);
      \draw[arrow, blue!60!black] (6.85,\y) -- (7.95,\y);
      \node[anchor=east, label] at (-.32,\y) {\lab};
      \node[anchor=west, label, text=blue!60!black] at (8.02,\y) {\outlab};
      \node[port] at (.75,\y) {};
      \node[port] at (6.85,\y) {};
    }

    \foreach \x/\ya/\yb in {
      1.55/3.56/2.68,
      2.35/.92/2.68,
      3.15/3.56/.92,
      4.05/2.68/1.80,
      4.95/3.56/2.68,
      5.85/1.80/.92
    } {
      \pgfmathsetmacro{\ym}{(\ya+\yb)/2}
      \pgfmathsetmacro{\ytop}{max(\ya,\yb)+.26}
      \pgfmathsetmacro{\ybot}{min(\ya,\yb)-.26}
      \path[switchbox] (\x-.24,\ybot) rectangle (\x+.24,\ytop);
      \draw[black!45, line width=.45pt] (\x-.13,\ya) -- (\x+.13,\yb);
      \draw[black!45, line width=.45pt] (\x-.13,\yb) -- (\x+.13,\ya);
    }

    \node[label, align=center] at (3.8,.38)
      {each switch is independently straight or crossed};
  \end{tikzpicture}
  \caption{A degree-\(d\) vertex is replaced by a \(d\)-wire sparse switching
  network.  A short guard layer touches every wire, and a random construction
  chooses the remaining pairs of wires that meet at switches.  Uniform switch
  settings then induce a pointwise almost-uniform permutation of the \(d\) local
  successors.}
  \label{fig:switching-gadget}
\end{figure}

\begin{lemma}[Projection preserves cycle structure]\label{lem:gadget-cycles}
Fix an original transition system \(T\) of \(G\).  For each vertex \(v\), choose
a network setting whose induced permutation is the local permutation of \(T\) at
\(v\).  Let \(T^\star\) be the expanded transition system obtained by these
gadget choices.  Then \(T^\star\) and \(T\) have the same number of directed
cycles.  Conversely, every expanded transition system that is an Eulerian tour
projects to an Eulerian tour of \(G\).
\end{lemma}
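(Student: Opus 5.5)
The idea is to view the expanded arc permutation \(\pi_{T^\star}\) as a refinement of \(\pi_T\) in which each original transition \(e\mapsto\tau_v(e)\) is replaced by a directed path through the gadget at \(v\). The argument is the same contraction idea as in \cref{lem:suppress-degree-one}, applied to whole gadget paths instead of single forced arcs.

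\emph{Arc correspondence.} After port suppression, every arc of \(G^\star\) is of one of two kinds. It is either an \emph{internal} wire segment of some gadget, or it is the image of a maximal port-path. A maximal port-path consists of an original arc \(e\) of \(G\) from \(u\) to \(v\), spliced to the last wire segment of the \(u\)-gadget and the first wire segment of the \(v\)-gadget. For the second kind we write \(\hat e\) for the resulting arc. The guard layer guarantees that every wire meets a switch, so each original arc \(e\) yields exactly one such arc \(\hat e\), and these arcs are distinct for distinct \(e\).

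\emph{Cycle count.} Fix switch settings realizing \(\tau_v\) at every \(v\). By definition of the induced routing, the input wire \(j\) at \(v\) follows a unique directed path of internal segments through switches and leaves on the output wire \(\tau_v(j)\). So starting from \(\hat e\) with \(\operatorname{head}(e)=v\), iterating \(\pi_{T^\star}\) passes through internal segments of the \(v\)-gadget and first reaches a non-internal arc at \(\widehat{\tau_v(e)}=\widehat{\pi_T(e)}\). Two facts then give the claim.
\begin{itemize}
\item Every internal segment lies on exactly one routed input-to-output path, because the routing is a bijection of wires and the switch settings are bijections. Hence every cycle of \(\pi_{T^\star}\) contains some arc \(\hat e\): a cycle made only of internal segments would have to stay inside one gadget, which is acyclic since its wires run from inputs to outputs.
\item The first-return map of \(\pi_{T^\star}\) to the set \(\{\hat e\}\) is conjugate to \(\pi_T\) via \(e\mapsto\hat e\).
\end{itemize}
A permutation and its induced first-return map on a set meeting every cycle have the same number of cycles. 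Therefore \(T^\star\) and \(T\) have the same number of cycles.

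\emph{Converse.} Every expanded transition system arises this way. Any setting of all switches induces, at each \(v\), some permutation \(\sigma_v\) in \(S_{d}\), and \(T\) is defined by \(\tau_v=\sigma_v\). Applying the first part to this \(T\), an expanded Eulerian tour (one cycle) projects to a transition system of \(G\) with one cycle, i.e.\ an Eulerian tour. The same holds after re-expanding the degree-one vertices, by \cref{lem:suppress-degree-one}.

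\emph{Main obstacle.} The main point needing care is the bookkeeping after port suppression. We must ensure that suppression produces neither an arc that is part internal and part original in an ambiguous way, nor a cycle avoiding all \(\hat e\). Both are handled by the guard layer: every wire contains a switch, so each port-path has switch endpoints.
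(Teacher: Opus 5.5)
Your proof is correct and is essentially the paper's argument: you contract each routed trajectory through a gadget to the corresponding local transition of $T$, and you get the converse by reading off the local permutations induced by an arbitrary switch setting. Your first-return-map formulation, together with the explicit remark that gadgets are acyclic so every expanded cycle meets some $\hat e$, makes rigorous the paper's one-line claim that subdividing or contracting arcs cannot split or merge cycles.
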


\begin{proof}
Inside one gadget, start at an input wire and follow the switch setting until
an output wire appears.  Contract this whole directed trajectory to the
corresponding original local transition.  Doing this independently in every
gadget changes only the subdivision of arcs in the permutation: it replaces each
maximal path through a gadget by one local successor choice at the original
vertex.  Subdividing or contracting arcs of a directed cycle cannot split or
merge cycles, so \(T^\star\) and \(T\) have the same number of cycles.

Conversely, every expanded transition system induces a well-defined local
permutation at each original vertex by following the \(d\) input wires of that
gadget to the \(d\) output wires.  If the expanded system has one cycle, then
contracting the gadget trajectories gives one cycle in the original graph.
\end{proof}

\begin{lemma}[Pointwise local lift counts]\label{lem:lift-count}
Assume a \(d\)-wire gadget has \(r\) switches and satisfies the pointwise bound
of \cref{thm:pointwise-switching} with error \(\eta\).  If \(\ell(\pi)\) is the
number of switch settings inducing \(\pi\in S_d\), then
\[
  e^{-\eta}\frac{2^r}{d!}
  \le
  \ell(\pi)
  \le
  e^\eta\frac{2^r}{d!}.
\]
\end{lemma}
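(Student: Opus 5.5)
This is a direct counting reformulation of the pointwise bound in \cref{thm:pointwise-switching}. The plan is to translate the induced distribution \(\mu_{\cN}\) into a count over switch settings and then multiply the pointwise inequality through by \(2^r\).

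First, I will pin down what ``set uniformly and independently'' means for a gadget with \(r\) switches. Each switch has exactly two settings, straight or crossed. The joint setting is therefore uniform over the \(2^r\) elements of \(\{\text{straight},\text{crossed}\}^r\), and each joint setting has probability exactly \(2^{-r}\).

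Second, I will use that every joint setting induces a well-defined permutation in \(S_d\), by routing each input port along its unique directed path to an output port. The settings inducing a given \(\pi\) form a subset of size \(\ell(\pi)\), so
\[
  \mu_{\cN}(\pi)=\ell(\pi)\,2^{-r}.
\]
Substituting this into the hypothesis
\[
  e^{-\eta}/d!\le\mu_{\cN}(\pi)\le e^\eta/d!
\]
and multiplying by \(2^r\) gives exactly the claimed bounds on \(\ell(\pi)\).

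There is no substantive obstacle. The only point needing care is that the \(r\) in the lemma counts every switch of the gadget, including the deterministic guard layer. This matters because the pointwise guarantee of \cref{thm:pointwise-switching} is stated for the full guarded network, so \(\mu_{\cN}\) is the law under uniform settings of all \(r\) switches. The guard-layer convolution argument in that proof confirms this.
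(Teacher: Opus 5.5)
Your proposal is correct and follows the paper's own proof: under uniform settings each of the \(2^r\) joint settings has probability \(2^{-r}\), so \(\mu_{\cN}(\pi)=\ell(\pi)/2^r\), and multiplying the pointwise bound of \cref{thm:pointwise-switching} by \(2^r\) gives the claim. Your remark that \(r\) must count the guard-layer switches as well is consistent with how the paper states and proves that theorem for the full guarded network.
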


\begin{proof}
Under uniform switch settings, the probability of inducing \(\pi\) is
\(\ell(\pi)/2^r\).  The pointwise bound says that this probability lies between
\(e^{-\eta}/d!\) and \(e^\eta/d!\).  Multiplying by \(2^r\) gives the claim.
\end{proof}

We can now define the projection used by the sampler.  Given an expanded
Eulerian tour of \(G^\star\), read off the network permutation realized inside
each gadget.  This is the local permutation at the corresponding vertex of the
suppressed graph.  Finally, undo the degree-one suppressions from
\cref{lem:suppress-degree-one}.  The result is a transition system of the
original graph \(G\), and by the cycle-preservation lemma it is again one
Eulerian tour.  We need two estimates from this construction: projection
probabilities must be nearly uniform over original tours, and the expanded
graph must remain nearly linear in size.

\begin{theorem}[Faithful degree-two expansion]\label{thm:faithful-expansion}
Let \(G\) be a strongly connected directed Eulerian multigraph.  Let
\(G^\star\) be the graph obtained as follows.  If every vertex has degree one,
do not add gadgets.  Otherwise, first repeatedly suppress degree-one vertices as
in \cref{lem:suppress-degree-one}; then replace every remaining vertex \(v\),
which now has degree \(d_v\ge2\), by a \(d_v\)-wire gadget satisfying the
pointwise bound of \cref{thm:pointwise-switching} with error \(\eta_v\),
constructed using failure parameter \(\delta_v\).
Let \(V_{\ge2}\) be the vertices that receive gadgets, and let
\(\eta_{\mathrm{tot}}=\sum_{v\in V_{\ge2}}\eta_v\).  If
\(\ET(G^\star)\ne\varnothing\), \(T^\star\) is uniform on \(\ET(G^\star)\), and
\(T\) is its projection to \(G\), then
\[
  \frac{e^{-2\eta_{\mathrm{tot}}}}{\card{\ET(G)}}
  \le
  \P{T=T_0}
  \le
  \frac{e^{2\eta_{\mathrm{tot}}}}{\card{\ET(G)}}
  \qquad\forall T_0\in\ET(G),
\]
and consequently
\[
  \norm*{
    \mathcal L(T)-\operatorname{Unif}(\ET(G))
  }_{\TV}
  \le
  e^{2\eta_{\mathrm{tot}}}-1.
\]
If \(V_{\ge2}\ne\varnothing\), the expanded graph has
\[
  M=O(m\,\polylog(m/(\eta_{\min}\delta_{\min}))),
  \qquad
  \eta_{\min}=\min_{v\in V_{\ge2}}\eta_v,
  \quad
  \delta_{\min}=\min_{v\in V_{\ge2}}\delta_v,
\]
arcs, where \(\delta_v\) is the failure probability used when constructing the
random gadget at \(v\).  If \(V_{\ge2}=\varnothing\), then all local
transitions are forced and no gadget is needed.
\end{theorem}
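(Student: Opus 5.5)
The plan is a lift-counting argument. First handle the suppression step. Repeated application of \cref{lem:suppress-degree-one} gives a cycle-preserving bijection between transition systems of \(G\) and those of the suppressed graph \(G'\), so one-cycle systems correspond bijectively. Both the uniform measure and the projection therefore transfer unchanged, and it suffices to prove the statement with \(G\) replaced by \(G'\), on which every vertex in \(V_{\ge2}\) carries a gadget.

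Next, partition the expanded transition systems of \(G^\star\) by the tuple of local permutations they induce at the gadgets. The expanded system is determined exactly by the switch settings in every gadget, and the gadgets are disjoint. Hence the set of expanded systems projecting to a given original transition system \(T_0\) is the product over \(v\) of the sets of settings at \(v\) inducing \((T_0)_v\). Its cardinality is therefore \(\prod_v\ell_v((T_0)_v)\).

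By \cref{lem:gadget-cycles}, an expanded system is a tour iff its projection is a tour. So
\[
  \P{T=T_0}=\frac{\prod_v\ell_v((T_0)_v)}{\sum_{T_1\in\ET(G)}\prod_v\ell_v((T_1)_v)}
\]
for every \(T_0\in\ET(G)\). \Cref{lem:lift-count} bounds each numerator factor between \(e^{\mp\eta_v}2^{r_v}/d_v!\). The numerator thus lies within \(e^{\pm\eta_{\mathrm{tot}}}K\), where \(K=\prod_v 2^{r_v}/d_v!\) is independent of \(T_0\), and the denominator lies within \(e^{\pm\eta_{\mathrm{tot}}}K\card{\ET(G)}\). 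Dividing gives the pointwise bound with \(e^{\pm2\eta_{\mathrm{tot}}}\). Nonemptiness of \(\ET(G^\star)\) guarantees the denominator is positive; in any case it follows from \(\ET(G)\ne\varnothing\) and \(\ell_v>0\).

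For total variation, write \(p(T_0)\) for \(\P{T=T_0}\) and \(u=1/\card{\ET(G)}\). Then
\[
  \norm{\mathcal L(T)-\mathrm{Unif}}_{\TV}=\sum_{T_0}(p(T_0)-u)_+\le\card{\ET(G)}(e^{2\eta_{\mathrm{tot}}}-1)u=e^{2\eta_{\mathrm{tot}}}-1.
\]

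For the size bound, \cref{thm:pointwise-switching} shows the gadget at \(v\) has \(O(d_v\polylog(d_v/(\eta_v\delta_v)))\) switches and arcs. Port suppression only merges arc segments, so it does not increase the count. Summing over \(v\) with \(\sum_v d_v\le m\) and \(d_v\le m\) gives \(M=O(m\polylog(m/(\eta_{\min}\delta_{\min})))\); the original arcs are absorbed into the spliced segments. The case \(V_{\ge2}=\varnothing\) is immediate, since the graph is a single directed cycle.

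The only delicate step is the product decomposition of lifts. It needs that distinct gadgets share no switches and that every expanded transition system arises from exactly one tuple of gadget settings. Both hold by construction, since after port suppression the only vertices of \(G^\star\) are switches, each belonging to exactly one gadget.
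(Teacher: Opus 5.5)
Your proposal is correct and follows the paper's proof essentially step for step. You reduce to the suppressed graph via the cycle-preserving bijection of \cref{lem:suppress-degree-one}, factor the lift count as \(\prod_v\ell_v(\pi_v)\) by gadget disjointness, restrict to tours with \cref{lem:gadget-cycles}, and divide the \(e^{\pm\eta_{\mathrm{tot}}}\) numerator and denominator bounds to get \(e^{\pm2\eta_{\mathrm{tot}}}\); the total-variation and size bounds, the latter obtained by summing \cref{thm:pointwise-switching} over gadgets with \(\sum_v d_v\le m\), are handled exactly as in the paper.
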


\begin{proof}
If every vertex of \(G\) has degree one, then strong connectivity implies that
\(G\) is a directed cycle.  There is a single transition system, it is an
Eulerian tour, and no gadget is needed.  The theorem is immediate in this case.

Let \(\bar G\) be the graph obtained after repeatedly suppressing degree-one
vertices until none remain.  Since we are not in the all-forced case, at least
one vertex has degree at least two; suppressing a degree-one vertex preserves
the degrees of all other vertices, so each suppression is covered by
\cref{lem:suppress-degree-one}.  Thus transition systems of \(G\) and
\(\bar G\) are in a cycle-preserving bijection.  The suppressions contribute no
multiplicative factor to the lift count, and each suppression replaces two arcs
by one, so the suppressed graph has at most the original \(m\) arcs.  It is
therefore enough to do the counting and size bound on \(\bar G\) and then
re-expand the forced paths at the end.  For readability, write \(G\) for
\(\bar G\) for the rest of the proof.

For a transition system \(T\) of this suppressed graph, let \(\pi_v(T)\) be its
local permutation at \(v\).  For \(v\in V_{\ge2}\), let \(\ell_v(\pi)\) be the
number of local switch settings in the gadget of \(v\) whose network
permutation is \(\pi\).
Equivalently, \(\ell_v(\pi)\) is \(2^{r_v}\) times the probability that the
\(r_v\) switches of the network, with uniform independent settings, induce
\(\pi\).  By
\cref{lem:lift-count}, the pointwise lower bound is positive for every
\(\pi\in S_{d_v}\), so every original local permutation is realized by at
least one switch setting.  Because the gadgets are vertex
disjoint after the original arcs are attached, and because there are no
additional local choices outside gadgets in the suppressed graph, the number of
expanded transition systems projecting to \(T\) factors as
\[
  L(T)=\prod_{v\in V_{\ge2}} \ell_v(\pi_v(T)),
\]
where
\[
  e^{-\eta_v}A_v\le \ell_v(\pi)\le e^{\eta_v}A_v
\]
for the constant
\[
  A_v=\frac{2^{r_v}}{d_v!},
\]
which is independent of \(\pi\).  Hence
\[
  e^{-\eta_{\mathrm{tot}}}\prod_{v\in V_{\ge2}} A_v
  \le
  L(T)
  \le
  e^{\eta_{\mathrm{tot}}}\prod_{v\in V_{\ge2}} A_v.
\]
The product \(L(T)\) counts expanded transition systems before conditioning on
the one-cycle event.  By \cref{lem:gadget-cycles}, the one-cycle condition is
preserved by projection.  Thus, if \(T\notin\ET(G)\), none of its expanded lifts
is an Eulerian tour; if \(T\in\ET(G)\), all of its expanded lifts are Eulerian
tours.  The converse direction of \cref{lem:gadget-cycles} rules out any
additional expanded Eulerian tours projecting outside \(\ET(G)\).
Therefore conditioning on the expanded transition system being one cycle simply
restricts the above lift-count comparison to \(T\in\ET(G)\).  If
\[
  C_0=\prod_{v\in V_{\ge2}} A_v,
\]
then every \(T\in\ET(G)\) has between \(e^{-\eta_{\mathrm{tot}}}C_0\) and
\(e^{\eta_{\mathrm{tot}}}C_0\) expanded Eulerian-tour lifts.  If
\(N=\card{\ET(G)}\), the total number of expanded tours is between
\(N e^{-\eta_{\mathrm{tot}}}C_0\) and \(N e^{\eta_{\mathrm{tot}}}C_0\).  Hence,
after normalization, each projected probability lies between
\(e^{-2\eta_{\mathrm{tot}}}/N\) and \(e^{2\eta_{\mathrm{tot}}}/N\).  Therefore,
if \(p\) is the projected law
and \(u\) is uniform on \(\ET(G)\), then
\[
  \norm{p-u}_{\TV}
  =
  \sum_{T:p(T)>u(T)}(p(T)-u(T))
  \le
  (e^{2\eta_{\mathrm{tot}}}-1)\sum_{T:p(T)>u(T)}u(T)
  \le
  e^{2\eta_{\mathrm{tot}}}-1.
\]
For the size bound, let \(d_v\) be the degree of \(v\) in the suppressed graph
and use a \(d_v\)-wire random network with failure parameter \(\delta_v\).
\Cref{thm:pointwise-switching} gives
\(O(d_v\polylog(d_v/(\eta_v\delta_v)))\) switches and arcs.  Summing over
vertices gives
\[
  \sum_{v\in V_{\ge2}} d_v\polylog(d_v/(\eta_v\delta_v))
  =
  O\!\left(m\,\polylog(m/(\eta_{\min}\delta_{\min}))\right),
\]
using \(\sum_v d_v=\card{E(G)}\le m\) for the suppressed graph, where \(m\)
denotes the number of arcs in the original input graph.
\end{proof}

\begin{proof}[Proof of \cref{thm:main}]
If every vertex has degree one, strong connectivity implies that \(G\) is a
directed cycle, so there is a single Eulerian tour and the algorithm outputs it.
Assume from now on that at least one vertex has degree at least two.
Choose gadget errors and construction-failure probabilities, for instance
\[
  \eta_v=\frac{\eps}{16m},
  \qquad
  \delta_v=\frac{\eps}{16m}
\]
for every non-forced vertex.  There are at most \(m\) such vertices, because
each has positive outdegree and \(\sum_v\deg^+(v)=m\).  Hence
\(\sum_v\eta_v\le \eps/16\), and by the union bound the event that all random
gadgets satisfy their pointwise guarantees has probability at least
\(1-\eps/16\).  Call this event \(\mathsf{Good}\), and call its complement
\(\mathsf{Bad}\).

Construct \(G^\star\) using these random gadgets.  On \(\mathsf{Good}\),
the same identity \(\sum_v \deg^+(v)=m\) gives that the switching
networks have total size
\[
  M=\card{E(G^\star)}=O(m\polylog(m/\eps)).
\]
\Cref{thm:pointwise-switching} gives the same bound on construction time, so
building the expanded graph does not dominate the final
\(M^{3/2}\)-type sampling cost.
On \(\mathsf{Good}\), the expanded graph has an Eulerian tour.
Indeed, by Hierholzer's theorem \cite{Hierholzer1873}, \(G\) has an Eulerian
tour.  First suppress the forced degree-one vertices of \(G\) as in
\cref{lem:suppress-degree-one}; the image of this tour is an Eulerian tour of
the suppressed graph.  For
each remaining vertex, the local permutation of this suppressed tour is
realized by at least one switch setting by \cref{lem:lift-count}.  Then
\cref{lem:gadget-cycles} gives a one-cycle transition system in \(G^\star\).
Every vertex of \(G^\star\) is a switch and has indegree and outdegree two, and
the one-cycle transition system traverses every arc exactly once.  Therefore,
for any two arcs \(e,e'\) of \(G^\star\), the tour order contains a directed
path from \(e\) to \(e'\); taking endpoints of arcs shows that the underlying
directed graph of \(G^\star\) is strongly connected.
If the expanded graph construction does not produce a strongly connected
degree-two graph, the algorithm falls back to the Eulerian tour of \(G\) found
by Hierholzer's algorithm.  This cannot happen on \(\mathsf{Good}\), as proved
above.  If \(\mathsf{Bad}\) occurs, the algorithm may either fall back
or run on an expansion whose pointwise guarantee is unavailable to the
analysis.  In either case we upper bound its contribution pessimistically:
replacing the conditional output law on \(\mathsf{Bad}\) by any other
distribution changes the final law by at most \(\P{\mathsf{Bad}}\) in total
variation distance.  Since \(\P{\mathsf{Bad}}\le \eps/16\), this event
contributes at most \(\eps/16\) to the final error.

On \(\mathsf{Good}\), apply \cref{thm:degree-two-sampler} to \(G^\star\) with
accuracy \(\eps/2\), and project the sampled expanded tour by contracting
gadget trajectories and undoing the degree-one suppressions.  This
deterministic projection cannot increase total variation distance.  Thus the
\(\eps/2\) error from sampling on \(G^\star\) remains at most \(\eps/2\) after
projection.
By \cref{thm:faithful-expansion}, the stationary distribution on \(G^\star\),
after projection, is within
\[
  e^{2\sum_v\eta_v}-1
  \le
  e^{\eps/8}-1
  \le
  \eps/4
\]
total variation distance of uniform on \(\ET(G)\), using \(0<\eps<1\).
Combining this projection error with the \(\eps/2\) sampling error and the
\(\eps/16\) construction-failure contribution gives total variation distance at
most \(\eps\), and total time
\[
  O\!\left(M^{3/2}\polylog(M/\eps)\right)
  =
  O\!\left(m^{3/2}\polylog(m/\eps)\right).
\]
The space usage is \(O(M)\): store the expanded graph, the projection data
needed to undo suppressions and contract gadget trajectories, and the
degree-two data structure from \cref{thm:degree-two-sampler}.  Since
\(M=O(m\polylog(m/\eps))\), this is \(O(m\polylog(m/\eps))\) space.
\end{proof}

\section{Stochastic Covering and Hurwitz-Stable Conjectures}\label{sec:hurwitz}

The sampler and its runtime analysis are complete after \cref{sec:gadgets}.  We
end with a structural consequence of the same subset-vector viewpoint.  This
result is independent of the algorithm, but it helps explain why the
skew-determinantal measures in the mixing theorem behave like a parity analogue
of negatively dependent measures.

The consequence is a local coupling statement in the spirit of stochastic
covering.  Stochastic covering was introduced to formalize the idea that two
neighboring one-coordinate conditionings of a negatively dependent measure can
be coupled locally \cite{PemantlePeres2014}.  Strongly Rayleigh measures,
introduced by Borcea, Branden, and Liggett via real-stable generating
polynomials \cite{BorceaBrandenLiggett2009}, satisfy this property: after
conditioning one coordinate to be \(1\) versus \(0\), the two conditional
measures can be coupled by changing only one other coordinate in the monotone
direction \cite{PemantlePeres2014,AnariOveisGharanRezaei2016}.

Here the support has a fixed parity, so the analogous local statement has a
parity shift.  We couple a sample with \(i=0\) to a sample with \(i=1\) so that
all but one of the remaining bits agree.  Equivalently, after deleting
coordinate \(i\), the problem is to transport an even subset of
\(W=[n]\setminus\{i\}\) to an odd subset of \(W\) by flipping one coordinate of
\(W\).  Re-inserting \(i\) then gives a coupling whose two original
configurations differ in exactly two coordinates.

The proof has two steps.  First, we represent each conditional law as the
squared coordinates of a unit vector in the subset-vector space with coordinate
\(i\) removed.  Second, we show that the two unit vectors are related by one
signed coordinate-flip operator \(c_b\).  The operator \(c_b\) is an isometry,
and every nonzero matrix entry of \(c_b\) connects two subsets that differ in one
remaining coordinate.  Squaring coordinates turns the two vectors into the two
probability distributions we want to couple.  A max-flow/min-cut argument then
turns this sparse isometry into an actual coupling supported on those same
one-coordinate moves after deleting \(i\), hence on two-coordinate moves in the
original cube.
Thus the argument has the same transport shape as stochastic covering for
strongly Rayleigh measures: an algebraic certificate gives local support for a
transport, and max-flow/min-cut turns this support statement into an actual
coupling.  The difference is the parity shift.  Strongly Rayleigh covering is
monotone, whereas here the two conditionings live on opposite parities after
deleting \(i\).  The local move therefore flips one remaining coordinate, as
shown in \cref{fig:parity-covering}.

\begin{figure}[t]
  \centering
  \begin{tikzpicture}[
    >=Latex,
    every node/.style={font=\small},
    state/.style={
      draw=black!55,
      rounded corners=2pt,
      fill=black!2,
      align=center,
      text width=3.2cm,
      minimum height=1.05cm,
      inner sep=4pt
    },
    reduced/.style={
      draw=teal!60!black,
      rounded corners=2pt,
      fill=teal!3,
      align=center,
      text width=3.2cm,
      minimum height=1.05cm,
      inner sep=4pt
    },
    down/.style={-{Latex[length=2mm,width=1.4mm]}, line width=.55pt, black!60},
    move/.style={-{Latex[length=2.4mm,width=1.6mm]}, line width=.8pt, teal!70!black},
    note/.style={font=\scriptsize, text=black!65}
  ]
    \node[note, text=blue!65!black] at (0,2.35) {condition on \(i=0\)};
    \node[note, text=teal!65!black] at (6.25,2.35) {condition on \(i=1\)};

    \node[state] (s0) at (0,1.25)
      {\(S_0=R\subseteq W\)\\\(i\) is absent};
    \node[state] (s1) at (6.25,1.25)
      {\(S_1=(R\symdiff\{j\})\cup\{i\}\)\\\(i\) is present};

    \node[reduced] (r0) at (0,-1.15)
      {\(R\subseteq W\)\\even parity};
    \node[reduced] (r1) at (6.25,-1.15)
      {\(R\symdiff\{j\}\subseteq W\)\\odd parity};

    \draw[down] (s0) -- node[midway, left=3pt, note] {delete \(i\)} (r0);
    \draw[down] (s1) -- node[midway, right=3pt, note] {delete \(i\)} (r1);
    \draw[move] (r0) -- node[midway, above=2pt, fill=white, inner sep=1.4pt,
      note, text=teal!70!black] {one flip in \(W\)} (r1);

    \node[draw=black!35, rounded corners=1.5pt, fill=white, align=center,
      inner sep=3pt, note] at (3.125,-2.15)
      {after reinserting \(i\),\\\(\card{S_0\symdiff S_1}=2\)};
  \end{tikzpicture}
  \caption{Parity stochastic covering.  After removing the conditioned
  coordinate \(i\), the two conditionings live on opposite parities of
  \(W=[n]\setminus\{i\}\).  The transport uses only edges that flip one
  coordinate in \(W\); re-inserting \(i\) on the right makes the full
  configurations differ in exactly two coordinates.}
  \label{fig:parity-covering}
\end{figure}

We first use a small transport lemma.  It is a linear-algebraic version of the
following elementary idea: if an isometry \(U\) has nonzero entries only on the
edges of a bipartite graph \(H\), then the probability distribution obtained by
squaring the coordinates of a vector \(x\) can be transported to the
distribution obtained by squaring the coordinates of \(Ux\), using only edges
of \(H\).  The isometry condition says that no set of left vertices can carry
more squared mass than its neighborhood carries on the right.  These are exactly
the weighted Hall inequalities, and max-flow/min-cut converts them into a
coupling.

\begin{lemma}[Orthogonal maps give local couplings]\label{lem:orthogonal-transport}
Let \(U:\R^L\to\R^R\) be a linear isometry between two Euclidean spaces with
fixed coordinate bases, so \(U^*U=I\), and let \(H\subseteq L\times R\) be a bipartite graph
such that \(U_{ba}=0\) whenever \((a,b)\notin H\).  For every unit vector
\(x\in\R^L\),
the two distributions
\[
  p_a=\abs{x_a}^2,
  \qquad
  q_b=\abs{(Ux)_b}^2
\]
admit a coupling supported on the edges of \(H\).
\end{lemma}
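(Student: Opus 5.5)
The plan is to reduce the existence of the coupling to a weighted Hall condition and then invoke max-flow/min-cut, as in the paragraph preceding the statement. Build the flow network with a source \(s\), a sink \(t\), arcs \(s\to a\) of capacity \(p_a\) for \(a\in L\), arcs \(b\to t\) of capacity \(q_b\) for \(b\in R\), and infinite-capacity arcs \(a\to b\) for \((a,b)\in H\). Both \(p\) and \(q\) are probability vectors: \(\sum_a p_a=\norm{x}^2=1\), and \(\sum_b q_b=\norm{Ux}^2=1\) because \(U\) is an isometry. A coupling supported on \(H\) is therefore exactly a flow of value \(1\), with \(\pi(a,b)\) equal to the flow on the arc \(a\to b\). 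By max-flow/min-cut, together with the standard reduction of finite cuts to sets \(X\subseteq L\) whose \(H\)-neighbourhood \(N(X)\) is placed on the source side, it suffices to verify
\[
  \sum_{a\in X}p_a\le\sum_{b\in N(X)}q_b
  \qquad\text{for every }X\subseteq L .
\]

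Fix \(X\) and let \(x_X\) be \(x\) with coordinates outside \(X\) zeroed. Every nonzero column entry \(U_{ba}\) with \(a\in X\) has \(b\in N(X)\), so \(Ux_X\) is supported on \(N(X)\). The isometry property gives
\[
  \sum_{a\in X}p_a=\norm{x_X}^2=\norm{Ux_X}^2=\norm{\Pi_{N(X)}Ux_X}^2,
\]
where \(\Pi_{N(X)}\) denotes the coordinate projection onto \(N(X)\). We want to compare this with \(\norm{\Pi_{N(X)}Ux}^2=\sum_{b\in N(X)}q_b\), and the difference between the two vectors is \(\Pi_{N(X)}Ux_{X^c}\). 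Here cross terms could in principle spoil a naive comparison, and this is the step I expect to be the main obstacle.

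To handle it, I would apply the same reasoning to the complementary set. Let \(Y=L\setminus X\). Then \(Ux_Y\) is supported on \(N(Y)\), so \(Ux\) restricted to \(R\setminus N(Y)\) equals \(Ux_X\) restricted there. That only gives the inequality in the wrong direction. The cleaner route is to use the reverse cuts: Hall for the transposed system, i.e., \(\sum_{b\in B}q_b\le\sum_{a\in N^{-1}(B)}p_a\) for all \(B\subseteq R\), which is equivalent to the original family by complementation since both totals equal \(1\). Fix \(B\) and let \(y=\Pi_BUx\). Then
\[
  \sum_{b\in B}q_b=\norm{y}^2=\ip{Ux}{y}=\ip{x}{U^*y}.
\]
Moreover \(U^*y\) is supported on \(N^{-1}(B)\), because \((U^*)_{ab}=U_{ba}\) vanishes off \(H\). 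Cauchy--Schwarz then gives
\[
  \norm{y}^2\le\norm{x|_{N^{-1}(B)}}\,\norm{U^*y}\le\norm{x|_{N^{-1}(B)}}\,\norm{y},
\]
using \(\norm{U^*}\le1\). Dividing by \(\norm{y}\) yields \(\sum_{b\in B}q_b\le\sum_{a\in N^{-1}(B)}p_a\).

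The complementation step is then routine. Taking \(B=R\setminus N(X)\), we have \(N^{-1}(B)\subseteq L\setminus X\), so
\[
  1-\sum_{N(X)}q\le1-\sum_Xp,
\]
which is the required Hall inequality. Max-flow/min-cut then produces the coupling supported on \(H\).
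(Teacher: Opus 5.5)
Your proof is correct, but it checks a different family of Hall inequalities than the paper does.

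The paper verifies the left-side inequalities \(\sum_{a\in L_0}p_a\le\sum_{b\in N(L_0)}q_b\) directly. It inserts \(x=U^*Ux\) and uses the adjoint support relation \(P_{L_0}U^*P_{N(L_0)^c}=0\) to write \(P_{L_0}x=P_{L_0}U^*P_{N(L_0)}Ux\). It then concludes from \(\lVert U^*\rVert\le1\).

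You instead prove the right-side inequalities \(\sum_{b\in B}q_b\le\sum_{a\in N^{-1}(B)}p_a\). These need only that \(U\) is a contraction; your Cauchy--Schwarz step amounts to the one-line observation \(\Pi_BUx=\Pi_BU\Pi_{N^{-1}(B)}x\). You then transfer them to the left side by complementation with \(B=R\setminus N(X)\), and this is the only place where \(\lVert Ux\rVert=\lVert x\rVert\) (equal total masses) enters. Your route separates the two roles of the hypothesis: contractivity gives the local inequality, and norm preservation gives equal totals. The cost is the extra complementation step, whereas the paper's route is a single computation.

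The obstacle you hit on the direct route is not actually there. Since \(Ux_X\) is supported on \(N(X)\) and \(U\) preserves inner products,
\(\langle Ux_X,\Pi_{N(X)}Ux_{X^c}\rangle=\langle Ux_X,Ux_{X^c}\rangle=\langle x_X,x_{X^c}\rangle=0\).
Hence \(\lVert\Pi_{N(X)}Ux\rVert^2=\lVert x_X\rVert^2+\lVert\Pi_{N(X)}Ux_{X^c}\rVert^2\ge\lVert x_X\rVert^2\), and your first attempt would already have finished the proof.
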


\begin{proof}
Since \(x\) is unit and \(U\) is an isometry, both \(p\) and \(q\) have total
mass one.
Put a source-to-\(a\) edge of capacity \(p_a\), a \(b\)-to-sink edge of
capacity \(q_b\), and infinite-capacity edges along \(H\).  By the
max-flow/min-cut theorem, equivalently the weighted Hall condition
\cite{FordFulkerson1956}, a unit flow, and hence the desired coupling, exists
if for every \(L_0\subseteq L\),
\[
  \sum_{a\in L_0}p_a
  \le
  \sum_{b\in N(L_0)}q_b,
\]
where \(N(L_0)\) is the neighborhood of \(L_0\) in \(H\).  Let \(P_{L_0}\) and
\(P_{N(L_0)}\) denote coordinate projections.  The support assumption gives
\[
  P_{N(L_0)^c}UP_{L_0}=0.
\]
Taking adjoints also gives \(P_{L_0}U^*P_{N(L_0)^c}=0\).  Therefore
\[
  P_{L_0}x
  =
  P_{L_0}U^*Ux
  =
  P_{L_0}U^*P_{N(L_0)}Ux,
\]
and hence
\[
  \sum_{a\in L_0}p_a
  =
  \norm{P_{L_0}x}^2
  \le
  \norm{P_{N(L_0)}Ux}^2
  =
  \sum_{b\in N(L_0)}q_b.
\]
Here the inequality uses that \(P_{L_0}\) is an orthogonal projection and
\(U^*\) has operator norm at most \(1\), since \(U\) is an isometry.
This proves the weighted Hall inequalities and hence the desired coupling.
\end{proof}

\begin{lemma}[Conditioning a Pfaffian vector]\label{lem:conditioned-vectors}
Let \(\Psi=\Psi_A\) be the Pfaffian vector of a real skew-symmetric matrix
\(A\), and fix a coordinate \(i\).  Let
\(W=[n]\setminus\{i\}\), let \(B=A_{W,W}\), and let \(\Psi_B\) be the
Pfaffian vector of \(B\) in the subset-vector space on \(W\).  After
normalization, the squared coordinates of \(\Psi_B\) give the law conditioned
on \(i\notin S\), with coordinate \(i\) deleted.  After normalization, the
squared coordinates of \(\iota_i\Psi_A\) give the law conditioned on
\(i\in S\), again with coordinate \(i\) deleted.  If both conditionings have
positive mass, then the normalized vectors
\[
  \Psi_0=\frac{\Psi_B}{\norm{\Psi_B}},
  \qquad
  \Psi_1=\frac{\iota_i\Psi_A}{\norm{\iota_i\Psi_A}}
\]
satisfy
\[
  \Psi_1=\pm c_b\Psi_0
\]
for some unit vector \(b\) supported on the coordinates other than \(i\).
\end{lemma}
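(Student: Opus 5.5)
The plan is to work directly with the exterior-exponential formula \(\Psi_A=\exp(\omega_A)\) from the proof of \cref{lem:excitation-basis}, and to show that \(\iota_i\Psi_A\) is obtained from \(\Psi_B\) by a single insertion \(\varepsilon_a\). That insertion can then be rewritten as a flip \(c_v\) by absorbing a deletion, using \(I-B\).

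\textbf{Conditional laws.} First I check the two identifications. The coordinates of \(\Psi_B\) at even \(S\subseteq W\) are \(\pf(B_S)=\pf(A_S)\). Their squares are therefore the weights \(w(S)\) of sets avoiding \(i\), so normalizing gives the law conditioned on \(i\notin S\).

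For \(\iota_i\Psi_A\), the deletion \(\iota_i\) sends \(e_{S\cup\{i\}}\mapsto\pm e_S\) and kills basis vectors not containing \(i\). Hence \((\iota_i\Psi_A)_S=\pm\pf(A_{S\cup\{i\}})\) for \(S\subseteq W\). Squaring gives \(w(S\cup\{i\})\), which is the conditioned law with \(i\) deleted.

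\textbf{Insertion form.} Split \(\omega_A=\omega_B+e_i\wedge a\), where \(a=\pm\sum_{j\in W}A_{ij}e_j\) with the sign fixed by the ordering convention. Both summands are even, so they commute in the exterior algebra. Since \((e_i\wedge a)^2=0\),
\[
  \Psi_A=\exp(\omega_B)(1+e_i\wedge a)=\Psi_B+e_i\wedge a\wedge\Psi_B .
\]
Here \(a\) and \(\Psi_B\) involve no \(e_i\), so \(\iota_i\) kills \(\Psi_B\) and \(\iota_i(e_i\wedge y)=y\) for such \(y\). This gives
\[
  \iota_i\Psi_A=a\wedge\Psi_B=\varepsilon_a\Psi_B .
\]

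\textbf{Converting the insertion to a flip.} Next I handle deletions on \(\Psi_B\). For \(v\in\R^W\), \(\iota_v\) is an antiderivation and \(\omega_B\) is even, so
\[
  \iota_v\exp(\omega_B)=(\iota_v\omega_B)\wedge\exp(\omega_B).
\]
A direct computation gives \(\iota_v\omega_B=-Bv\), up to a sign that depends only on conventions. Therefore
\[
  c_v\Psi_B=\varepsilon_{(I\mp B)v}\Psi_B .
\]
Because \(B\) is real skew-symmetric, \(I\mp B\) has eigenvalues \(1\mp\mathrm{i}\lambda\ne0\), so it is invertible. Taking \(v=(I\mp B)^{-1}a\) yields
\[
  c_v\Psi_B=\varepsilon_a\Psi_B=\iota_i\Psi_A .
\]
The vector \(v\) is nonzero because the \(i\in S\) conditioning has positive mass.

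\textbf{Normalization.} The operator \(c_v\) is self-adjoint with \(c_v^2=\norm{v}^2I\), so \(c_{v/\norm v}\) is an isometry. This gives \(\norm{\iota_i\Psi_A}=\norm v\,\norm{\Psi_B}\). Dividing through, \(\Psi_1=\pm c_b\Psi_0\) with \(b=v/\norm v\) supported on \(W\).

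\textbf{Main obstacle.} The only delicate point is sign bookkeeping. I need the sign of \(\iota_v\omega_B\) and the orientation of \(a\). A wrong sign only replaces \(I-B\) by \(I+B\), which is equally invertible, so the conclusion is unaffected.
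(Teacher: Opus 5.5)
Your proof is correct and follows the same skeleton as the paper. Both arguments establish \(\iota_i\Psi_A=\varepsilon_a\Psi_B\), rewrite deletions on \(\Psi_B\) as insertions via \(\iota_v\Psi_B=\varepsilon_{-Bv}\Psi_B\), invert \(I-B\), and normalize using \(c_v^2=\norm{v}^2I\). The one difference is how the two identities are obtained: the paper uses coefficientwise Pfaffian row expansions, while you derive both from \(\Psi=\exp(\omega)\) and the antiderivation property of \(\iota\). Your route is shorter and fixes the signs exactly: \(a=\sum_{j\in W}A_{ij}e_j\) and \(\iota_v\omega_B=-Bv\), so the operator is \(I-B\) and your \(\mp\) hedges are unnecessary.
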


\begin{proof}
The first two assertions are just the probabilistic meaning of conditioning in
the vector model.  Deleting coordinate \(i\) from a set not containing \(i\)
leaves the Pfaffian vector of the principal submatrix \(B\).  Deleting \(i\)
from a set containing \(i\) is represented by applying the deletion operator
\(\iota_i\) to \(\Psi_A\).
The squared norm of \(\Psi_B\) is the total weight of sets not containing
\(i\), and the squared norm of \(\iota_i\Psi_A\) is the total weight of sets
containing \(i\).  Thus the two normalizations in the statement are exactly the
normalizations of the two conditional laws, and the positive-mass assumption
means both vectors are nonzero.

It remains to relate the two conditional vectors.  This is the only Pfaffian
calculation in the proof, and it consists of two coefficient checks.  The signs
in these checks are fixed by the same insertion and deletion convention used in
\cref{lem:row-isotropy}; they are not additional choices.

The first check is ordinary expansion of a Pfaffian along the distinguished
coordinate \(i\).  To see the signs without extra notation, temporarily reorder
the coordinates so that \(i\) is first and the coordinates of \(W\) keep their
relative order.  Then the standard Pfaffian expansion gives, for every odd
\(R\subseteq W\),
\[
  \pf(A_{\{i\}\cup R,\{i\}\cup R})
  =
  \sum_{j\in R}
  (-1)^{\operatorname{pos}_R(j)+1}
  A_{ij}\,\pf(B_{R\setminus\{j\},R\setminus\{j\}}),
\]
where \(\operatorname{pos}_R(j)\) is the position of \(j\) in the ordered list
of \(R\).  Returning to the global coordinate order only multiplies each
coordinate by the deterministic signs coming from the signed deletion
\(\iota_i\) and signed insertion operators \(\varepsilon_j\).  Equivalently,
there is a vector \(a\in\R^W\), obtained from the \(i\)th row of \(A\) by these
fixed order signs, such that
\begin{equation}\label{eq:conditioned-vector-expansion}
  \iota_i\Psi_A=\varepsilon_a\Psi_B.
\end{equation}
In words, deleting \(i\) from the full Pfaffian vector is the same as inserting
one of the remaining coordinates into the smaller Pfaffian vector, with
coefficients given by the \(i\)th row of the matrix in the fixed subset ordering.

The second check is the analogous deletion identity inside the smaller matrix:
for every vector \(h\in\R^W\),
\begin{equation}\label{eq:deletion-pfaffian-identity}
  \iota_h\Psi_B=\varepsilon_{-Bh}\Psi_B.
\end{equation}
This identity says that, on a Pfaffian vector, a deletion can be rewritten as a
signed insertion with coefficients obtained from the matrix \(B\).
To verify it, compare the coefficient of \(e_R\) for an odd set
\(R\subseteq W\).  Write
\[
  \operatorname{ins}(j,U)=(-1)^{\card{\set{k\in U\given k<j}}},
\]
so that both \(\varepsilon_j e_U=\operatorname{ins}(j,U)e_{U\cup\{j\}}\) and
\(\iota_j e_{U\cup\{j\}}=\operatorname{ins}(j,U)e_U\).
The coefficient of \(e_R\) in \(\iota_h\Psi_B\) is therefore
\[
  \sum_{j\notin R}
  h_j\,\operatorname{ins}(j,R)\,
  \pf(B_{R\cup\{j\},R\cup\{j\}}).
\]
The coefficient of \(e_R\) in \(\varepsilon_{-Bh}\Psi_B\) is
\[
  -\sum_{k\in R}
  \operatorname{ins}(k,R\setminus\{k\})
  \left(\sum_{j\in W} B_{kj}h_j\right)
  \pf(B_{R\setminus\{k\},R\setminus\{k\}}).
\]
Fix \(j\notin R\).  The coefficient of \(h_j\) in the second display is the
Pfaffian expansion of \(\pf(B_{R\cup\{j\},R\cup\{j\}})\) along the row and
column \(j\), with exactly the insertion sign
\(\operatorname{ins}(j,R)\); hence it matches the first display.  Fix instead
\(j\in R\).  The coefficient of \(h_j\) in the second display is the Pfaffian
expansion of the odd principal Pfaffian \(\pf(B_{R,R})\) along \(j\), and this
Pfaffian is zero by convention.  Hence every coefficient agrees, proving
\cref{eq:deletion-pfaffian-identity}.

Since \(B\) is real skew-symmetric, its complex eigenvalues are purely
imaginary \cite{HornJohnson2012}.  Hence \(I-B\) is invertible.  Choose
\(h=(I-B)^{-1}a\).  By
\cref{eq:deletion-pfaffian-identity},
\[
  c_h\Psi_B
  =
  (\varepsilon_h+\iota_h)\Psi_B
  =
  \varepsilon_{(I-B)h}\Psi_B
  =
  \varepsilon_a\Psi_B
  =
  \iota_i\Psi_A,
\]
where the last equality is \cref{eq:conditioned-vector-expansion}.
Because the conditioning on \(i\in S\) has positive mass, the vector
\(\iota_i\Psi_A\) is nonzero, and hence \(h\ne0\).  Since
\(c_h^2=\norm{h}^2I\), we have
\[
  \norm{\iota_i\Psi_A}
  =
  \norm{c_h\Psi_B}
  =
  \norm{h}\,\norm{\Psi_B}.
\]
Thus with \(b=h/\norm{h}\), the normalized vectors satisfy
\(\Psi_1=c_b\Psi_0\), up to the global sign convention for Pfaffians.
\end{proof}

\begin{theorem}[Parity stochastic covering for skew-determinantal measures]\label{thm:pfaffian-covering}
Let \(A\) be real skew-symmetric and let \(\mu_A\) be the probability
distribution on subsets with weights
\(\mu_A(S)\propto\det(A_{S,S})\).  Thus \(\mu_A\) is supported on even
subsets.  Fix a coordinate \(i\) such that both conditionings have positive
mass.  Then there is a coupling \((S_0,S_1)\) of the conditional laws
\[
  \mu_A(\,\cdot\mid i\notin S)
  \quad\text{and}\quad
  \mu_A(\,\cdot\mid i\in S)
\]
such that, almost surely,
\[
  i\notin S_0,\qquad i\in S_1,
  \qquad
  \card{S_0\symdiff S_1}=2.
\]
\end{theorem}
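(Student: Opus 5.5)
The plan is to combine \cref{lem:conditioned-vectors} with \cref{lem:orthogonal-transport}, working in the subset-vector space on \(W=[n]\setminus\{i\}\).

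\textbf{Step 1: the two conditional laws as squared coordinates.} By \cref{lem:conditioned-vectors}, the law of \(S\setminus\{i\}\) under \(\mu_A(\cdot\mid i\notin S)\) has \(p_R=\abs{(\Psi_0)_R}^2\). The law of \(S\setminus\{i\}\) under \(\mu_A(\cdot\mid i\in S)\) has \(q_R=\abs{(\Psi_1)_R}^2\). Moreover \(\Psi_1=\pm c_b\Psi_0\) for a unit vector \(b\in\R^W\).

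\textbf{Step 2: the isometry.} Take \(U=c_b\) restricted to \(\bigwedge^\bullet\R^W\). Since \(c_b^2=\norm{b}^2I=I\) and \(c_b\) is self-adjoint, \(U\) is orthogonal. Write \(c_b=\sum_{j\in W}b_jc_j\). Each \(c_j\) sends \(e_R\) to \(\pm e_{R\symdiff\{j\}}\), so \(U_{R',R}\neq0\) only when \(\card{R\symdiff R'}=1\). Let \(H\) be the bipartite graph with left and right vertex sets the subsets of \(W\), and with edges the pairs differing in exactly one coordinate of \(W\). Then \(U\) is supported on \(H\). The overall sign \(\pm\) does not affect squared coordinates.

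\textbf{Step 3: the coupling.} Apply \cref{lem:orthogonal-transport} with \(x=\Psi_0\), which is a unit vector. This gives a coupling \((R_0,R_1)\) of \(p\) and \(q\) with \(\card{R_0\symdiff R_1}=1\) almost surely. Set \(S_0=R_0\) and \(S_1=R_1\cup\{i\}\). Then \(S_0\) has law \(\mu_A(\cdot\mid i\notin S)\), \(S_1\) has law \(\mu_A(\cdot\mid i\in S)\), \(i\notin S_0\), \(i\in S_1\), and \(\card{S_0\symdiff S_1}=1+1=2\).

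The only delicate points are bookkeeping. One must check that deleting \(i\) identifies the conditional laws with these squared coordinates, which \cref{lem:conditioned-vectors} asserts. One must also check that the graph \(H\) really forbids \(R_0=R_1\), that is, that \(U\) has no diagonal entries. This is automatic, because \(c_j\) always changes parity. I expect no real obstacle, since the heavy lifting is done in \cref{lem:conditioned-vectors}.
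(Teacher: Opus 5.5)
Your proposal is correct and is essentially the paper's proof: you represent the two conditionings by \(\Psi_0\) and \(\Psi_1=\pm c_b\Psi_0\) via \cref{lem:conditioned-vectors}, apply \cref{lem:orthogonal-transport} to the one-flip support of \(c_b\), and reinsert \(i\). The only difference is cosmetic. You let \(c_b\) act on all of \(\bigwedge^\bullet\R^W\), whereas the paper treats it as a map from the even part to the odd part; this changes nothing, since the transport is supported on one-flip edges either way.
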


\begin{proof}
Let \(W=[n]\setminus\{i\}\), let \(\cH_W\) be the subset-vector space on
coordinates in \(W\), set \(B=A_{W,W}\), and let \(\Psi_B\) be the Pfaffian
vector of \(B\) in \(\cH_W\).  Let
\[
  \Psi_0=\frac{\Psi_B}{\norm{\Psi_B}},
  \qquad
  \Psi_1=\frac{\iota_i\Psi_A}{\norm{\iota_i\Psi_A}}.
\]
The squared coordinates of \(\Psi_0\) are the conditional distribution given
\(i\notin S\), with coordinate \(i\) removed; these subsets of \(W\) have even
parity.  The squared coordinates of \(\Psi_1\) are the conditional distribution
given \(i\in S\), again with coordinate \(i\) removed; these subsets of \(W\)
have odd parity.  The positivity assumption in the theorem is exactly what
makes the two displayed normalizations nonzero.  Thus it remains to couple an even
subset of \(W\) to an odd subset of \(W\) while changing one coordinate.

By \cref{lem:conditioned-vectors}, \(\Psi_1=\pm c_b\Psi_0\) for some unit vector
\(b\in\R^W\).  The relations from \cref{sec:pfaffian-walk} give \(c_b^*=c_b\) and
\(c_b^2=\norm{b}^2I=I\), so \(c_b\) is an orthogonal map from
\(\cH_W^{\mathrm{even}}\) to \(\cH_W^{\mathrm{odd}}\).  Since
\[
  c_b=\sum_{j\ne i} b_jc_j,
\]
and each \(c_j\) flips only coordinate \(j\), its matrix in the coordinate
bases is supported only on pairs of subsets differing in exactly one element.
Apply \cref{lem:orthogonal-transport} with the left vertices equal to the even
subsets of \(W\), the right vertices equal to the odd subsets of \(W\), and
edges between pairs whose symmetric difference has size \(1\).  We take
\(U=\pm c_b\) and \(x=\Psi_0\).  The sign does not affect the squared
coordinates.  The resulting coupling on \(W\) changes exactly one coordinate
whenever it moves positive mass.  Re-inserting \(i\) on the \(\Psi_1\) side
gives a coupling of the original conditionings at Hamming distance two.
\end{proof}

This theorem is the parity analogue of stochastic covering.  It suggests that
the right ambient class may be Hurwitz-stable measures rather than strongly
Rayleigh measures.  In the conjectures that follow, disconnected supports are
handled in the usual way: restrict the walk to a connected component, or
equivalently ask for the same bounds on every component.  This caveat is
necessary for any spectral-gap statement, since a disconnected support gives
zero gap.

\begin{conjecture}[Hurwitz-stable parity covering]\label{conj:hurwitz-covering}
Let \(\mu\) be a probability distribution on \(\bits^n\) whose generating polynomial
\[
  g_\mu(z)=\sum_S\mu(S)z^S
\]
has nonnegative coefficients and is Hurwitz stable, i.e. \(g_\mu(z)\ne0\)
whenever \(\Re z_i>0\) for all \(i\).  If \(\mu\) is supported on one parity
class, then for every coordinate \(i\) for which both conditionings have
positive mass, the two conditionings \(X_i=0\) and \(X_i=1\) can be coupled so
that the full configurations differ in Hamming distance two.
\end{conjecture}

\begin{conjecture}[Hurwitz-stable flip--repair mixing]\label{conj:hurwitz-mixing}
Let \(\mu\) be a Hurwitz-stable distribution supported on one parity class.
On every connected component of its two-step parity flip--repair graph, the
parity flip--repair walk has spectral gap \(\Omega(1/n)\).  This is the walk
that flips a uniformly random coordinate to reach the opposite parity layer,
and then samples from \(\mu\) conditioned on being one flip away from that
intermediate point; the conditioning is always nonempty on the component
because it contains the state from which the first flip was made.  If \(\mu\)
is flat, meaning all positive atoms of \(\mu\) have equal mass, then the same
component walk
satisfies a modified log-Sobolev inequality, or an ordinary log-Sobolev
inequality, with constant \(n\polylog n\).
\end{conjecture}

\Cref{thm:spectral-comparison,thm:flat-lsi,thm:pfaffian-covering} prove
\cref{conj:hurwitz-covering,conj:hurwitz-mixing} for the skew-determinantal subclass,
after normalizing the coefficients to make a probability distribution.  For
this subclass,
\[
  g_A(z)=\sum_S\det(A_{S,S})z^S=\det(I+A\diag(z)),
\]
where \(A\) is real skew-symmetric.  The equality with the determinant is the
principal-minor expansion of \(\det(I+A\diag(z))\): choosing the nonidentity
rows and columns in a set \(S\) contributes \(\det(A_{S,S})\prod_{i\in S}z_i\).
This also makes the nonnegativity of the coefficients transparent, since
\(\det(A_{S,S})=\pf(A_{S,S})^2\ge0\).

We now check Hurwitz stability.  If \(\Re z_i>0\), then each \(z_i\) is nonzero
and
\[
  \det(I+\diag(z)A)=\det(\diag(z))\det(\diag(z)^{-1}+A).
\]
Moreover, the identity \(\det(I+XY)=\det(I+YX)\) gives
\(\det(I+A\diag(z))=\det(I+\diag(z)A)\) \cite{HornJohnson2012}.  The matrix
\(\diag(z)^{-1}+A\) has positive definite Hermitian part: the Hermitian part is
\(\diag(\Re(1/z_i))\), since \(A^*=-A\), and \(\Re(1/z_i)>0\) whenever
\(\Re z_i>0\).  If
\((\diag(z)^{-1}+A)x=0\) for some nonzero complex vector \(x\), then
\[
  x^*(\diag(z)^{-1}+A)x=0.
\]
But the real part of the left-hand side is
\(\sum_i \Re(1/z_i)\abs{x_i}^2>0\), a contradiction.
Thus \(\diag(z)^{-1}+A\) is nonsingular, and hence \(g_A(z)\ne0\).

\section*{Acknowledgments}

We thank Thuy-Duong Vuong, Tianyu Liu, and Eric Ma for discussions about the problem.

\appendix

\section{Provenance}\label{app:provenance}

This manuscript used AI assistance in a substantive way.  The author had
already devised the high-level plan for the sampler, including the reduction to
degree two, the transition-system walk, the switching-network gadget reduction,
and the dynamic data-structure implementation.  The author conjectured the
stated mixing bounds for the skew-determinantal flip--repair walk, not the proof
route used to prove them.

The proof of the mixing theorem, including the square-root projection
formulation, the excitation-number comparison, and the flat log-Sobolev proof
via entropic uncertainty and Rothaus centering, was produced by GPT 5.5 Pro
Extended.  Codex then assembled the paper, integrated the proof into the
manuscript, added surrounding exposition and references, and typeset the draft
under the author's supervision.

\PrintBibliography


@article{deBruijnAardenneEhrenfest1951,
  author = {de Bruijn, N. G. and van Aardenne-Ehrenfest, T.},
  title = {Circuits and Trees in Oriented Linear Graphs},
  journal = {Simon Stevin},
  volume = {28},
  pages = {203--217},
  year = {1951}
}

@article{Hierholzer1873,
  author = {Hierholzer, Carl},
  title = {Ueber die M{\"o}glichkeit, einen Linienzug ohne Wiederholung und ohne Unterbrechung zu umfahren},
  journal = {Mathematische Annalen},
  volume = {6},
  pages = {30--32},
  year = {1873},
  doi = {10.1007/BF01442866}
}

@book{Tutte1984,
  author = {Tutte, W. T.},
  title = {Graph Theory},
  series = {Encyclopedia of Mathematics and its Applications},
  volume = {21},
  publisher = {Addison-Wesley},
  year = {1984}
}

@article{SmithTutte1941,
  author = {Smith, C. A. B. and Tutte, W. T.},
  title = {On Unicursal Paths in a Network of Degree 4},
  journal = {The American Mathematical Monthly},
  volume = {48},
  number = {4},
  pages = {233--237},
  year = {1941},
  doi = {10.1080/00029890.1941.11991103}
}

@article{CohnLempel1972,
  author = {Cohn, M. and Lempel, A.},
  title = {Cycle Decomposition by Disjoint Transpositions},
  journal = {Journal of Combinatorial Theory, Series A},
  volume = {13},
  number = {1},
  pages = {83--89},
  year = {1972},
  doi = {10.1016/0097-3165(72)90010-6}
}

@article{Traldi2011,
  author = {Traldi, Lorenzo},
  title = {Binary Nullity, Euler Circuits and Interlace Polynomials},
  journal = {European Journal of Combinatorics},
  volume = {32},
  number = {6},
  pages = {944--950},
  year = {2011},
  doi = {10.1016/j.ejc.2011.02.004},
  eprint = {0903.4405},
  archivePrefix = {arXiv},
  primaryClass = {math.CO}
}

@article{BouchetCunninghamGeelen1998,
  author = {Bouchet, Andr{\'e} and Cunningham, William H. and Geelen, Jim},
  title = {Principally Unimodular Skew-Symmetric Matrices},
  journal = {Combinatorica},
  volume = {18},
  number = {4},
  pages = {461--486},
  year = {1998},
  doi = {10.1007/s004930050033}
}

@article{Bouchet1987Unimodularity,
  author = {Bouchet, Andr{\'e}},
  title = {Unimodularity and Circle Graphs},
  journal = {Discrete Mathematics},
  volume = {66},
  number = {1--2},
  pages = {203--208},
  year = {1987},
  doi = {10.1016/0012-365X(87)90132-4}
}

@article{Bouchet1992UnimodularOrientations,
  author = {Bouchet, Andr{\'e}},
  title = {A Characterization of Unimodular Orientations of Simple Graphs},
  journal = {Journal of Combinatorial Theory, Series B},
  volume = {56},
  number = {1},
  pages = {45--54},
  year = {1992},
  doi = {10.1016/0095-8956(92)90005-I}
}

@article{KandelMatiasUngerWinkler1996,
  author = {Kandel, Denise and Matias, Yossi and Unger, Ron and Winkler, Peter},
  title = {Shuffling Biological Sequences},
  journal = {Discrete Applied Mathematics},
  volume = {71},
  number = {1--3},
  pages = {171--185},
  year = {1996},
  doi = {10.1016/S0166-218X(97)81456-4}
}

@article{ProppWilson1998,
  author = {Propp, James Gary and Wilson, David Bruce},
  title = {How to Get a Perfectly Random Sample from a Generic {Markov} Chain and Generate a Random Spanning Tree of a Directed Graph},
  journal = {Journal of Algorithms},
  volume = {27},
  number = {2},
  pages = {170--217},
  year = {1998},
  doi = {10.1006/jagm.1997.0917}
}

@article{JiangEtAl2008,
  author = {Jiang, Minghui and Anderson, James and Gillespie, Joel and Mayne, Martin},
  title = {{uShuffle}: A Useful Tool for Shuffling Biological Sequences while Preserving the \(k\)-let Counts},
  journal = {BMC Bioinformatics},
  volume = {9},
  number = {192},
  year = {2008},
  doi = {10.1186/1471-2105-9-192}
}

@article{BoczkowskiPeresSousi2018,
  author = {Boczkowski, Lucas and Peres, Yuval and Sousi, Perla},
  title = {Sensitivity of Mixing Times in Eulerian Digraphs},
  journal = {SIAM Journal on Discrete Mathematics},
  volume = {32},
  number = {1},
  pages = {624--655},
  year = {2018},
  doi = {10.1137/16M1073376}
}

@inproceedings{TetaliVempala1997,
  author = {Tetali, Prasad and Vempala, Santosh},
  title = {Random Sampling of Euler Tours},
  booktitle = {Randomization and Approximation Techniques in Computer Science},
  series = {Lecture Notes in Computer Science},
  volume = {1269},
  pages = {57--66},
  publisher = {Springer},
  year = {1997},
  doi = {10.1007/3-540-63248-4_6}
}

@article{CreedCryan2013,
  author = {Creed, P{\'a}id{\'i} and Cryan, Mary},
  title = {The Number of Euler Tours of Random Directed Graphs},
  journal = {Electronic Journal of Combinatorics},
  volume = {20},
  number = {3},
  pages = {P13},
  year = {2013},
  doi = {10.37236/2377}
}

@article{Morris2009Thorp,
  author = {Morris, Ben},
  title = {The Mixing Time of the {Thorp} Shuffle},
  journal = {SIAM Journal on Computing},
  volume = {38},
  number = {2},
  pages = {484--504},
  year = {2008},
  doi = {10.1137/050636231}
}

@article{Benes1964,
  author = {Bene{\v{s}}, V. E.},
  title = {Optimal Rearrangeable Multistage Connecting Networks},
  journal = {The Bell System Technical Journal},
  volume = {43},
  number = {4},
  pages = {1641--1656},
  year = {1964},
  doi = {10.1002/j.1538-7305.1964.tb04103.x}
}

@article{Waksman1968,
  author = {Waksman, Abraham},
  title = {A Permutation Network},
  journal = {Journal of the ACM},
  volume = {15},
  number = {1},
  pages = {159--163},
  year = {1968},
  doi = {10.1145/321439.321449}
}

@article{DiaconisShahshahani1981,
  author = {Diaconis, Persi and Shahshahani, Mehrdad},
  title = {Generating a Random Permutation with Random Transpositions},
  journal = {Zeitschrift f{\"u}r Wahrscheinlichkeitstheorie und Verwandte Gebiete},
  volume = {57},
  number = {2},
  pages = {159--179},
  year = {1981},
  doi = {10.1007/BF00535487}
}

@article{Morris2009Improved,
  author = {Morris, Ben},
  title = {Improved Mixing Time Bounds for the {Thorp} Shuffle and {L}-Reversal Chain},
  journal = {Annals of Probability},
  volume = {37},
  number = {2},
  pages = {453--477},
  year = {2009},
  doi = {10.1214/08-AOP409},
  eprint = {0912.2759},
  archivePrefix = {arXiv},
  primaryClass = {math.PR}
}

@inproceedings{Czumaj2015SwitchingNetworks,
  author = {Czumaj, Artur},
  title = {Random Permutations Using Switching Networks},
  booktitle = {Proceedings of the Forty-Seventh Annual ACM Symposium on Theory of Computing},
  pages = {703--712},
  publisher = {Association for Computing Machinery},
  year = {2015},
  doi = {10.1145/2746539.2746629}
}

@article{GeStefankovic2012,
  author = {Ge, Qi and {\v{S}}tefankovi{\v{c}}, Daniel},
  title = {The Complexity of Counting {Eulerian} Tours in 4-Regular Graphs},
  journal = {Algorithmica},
  volume = {63},
  number = {3},
  pages = {588--601},
  year = {2012},
  doi = {10.1007/s00453-010-9463-4},
  eprint = {1009.5019},
  archivePrefix = {arXiv},
  primaryClass = {cs.CC}
}

@inproceedings{AnariHuangLiuVuongXuYu2023,
  author = {Anari, Nima and Huang, Yizhi and Liu, Tianyu and Vuong, Thuy-Duong and Xu, Brian and Yu, Katherine},
  title = {Parallel Discrete Sampling via Continuous Walks},
  booktitle = {Proceedings of the 55th Annual ACM Symposium on Theory of Computing},
  pages = {103--116},
  publisher = {Association for Computing Machinery},
  year = {2023},
  doi = {10.1145/3564246.3585207}
}

@inproceedings{AnariLiuOveisGharanVinzantVuong2021,
  author = {Anari, Nima and Liu, Kuikui and Oveis Gharan, Shayan and Vinzant, Cynthia and Vuong, Thuy-Duong},
  title = {Log-Concave Polynomials {IV}: Approximate Exchange, Tight Mixing Times, and Near-Optimal Sampling of Forests},
  booktitle = {Proceedings of the 53rd Annual ACM SIGACT Symposium on Theory of Computing},
  pages = {408--420},
  publisher = {Association for Computing Machinery},
  year = {2021},
  doi = {10.1145/3406325.3451091},
  eprint = {2004.07220},
  archivePrefix = {arXiv},
  primaryClass = {cs.DS}
}

@inproceedings{ChenKyngLiuPengProbstGutenbergSachdeva2022,
  author = {Chen, Li and Kyng, Rasmus and Liu, Yang P. and Peng, Richard and Probst Gutenberg, Maximilian and Sachdeva, Sushant},
  title = {Maximum Flow and Minimum-Cost Flow in Almost-Linear Time},
  booktitle = {Proceedings of the 63rd Annual IEEE Symposium on Foundations of Computer Science},
  pages = {612--623},
  publisher = {IEEE},
  year = {2022},
  doi = {10.1109/FOCS54457.2022.00064},
  eprint = {2203.00671},
  archivePrefix = {arXiv},
  primaryClass = {cs.DS}
}

@article{MaassenUffink1988,
  author = {Maassen, Hans and Uffink, J. B. M.},
  title = {Generalized Entropic Uncertainty Relations},
  journal = {Physical Review Letters},
  volume = {60},
  number = {12},
  pages = {1103--1106},
  year = {1988},
  doi = {10.1103/PhysRevLett.60.1103}
}

@article{Rothaus1981,
  author = {Rothaus, O. S.},
  title = {Logarithmic {Sobolev} Inequalities and the Spectrum of {Schr{\"o}dinger} Operators},
  journal = {Journal of Functional Analysis},
  volume = {42},
  number = {1},
  pages = {110--120},
  year = {1981},
  doi = {10.1016/0022-1236(81)90050-1}
}

@article{DiaconisSaloffCoste1996,
  author = {Diaconis, Persi and Saloff-Coste, Laurent},
  title = {Logarithmic {Sobolev} Inequalities for Finite {Markov} Chains},
  journal = {Annals of Applied Probability},
  volume = {6},
  number = {3},
  pages = {695--750},
  year = {1996},
  doi = {10.1214/aoap/1034968224}
}

@book{LevinPeresWilmer2017,
  author = {Levin, David A. and Peres, Yuval and Wilmer, Elizabeth L.},
  title = {Markov Chains and Mixing Times},
  edition = {2},
  publisher = {American Mathematical Society},
  year = {2017},
  isbn = {978-1-4704-2962-1}
}

@book{HornJohnson2012,
  author = {Horn, Roger A. and Johnson, Charles R.},
  title = {Matrix Analysis},
  edition = {2},
  publisher = {Cambridge University Press},
  year = {2012},
  doi = {10.1017/CBO9781139020411}
}

@book{Berezin1966,
  author = {Berezin, F. A.},
  title = {The Method of Second Quantization},
  publisher = {Academic Press},
  year = {1966}
}

@article{Bravyi2005,
  author = {Bravyi, Sergey},
  title = {Lagrangian Representation for Fermionic Linear Optics},
  journal = {Quantum Information and Computation},
  volume = {5},
  number = {3},
  pages = {216--238},
  year = {2005},
  eprint = {quant-ph/0404180},
  archivePrefix = {arXiv}
}

@article{Knuth1996,
  author = {Knuth, Donald E.},
  title = {Overlapping {Pfaffians}},
  journal = {The Electronic Journal of Combinatorics},
  volume = {3},
  number = {2},
  pages = {R5},
  year = {1996},
  doi = {10.37236/1263},
  eprint = {math/9503234},
  archivePrefix = {arXiv}
}

@article{FordFulkerson1956,
  author = {Ford, L. R. and Fulkerson, D. R.},
  title = {Maximal Flow Through a Network},
  journal = {Canadian Journal of Mathematics},
  volume = {8},
  pages = {399--404},
  year = {1956},
  doi = {10.4153/CJM-1956-045-5}
}

@article{SeidelAragon1996,
  author = {Seidel, Raimund and Aragon, Cecilia R.},
  title = {Randomized Search Trees},
  journal = {Algorithmica},
  volume = {16},
  number = {4--5},
  pages = {464--497},
  year = {1996},
  doi = {10.1007/BF01940876}
}

@article{BorceaBrandenLiggett2009,
  author = {Borcea, Julius and Br{\"a}nd{\'e}n, Petter and Liggett, Thomas M.},
  title = {Negative Dependence and the Geometry of Polynomials},
  journal = {Journal of the American Mathematical Society},
  volume = {22},
  number = {2},
  pages = {521--567},
  year = {2009},
  doi = {10.1090/S0894-0347-08-00618-8}
}

@article{PemantlePeres2014,
  author = {Pemantle, Robin and Peres, Yuval},
  title = {Concentration of {Lipschitz} Functionals of Determinantal and Other Strong {Rayleigh} Measures},
  journal = {Combinatorics, Probability and Computing},
  volume = {23},
  number = {1},
  pages = {140--160},
  year = {2014},
  doi = {10.1017/S0963548313000345},
  eprint = {1108.0687},
  archivePrefix = {arXiv},
  primaryClass = {math.PR}
}

@inproceedings{AnariOveisGharanRezaei2016,
  author = {Anari, Nima and Oveis Gharan, Shayan and Rezaei, Alireza},
  title = {Monte Carlo {Markov} Chain Algorithms for Sampling Strongly {Rayleigh} Distributions and Determinantal Point Processes},
  booktitle = {Proceedings of the 29th Annual Conference on Learning Theory},
  series = {Proceedings of Machine Learning Research},
  volume = {49},
  pages = {103--115},
  publisher = {PMLR},
  year = {2016},
  eprint = {1602.05242},
  archivePrefix = {arXiv},
  primaryClass = {cs.DS},
  url = {https://proceedings.mlr.press/v49/anari16.html}
}

@inproceedings{AnariAlimohammadiShiragurVuong2021,
  author = {Alimohammadi, Yeganeh and Anari, Nima and Shiragur, Kirankumar and Vuong, Thuy-Duong},
  title = {Fractionally Log-Concave and Sector-Stable Polynomials: Counting Planar Matchings and More},
  booktitle = {Proceedings of the 53rd Annual ACM SIGACT Symposium on Theory of Computing},
  pages = {433--446},
  publisher = {Association for Computing Machinery},
  year = {2021},
  doi = {10.1145/3406325.3451123},
  eprint = {2102.02708},
  archivePrefix = {arXiv},
  primaryClass = {cs.DS}
}

@inproceedings{ChenLiuVigoda2021,
  author = {Chen, Zongchen and Liu, Kuikui and Vigoda, Eric},
  title = {Spectral Independence via Stability and Applications to Holant-Type Problems},
  booktitle = {Proceedings of the 62nd Annual IEEE Symposium on Foundations of Computer Science},
  pages = {149--160},
  year = {2021},
  doi = {10.1109/FOCS52979.2021.00023},
  eprint = {2106.03366},
  archivePrefix = {arXiv},
  primaryClass = {cs.DS}
}

@article{ChenFengYinZhang2024,
  author = {Chen, Xiaoyu and Feng, Weiming and Yin, Yitong and Zhang, Xinyuan},
  title = {Rapid Mixing of {Glauber} Dynamics via Spectral Independence for All Degrees},
  journal = {SIAM Journal on Computing},
  volume = {53},
  number = {4},
  pages = {FOCS21-224--FOCS21-298},
  year = {2024},
  doi = {10.1137/22M1474734}
}
\end{document}